\let\csname equation*\endcsname\relax
\let\csname endequation*\endcsname\relax
\numberwithin{equation}{section}
\numberwithin{figure}{section}
\newcommand\tabcaption{\def\@captype{table}\caption}
\newtheorem{thm}{Theorem}[section]
\newtheorem{cor}[thm]{Corollary}
\newtheorem{lem}[thm]{Lemma}
\theoremstyle{remark}
\newtheorem{rem}[thm]{Remark}
\newtheorem{example}[thm]{Example}
\newcommand{\ran}{\mathrm{ran}}
\newcommand{\im}{\mathrm{im}}
\newcommand{\tr}{\mathrm{trace}}
\newcommand{\ke}{\mathrm{ke}}
\newcommand{\diag}{\mathrm{diag}}
\newcommand{\cT}{\mathcal T}
\newcommand{\argmin}{\mathrm{argmin}}
\newcommand{\argmax}{\mathrm{argmax}}
\renewcommand{\d}{\,\mathrm{d}}
\newcommand{\R}{\mathbb R}
\newcommand{\eps}{\varepsilon}
\newcommand{\E}{\mathbb E}
\title{Mathematical description of continuous time and space replicator-mutator equations for quadratic fitness landscapes}
\date{December 2024}
\author[1]{Sahani Pathiraja}
\author[2]{Philipp Wacker}
\affil[1]{School of Mathematics \& Statistics, UNSW Sydney, Australia, \url{https://orcid.org/0000-0002-0114-3164}}
\affil[2]{School of Mathematics and Statistics, University of Canterbury, New Zealand, \url{https://orcid.org/0000-0001-8718-4313}}
\begin{document}

	\maketitle

 \begin{abstract}
    The replicator-mutator equation is a model for populations of individuals carrying different traits, with a fitness function mediating their ability to replicate, and a stochastic model for mutation. We derive analytical solutions for the replicator-mutator equation in continuous time and for continuous traits for a quadratic fitness function. Using these results we can explain and quantify (without the need for numerical in-silico simulations) a series of evolutionary phenomena, in particular the flying kite effect, survival of the flattest, and the ability of a population to sustain itself while tracking an optimal feature which may be fixed, moving with bounded velocity in trait space, oscillating, or randomly fluctuating. 

    MSC classification: 37N25, 92D25, 34A05, 49N05, 91A22, 60G35
 \end{abstract}
	\tableofcontents

\section{Introduction}
The dynamics of replication (``Who gets to have offspring?'') and mutation (``How different is offspring from their parents?'') are central to evolutionary theory, and to mathematical models describing evolution. The replicator-mutator equation is a foundational model showcasing much of the complexity we can see in nature. Its replication component describes how a certain fitness function mediates selection on traits, and its mutation component mimics imperfect reproduction. This leads to a rich interplay of competing effects: Replication-based adaptation to a (possibly non-constant) optimal trait, and mutation-based diffusion (possibly) away from this optimum. 

The literature in (mathematical) evolution is rich with the description of interesting, nontrivial, and often counterintuitive behaviour of evolutionary systems. An example for this would be the so-called \textit{Flying Kite Effect} \cite{jones2004evolution,kopp2018phenotypic}: A population of individuals/traits often does not adapt to an optimal trait along the shortest Euclidean connection in trait space, but curves away from the shortest connecting path, preconditioned by the population's (empirical) covariance. Or, the \textit{Survival of the Flattest} phenomenon \cite{wilke2001evolution}: Effects like this are often observed from microscopic models, i.e., models where individuals are simulated individually, with methods like adaptive walks \cite{Matuszewski2014} or cellular automata \cite{sardanyes2008simple}. The downside of these microscopic models is that they are not entirely trivial to set up, computationally intensive, and (because they are simulations) can only give examples for specific parameter settings rather than a general theory.

In this manuscript we study and illustrate these interesting phenomena with a relatively simple replicator-mutator equation (with Gaussian mutation and a quadratic fitness function). For this setup, an exact analytical solution (for arbitrary parameters) can often be derived, although this requires mathematical techniques from control theory, where the replicator-mutator equation is studied in the context of Riccati equations. 

The structure of this article is as follows: Section \ref{sec:evomodels} considers a few variants of the replicator-mutator equation relevant in our setting, fixes notation, and gives an (incomplete) literature review of the field. Section \ref{sec:explicit} is the mathematical heart of this article, with explicit solutions for the replicator-mutator equation being derived in various settings, and (to the best of our knowledge) more generally than previously done. Another novelty of this manuscript is that we keep track of total population size: Standard treatments of the replicator-mutator equation model a probability distribution of traits which necessarily keeps total mass constant to $1$, but we consciously allow the total mass of the population to change in time (as essentially driven by the mean fitness of the population), and we can derive exact solutions for this total population mass in many cases, too. Section \ref{sec:examples} employs these results to illustrate the \textit{Flying Kite effect}, \textit{Survival of the Flattest}, adaptation to a fixed optimum, as well as adaptation to optima moving with bounded speed, oscillating motion, or with random fluctuations. We work with analytic solutions, so in many cases we can give exact statements about what parameter settings lead to certain behaviour, like ``What is the maximum mutation strength such that a population does not go extinct while trying to adapt?''. 

We hope this manuscript can help illustrate a series of interesting phenomena in mathematical models of evolution with more elementary arguments using closed-form solutions instead of numerical simulations. Put succinctly, our contributions are as follows:
\begin{enumerate}
    \item Derivation of exact solutions for the replicator-mutator equation in continuous time and for a continuous trait space, with a multivariate Gaussian initial trait distribution, and subject to a general quadratic fitness function, driven by multivariate Gaussian mutation. $\to$ see Section \ref{sec:explicit}. To the best of our knowledge, that has not yet appeared in the literature. 
    \item Quantitative description of total population growth or decay over time from the unnormalised quadratic replicator-mutator equation.
    \item We point out that the replicator-mutator equation is very closely related to Tikhonov-regularised optimisation, as well as to (mis-specified) filtering via the Kalman-Bucy filter. $\to$ see Section \ref{sec:explicit}
    \item Mathematical verification of evolutionary phenomena (Flying Kite effect; Survival of the flattest; Adaptation to fixed, moving, periodic, and fluctuating optima) previously described mostly via numerical experiments. This allows us to understand in rigorous mathematical terms how these phenomena arise, and we can quantify objects like the concrete fixed lag of a population, or the asymptotic growth rate of a population in explicit terms rather than relying on numerical simulations. The benefit of having exact mathematical expressions is that this allows us to quantify the direct effect of all parameters involved. For example, if mutation strength increases by a certain factor, what does this mean for the exponential growth rate of the population? $\to$ see Section \ref{sec:examples}
\end{enumerate}

We consider a \textit{trait space }$X$, which (with exception of section \ref{sec:evomodels}) is always assumed to be Euclidean space $X = \R^d$, with $d$ being the dimension of the trait space. The \textit{feature map} $A: X\to Y$ is understood as a linear map from this trait space into a \textit{feature space} $Y = \R^k$.  The role of this map is to translate a trait (or genotype) into a feature (or phenotype) which is more directly coupled to fitness. In the following we will often consider an \textit{optimal feature} $y(t)$ mediating fitness via (essentially) squared deviation from this feature, i.e., $-\|Ax - y(t)\|_\Gamma^2$ being a measure of fitness, punishing or rewarding a trait $x$ depending on its $\Gamma$-preconditioned quadratic distance from $y(t)$. For further details on mathematical notation and terms used in this manuscript, see the appendix.

\section{Some mathematical models of evolution based on replication and mutation} \label{sec:evomodels}

The replicator-mutator equation is a broad class of dynamical systems modelling the response of a distribution of traits to evolutionary adaptation to an external fitness landscape. 
Early research on the replicator-mutator equation started with \cite{crowkimura,kimura_stochastic_1965}, which is the reason for its alternative name, Crow-Kimura equation, and consequent work in \cite{akin1979geometry,schuster1983replicator,stadler1994immune} and many others, in the beginning focussing mostly on the discrete trait case, but then generalised to the continuous setting in works like \cite{cressman2006stability,gil_mathematical_2017,vlasic_markovian_2020}. Dynamical systems and game theory communities have studied and classified its behaviour, see for example \cite{bomze1990dynamical,oechssler2001evolutionary,cressman2014replicator,garay2018ess} and works like \cite{baez2016relative,baez2021fundamental,Harper2009,harper2009replicator,fujiwara1995gradient,Chalub2021} developed interesting connections to (information) geometry, in particular Fisher information and the Shashahani metric. Models from the mathematical evolution, biology, and ecology literature seem to have a preference for discrete time, as in \cite{kopp_rapid_2014,kopp2018phenotypic,Matuszewski2014}, possibly because it mimics generational reproduction more naturally. 

Below we describe a few variants of the replicator(-mutator) equation and how they relate to one another. These variants are all 16 combinations of the following 4 design choices:
\begin{itemize}
    \item Does time evolve in discrete time steps or continuously?
    \item Is there a discrete set of possible traits, or can we model the trait space as Euclidean space?
    \item Do we consider pure replication or additional mutation?
    \item Are we tracking the total mass/individual count of the population, or are we only interested in relative frequencies of traits to one another?
\end{itemize}
In the end we will fix a model to work with for the rest of this article (a continuous-time, continuous-trait replicator-mutator equation), and we mention a few models that we will not cover (but which are similar and also interesting directions for future research).

\subsection{Replicator equations}
We start with pure replication, considering mutation in a later section. The replicator equation which is possibly easiest to motivate is the discrete-in-time, discrete-trait replicator equation which comes in the following two forms:

\paragraph{Discrete-time discrete-trait (d-d) replicator equation}
\begin{align}
   \label{unnormddre} q_i(t_{n+1}) &= q_i(t_n) \cdot \rho_i(t_n, q) &&\text{(unnormalised)}\\
    \label{normddre}p_i(t_{n+1}) &= p_i(t_n) \cdot \frac{\rho_i(t_n, q)}{\bar \rho(t_n, q)} &&\text{(normalised)}
\end{align}

The unnormalised form of the d-d replicator equation models a distribution of traits $i$ in discrete time steps, where trait $i$ has a prevalence $q_i(t_n)$ at time $t_n$, and replicates with a factor of $\rho_i$, possibly depending on both time $t_n$ and the current distribution $q = \{q_i\}_i$. The normalised form works in the same way, but conserves total mass, so starting with a probability distribution $\{p_i(t_0)\}_i$ ensures that $\{p_i(t_n)\}_i$ is also a discrete probability distribution for all time steps, with $p_i = q_i / \sum_j q_j$ and $\bar \rho(t_n, q) = \sum_i p_i(t_n)\rho_i(t_n)$.

First work on these models in the context of evolution was done by \cite{crowkimura} (and this equation is also called the Crow-Kimura equation), followed up by \cite{akin1979geometry,schuster1983replicator}, and many others. 

Clearly \eqref{normddre} can be derived from \eqref{unnormddre} by seeing that
\begin{align*}
 p_i(t_{n+1}) &= \frac{q_i(t_{n+1})}{\sum_j q_j(t_{n+1})} = \frac{ q_i(t_n) \cdot \rho_i(t_n, q)}{\sum_j  q_j(t_n) \cdot \rho_j(t_n, q)} = \frac{ \frac{q_i(t_n)}{\sum_j  q_j(t_n)} \cdot \rho_i(t_n, q)}{\sum_j  \frac{q_j(t_n)}{\sum_k  q_k(t_n)} \cdot \rho_j(t_n, q)} = \frac{p_i(t_n)\cdot \rho_i(t_n, q)}{\sum_j p_j(t_n)\rho_j(t_n, q)}.
\end{align*}

\begin{rem}
    The normalised d-d replicator equation \eqref{normddre} is not self-contained in its full generality: Note that the density-dependent term $\rho_i$ could in principle depend on the unnormalised distribution $q=\{q_j\}_j$ (rather than just on relative frequencies $p=\{p_j\}_j$). For that reason, in contexts where the normalised version is studied, the replication term is usually assumed to just depend on relative frequencies, so that
    \[p_i(t_{n+1}) = p_i(t_n) \cdot \frac{\rho_i(t_n, p)}{\bar \rho(t_n, p)},\]
    which is a self-contained equation without reference to the (possibly unknown unnormalised distribution $q$), but restricts to models where replication cannot depend on total population size (which is relevant for environments with a finite carrying capacity). This is true for all variants of the replicator(-mutator) equation further below. Since this point does not bother us too much, we will hide frequency-dependence behind plain time-dependence, i.e., a replication term $\pi_i(t)$ will be allowed to depend on the full distribution at time $t$.
\end{rem}

By letting the time steps $t_{n+1}-t_n$ go to $0$, we obtain a model with comparable dynamical properties, in continuous time:

\paragraph{Continuous-time discrete-trait (c-d) replicator equation}
The following corresponds to a coupled system of $d$ ODEs where $d$ denotes the number of traits (e.g., fixed number of gene sequences of a certain length)  
\begin{align}
    \label{unnormcdre}\dot q_i(t) &= q_i(t) \cdot \pi_i(t)&&\text{(unnormalised)}\\    
    \label{normcdre}\dot p_i(t) &= p_i(t) \cdot (\pi_i(t) - \bar \pi(t)) &&\text{(normalised)}
\end{align}

Here, $p_i = q_i / \sum_j q_j$, $\bar\pi(t) = \sum_i p_i(t)\pi_i(t)$, and  \eqref{normcdre} can be derived from \eqref{unnormcdre} by seeing that
\begin{align*}
    \dot p_i(t) &= \frac{\d }{\d t}\left(\frac{q_i(t)}{\sum_j q_j(t)} \right) = \frac{\dot q_i(t) \sum_j q_j(t) - q_i(t) \sum_j \dot q_j(t)}{\left(\sum_j q_j(t)\right)^2} = \frac{\dot q_i(t)}{\sum_j q_j(t)} - \frac{q_i(t)}{\sum_j q_j(t)}\cdot \frac{\sum_j \dot q_j(t)}{\sum_j q_j(t)}\\
    &= p_i(t) \pi_i(t) - p_i(t) \cdot \frac{\sum_j q_j(t) \cdot \pi_j(t)}{\sum_k q_k(t)} = p_i(t)\left( \pi_i(t) - \sum_j p_j(t) \cdot \pi_j(t)\right)
\end{align*}

The d-d and c-d replicator equation are different from each other for non-vanishing time steps, but there is a canonical translation between the formulations in the fine time discretisation limit, which is the content of the next lemma.
\begin{lem}[Connection between the d-d and c-d replicator equation]\label{lem:conn_rep}
    The d-d and the c-d replicator equation are asymptotically equivalent for $t_n = nh$, and if we set $\rho_i(t_n) = e^{h\pi_i(t_n)}$. This is to be understood in the sense that $q_i^{cd}(t_n) \simeq q_i^{dd}(t_n)$ for $h\to 0$ and $q^{cd}$, $q^{dd}$ are the c-c and c-d variants, respectively (outside of the context of this lemma we are not going to distinguish this notationally).
\end{lem}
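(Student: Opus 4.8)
The plan is to work entirely with the unnormalised equations \eqref{unnormcdre} and \eqref{unnormddre}; the normalised statement \eqref{normddre}--\eqref{normcdre} then follows either from the normalisation identities already derived above, or by rerunning the argument verbatim with $\pi_i - \bar\pi$ in place of $\pi_i$ and $\rho_i/\bar\rho$ in place of $\rho_i$. The single idea behind the proof is that \eqref{unnormcdre} can be solved in closed form over one time step, and that the resulting one-step update map agrees with the d-d recursion \eqref{unnormddre} up to a higher-order correction in the step size $h$.

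Concretely, I would proceed in three steps. \textbf{Step 1 (exact one-step solution).} On the interval $[t_n,t_{n+1}]$ with $t_{n+1}-t_n = h$, equation \eqref{unnormcdre} is the scalar linear ODE $\dot q_i = \pi_i(t)\,q_i$, whose solution is
\[
  q_i^{cd}(t_{n+1}) = q_i^{cd}(t_n)\,\exp\!\left(\int_{t_n}^{t_{n+1}} \pi_i(s)\d s\right).
\]
\textbf{Step 2 (Taylor expansion).} Assuming $\pi_i \in C^1$ on the time interval of interest, $\int_{t_n}^{t_{n+1}} \pi_i(s)\d s = h\,\pi_i(t_n) + O(h^2)$, hence
\[
  q_i^{cd}(t_{n+1}) = q_i^{cd}(t_n)\, e^{h\pi_i(t_n)}\bigl(1 + O(h^2)\bigr),
\]
which is exactly \eqref{unnormddre} with $\rho_i(t_n) = e^{h\pi_i(t_n)}$, up to a multiplicative local error $1 + O(h^2)$. \textbf{Step 3 (accumulation).} Iterating the comparison over the $n \approx t/h$ steps needed to reach a fixed time $t$, the compounded relative discrepancy is $\prod_{m<n}(1+r_m)$ with $|r_m|\le Ch^2$, and $\bigl|\log\prod_{m<n}(1+r_m)\bigr| \le 2Cnh^2 \le 2Cth \to 0$ as $h\to 0$; this gives the asserted $q_i^{cd}(t_n) \simeq q_i^{dd}(t_n)$.

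The one genuinely delicate point — and the step I expect to be the main obstacle — is that the replication rates are permitted to depend on the current distribution (frequency dependence, folded here into the time argument). Then the $\pi_i$'s in the c-d system and the $\rho_i$'s in the d-d system are evaluated along two distinct trajectories, so the per-step comparison above does not close by itself. I would handle this by assuming $\pi_i$ (equivalently $\rho_i$) is Lipschitz in the state, deriving a recursion of the form $\|q^{cd}(t_{n+1}) - q^{dd}(t_{n+1})\| \le (1+Ch)\,\|q^{cd}(t_n) - q^{dd}(t_n)\| + O(h^2)$ — a Lipschitz-controlled propagation term plus the Step 2 local truncation error — and then invoking the discrete Grönwall inequality to obtain a global error of order $O(h)$ on any bounded time interval. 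Everything else (the one-line ODE solve and the Taylor expansion) is routine.
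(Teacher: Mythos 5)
Your argument is correct, but it is organized differently and is more complete than the paper's. The paper's proof is a brief consistency check: it computes the difference quotient of the d-d recursion,
\[
\frac{q_i^{dd}(t_{n+1})-q_i^{dd}(t_n)}{h} = \frac{e^{h\pi_i(t_n)}-1}{h}\,q_i^{dd}(t_n) \simeq \pi_i(t_n)\,q_i^{dd}(t_n),
\]
observes that this matches the c-d ODE to leading order, and stops there; the accumulation of local error over $O(1/h)$ steps and the frequency-dependence issue are left implicit. Your proof goes the other way around (exact one-step flow of the c-d ODE, Taylor-expand the exponential of the integrated rate, compare to the d-d update), which is a local-truncation-plus-stability numerical-analysis argument rather than a consistency check, and then you explicitly close the loop with the discrete Gr\"onwall inequality. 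You also correctly flag the one real subtlety — that when $\pi_i$ depends on the state, the two one-step maps are evaluated along diverging trajectories — and propose the standard Lipschitz-in-state hypothesis to control the propagation term; the paper hides this entirely inside the convention that frequency dependence is ``folded into the time argument.'' So the routes are genuinely different: the paper's is shorter and informal (a formal limit of difference quotients), yours is longer but yields a quantitative $O(h)$ global error bound on any compact time interval and makes the required regularity assumptions visible. Both reach the same conclusion.
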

\begin{proof} We show this by comparing the difference equation for the d-d equation with the differential equation for the c-d equation in the fine time limit, recognising that they are asymptotically identical:
    \begin{align*}
        \frac{q_i^{dd}(t_{n+1}) - q_i^{dd}(t_n)}{h}&=\frac{\rho_i(t_n) - 1}{h}q_i^{dd}(t_n) = \frac{e^{h\pi_i(t_n)}-1}{h}q_i^{dd}(t_n) \\
        &\simeq \pi_i(t_n)q_i^{dd}(t_n).
    \end{align*}
    The same statement for $p^{dd}$ and $p^{cd}$ then follows from the fact that they can be derived from the expression for $q^{dd}$ and  $q^{cd}$, respectively.
\end{proof}

\begin{rem}\label{rem:wright_malthus}
    The preceding lemma clarifies that a d-d replicator equation of type $ q_i(t_{n+1}) = q_i(t_n) \cdot \rho_i(t_n)$ is \textit{not} comparable to a c-d equation of form $\dot q_i(t) = q_i(t) \rho_i(t)$, but rather that there is an exponential-logarithm translation necessary between these forms. For example, the continuous-time version of the discrete-time dynamics $q(t_{n+1}) = q(t_n) \cdot 1$ is $\dot q(t) = q(t) \cdot 0$ since (essentially) $0  = \log 1$. This is related to the ideas of \textit{Malthusian} fitness (corresponding to the replication term $\pi_i$ in the c-d equation) vs. \textit{Wrightian} fitness (corresponding to the replication term $\rho_i$ in the d-d equation), see \cite{wu2013interpretations} for a more elaborate discussion.
\end{rem}

Generalisation to continuous trait spaces is done simply by replacing the index $i$ by a continuous label $x$:

\paragraph{Discrete-time continuous-trait (d-c) replicator equation}
The following describes the distribution of values of a single trait $x \in \mathbb{R}$ or $x \in \mathbb{R}^+$ or a set of $d$ traits taking on a continuum of values $x \in \mathbb{R}^d$ (e.g. neck length).  
\begin{align}
    \label{unnormdcre}q(t_{n+1}, x) &= q(t_n, x) \cdot \rho(t_n, x)&&\text{(unnormalised)}\\    
    \label{normdcre}p(t_{n+1}, x) &= p(t_n, x) \cdot \frac{\rho(t_n, x)}{\bar \rho(t_n)} &&\text{(normalised)}
\end{align}
Here, $p(t_n,x) = \frac{q(t_n,x)}{\int q(t_n,z)\d z}$, $\bar \rho(t_n) = \int \rho(t_n,x)p(t_n,x)\d x$, and \eqref{normdcre} can be derived from \eqref{unnormdcre} in a similar way as shown above for the d-d replicator equation. The meaning of $q(t,\cdot)$ is that of a mass density of traits in a population, i.e., $\int_A q(t,x)\d x$ is the total mass/number of traits $x\in A$ at time $t$. The distribution $p$ is normalised to $1$ at all times, and thus corresponds to a probability density (given sufficient regularity that the density does not collapse to something not absolutely continuous). Again, a continuous-time limit is easy to derive:

\paragraph{Continuous-time continuous-trait (c-c) replicator equation} 
\begin{align}
    \label{eq:ccre_unnorm} \partial_t q(t,x) &= q(t,x)  \pi_{q_t}(x)&&\text{(unnormalised)}\\  
    \label{eq:ccre_norm}\partial_t p(t,x) &= p(t,x) (\pi_{p_t}(x) -  \pi_{p_t,p_t})  &&\text{(normalised)}
\end{align}
Here, $p(t,x) = \frac{q(t,x)}{\int q(t,z)\d z}$, $\pi_{q_t}(x) = \frac{\int f(x,z) q(t,z) \d z}{\int q(t,z)\d z}$ = \textit{frequency dependent fitness} of the trait vector $x \in \mathbb{R}^d$ which depends on the so-called \textit{non-local fitness landscape}, $f(x,z): \mathbb{R}^d \times \mathbb{R}^d \rightarrow \mathbb{R}$, and $\pi_{q_t,q_t} = \int \pi_{q_t}(z)q(t,z)\d z / \int q(t,z)\d z = \iint f(x,z)q(t,x)q(t,z)\d(x,z)/(\int q(t,z)\d z )^2$. Again, the normalised form \eqref{eq:ccre_norm} can be obtained from the unnormalised form \eqref{eq:ccre_unnorm} similar to above.

\begin{lem}[Connection between the d-c and c-c replicator equation]
    In the small time-step limit $h\to 0$, where $t_n = n\cdot h$, the d-d and the c-d replicator equation are equivalent if $\rho(t_n,x) = e^{h\pi_{q_{t_n}}(x)}$
\end{lem}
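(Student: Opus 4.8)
The plan is to imitate the proof of Lemma~\ref{lem:conn_rep} almost verbatim, trading the discrete index $i$ for the continuous label $x$ and the scalar replication factors for the frequency-dependent fitnesses $\rho(t_n,x)$ and $\pi_{q_t}(x)$. (As in Lemma~\ref{lem:conn_rep}, the statement is really about the d-c and c-c equations, since $\rho(t_n,x)$ carries a trait argument.) First I would fix $x\in\R^d$, a finite horizon $T$, and $t_n=nh$, and compare the forward difference quotient of the unnormalised d-c dynamics~\eqref{unnormdcre} with the right-hand side of the unnormalised c-c equation~\eqref{eq:ccre_unnorm}. With the identification $\rho(t_n,x)=e^{h\pi_{q_{t_n}}(x)}$ this is the one-line computation
\begin{align*}
\frac{q^{dc}(t_{n+1},x)-q^{dc}(t_n,x)}{h} &= \frac{\rho(t_n,x)-1}{h}\,q^{dc}(t_n,x) = \frac{e^{h\pi_{q_{t_n}}(x)}-1}{h}\,q^{dc}(t_n,x)\\ &\simeq \pi_{q_{t_n}}(x)\,q^{dc}(t_n,x),
\end{align*}
using $\tfrac{e^{ha}-1}{h}=a+O(h)$ as $h\to0$. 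The right-hand side is exactly the explicit Euler increment for~\eqref{eq:ccre_unnorm}, so the piecewise-constant (or interpolated) d-c trajectory is a consistent time discretisation of the c-c flow and $q^{dc}(t_n,x)\simeq q^{cc}(t_n,x)$ as $h\to0$.

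The normalised claim then follows with no extra work: as recorded after~\eqref{normdcre} and~\eqref{eq:ccre_norm}, both $p^{dc}$ and $p^{cc}$ arise from $q^{dc}$ and $q^{cc}$ by dividing through by the total mass $\int q(t,z)\d z$, and this operation is continuous as long as the total mass stays bounded away from $0$ on $[0,T]$; hence convergence of the unnormalised densities passes to the normalised ones.

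The only genuine obstacle — beyond the displayed one-liner — is that, in contrast to the purely index-based setting, $\pi_{q_t}(x)$ depends on the \emph{whole} current distribution through the non-local landscape $f$, so the discrete recursion is nonlinear and non-autonomous and the bare ``$\simeq$'' must be promoted to an honest convergence statement. I would deal with this by assuming enough regularity on $f$ — boundedness of $f$ together with Lipschitz dependence of $q\mapsto\pi_q(\cdot)$ in a suitable (weighted) norm — to guarantee that (i) both equations are well-posed on $[0,T]$ with total mass bounded above and below, and (ii) a discrete Grönwall estimate controls the accumulation of the $O(h)$ local truncation error over the $\lfloor T/h\rfloor$ steps, yielding a global error $O(h)\to0$. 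For the quadratic fitness landscapes with Gaussian ansatz that this paper is concerned with these hypotheses hold automatically, so the asymptotic equivalence is not merely heuristic there.
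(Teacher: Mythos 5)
Your proof matches the paper's, which simply says the statement "is proven in the same way as Lemma~\ref{lem:conn_rep}": your displayed difference-quotient computation is exactly that argument transplanted from the discrete index $i$ to the continuous label $x$. The extra remarks you add — that the normalised version follows by dividing by the (bounded-away-from-zero) total mass, and that the nonlocal dependence of $\pi_{q_t}$ on the full distribution means the informal ``$\simeq$'' would need a Grönwall-type stability estimate to become a rigorous convergence statement — are correct observations that the paper leaves implicit, but they do not change the route; they just make explicit what the one-line appeal to Lemma~\ref{lem:conn_rep} is sweeping under the rug.
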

\begin{proof}
    This is proven in the same way as lemma \ref{lem:conn_rep}.
\end{proof}

\subsection{Replicator-mutator equations}

Now we consider additional mutation at time of replication, i.e., the offspring of an individual with trait $i$ is not necessarily also carrying trait $i$, but mutation to a different trait $j$ is allowed to occur, with this mutation being modelled by a probability distribution. It is most straightforward to start with the discrete-time case, and think of mutation as an additional perturbation step following replication, governed by a Markov transition matrix. This means, a (not necessarily normalised) distribution of traits $i$ at time $n$ given by $q_i(n)$ is first evolved via replication in the same way as with the replicator equation, and the resulting offspring population is mutated by a Markov transition matrix $T$, with $T_{ji}$ being the probability for a mutation from trait $j$ to trait $i$. This yields the unnormalised replicator-mutator equation \eqref{unnormddrme}. Note that $T$ is a stochastic matrix, i.e. $\sum_i T_{ji} = 1$ for all $j$. \footnote{In principle, the transition matrix could be allowed to depend on time, and the current distribution of traits, but we are not interested in this level of generality.}

\paragraph{Discrete-time discrete-trait (d-d) replicator-mutator equation}

\begin{align}
\label{unnormddrme} q_i(t_{n+1}) &= \sum_j q_j(t_n) \rho_j(t_n) T_{ji}&&\text{(unnormalised)}\\  
\label{normddrme}    p_i(t_{n+1}) &= \frac{\sum_j p_j(t_n) \rho_j(t_n) T_{ji}}{\bar \rho(t_n)} &&\text{(normalised)}
\end{align}

The normalised form is also sometimes called the Moran-Kingman model, see \cite{cerf_quasispecies_2022}. For a specific choice of fitness and mutation, this becomes the discrete Moran process \cite{moran1958random}.

Here, $p_i(t_n) = \frac{q_i(t_n)}{\sum_k q_k(t_n)}$, $\bar \rho(t_n) = \sum_i p_i(t_n)\rho_i(t_n)$, and \eqref{normddrme} can be derived from \eqref{unnormddrme} as follows:
\begin{align*}
    p_i(t_{n+1}) &= \frac{q_i(t_{n+1})}{\sum_k q_k(t_{n+1})} = \frac{\sum_j q_j(t_n)\rho_j(t_n) T_{ji}}{\sum_{j,k} q_j(t_n)\rho_j(t_n) T_{jk}} = \frac{\sum_j \frac{q_j(t_n)}{\sum_m q_m(t_n)}\rho_j(t_n) T_{ji}}{\sum_{j} \frac{q_j(t_n)}{\sum_m q_m(t_n)}\rho_j(t_n) }
\end{align*}

By letting the time steps in the d-d replicator mutator equation $t_{n+1}-t_n$ become arbitrarily small, we obtain the following:

\paragraph{Continuous-time discrete-trait (c-d) replicator-mutator equation}
In the time-continuous setting, the replicator-mutator equation becomes the following expression, which features a matrix $G$ which is the generator of a Markov transition matrix. In particular, $\sum_i G_{ji} = 0$, which means that the mutation procedure does not create additional mass, but rather distributes $q$ in a certain way. The connection between $G$ and $T$ (as well as between $\rho$ and $\pi$) is fleshed out in more detail in lemma \ref{lem:conn_rep_mut} below.
\begin{align}
   \label{unnormcdrme} \dot q_i(t) &= q_i(t)\cdot \pi_i(t) + \sum_j q_j(t) G_{ji}&&\text{(unnormalised)}\\  
\label{normcdrme}    \dot p_i(t) &= p_i(t)\cdot (\pi_i(t) - \bar\pi(t)) + \sum_j p_j(t) G_{ji} &&\text{(normalised)} 
\end{align}
 Here, $p_i = q_i / \sum_j q_j$, $\bar\pi(t) = \sum_i p_i(t)\pi_i(t)$, and  \eqref{normcdre} can be derived from \eqref{unnormcdre} by seeing that
\begin{align*}
    \dot p_i(t) &= \frac{\d }{\d t}\left(\frac{q_i(t)}{\sum_j q_j(t)} \right) = \frac{\dot q_i(t) \sum_j q_j(t) - q_i(t) \sum_j \dot q_j(t)}{\left(\sum_j q_j(t)\right)^2} = \frac{\dot q_i(t)}{\sum_j q_j(t)} - \frac{q_i(t)}{\sum_j q_j(t)}\cdot \frac{\sum_j \dot q_j(t)}{\sum_j q_j(t)}\\
    &= p_i(t) \pi_i(t) + \sum_j p_j(t) G_{ji}- p_i(t) \cdot \frac{\sum_j \left(q_j(t) \cdot \pi_j(t) + \sum_k q_k(t)G_{k,j}\right)}{\sum_k q_k(t)} \\
    &= p_i(t)\left( \pi_i(t) - \sum_j p_j(t) \cdot \pi_j(t)\right) + \sum_j p_j(t) G_{ji},
\end{align*}
where in the last equality we used that $\sum_j G_{kj} = 0$. The normalised c-d replicator-mutator equation is also called the Crow-Kimura equation \cite{crowkimura}. The fitness function $\pi_i(t)$ is sometimes taken to be a linear map of prevalent traits, i.e., $\pi_i(t) = \sum_j A_{ij}q_j(t)$, as in \cite{adams2007analysis,morsky2017homophilic}, or even more simply (in the two-species case) just linearly dependent on the other species, as in \cite{cooney2022long}.

The d-d and c-d replicator-mutator equation are different from each other for non-vanishing time steps (just like the pure replicator equation), but there is a canonical translation between the formulations in the fine time discretisation limit.
\begin{lem}[Connection between the d-d and c-d replicator-mutator equation]\label{lem:conn_rep_mut}
    In the small time-step limit $h\to 0$, where $t_n = n\cdot h$, the d-d and the c-d replicator-mutator equation are equivalent if $\rho_i(t_n) = e^{h\pi_i(t_n)}$ and $T = e^{h G}$, i.e., $G$ is the generator of the Markov transition matrix $T$. This is to be understood in the sense that $q_i^{cd}(t_n) = q_i^{dd}(t_n)$ and $q^{cd}$, $q^{dd}$ are the c-c and c-d variants, respectively.
\end{lem}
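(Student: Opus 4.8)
The plan is to mirror the proof of Lemma \ref{lem:conn_rep}: start from the discrete update \eqref{unnormddrme}, substitute the proposed correspondences $\rho_j(t_n) = e^{h\pi_j(t_n)}$ and $T = e^{hG}$, expand everything to first order in the step size $h$, and recognise the resulting difference quotient as a forward-Euler discretisation of the ODE \eqref{unnormcdrme}. Since the trait set is finite, $G$ is an honest matrix, so $T = e^{hG} = I + hG + O(h^2)$ entrywise, with the remainder controlled by $\|G\|$. Before plugging in, I would note that this is the internally consistent pairing: $G$ being a generator means $G\mathbf 1 = 0$, hence $e^{hG}\mathbf 1 = \mathbf 1$, i.e. $\sum_i T_{ji} = 1$ so $T$ is stochastic (and $T \ge 0$ entrywise for small $h$ when the off-diagonal entries of $G$ are nonnegative), and differentiating $T = e^{hG}$ at $h = 0$ returns exactly $G$. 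Likewise $\rho_j(t_n) = 1 + h\pi_j(t_n) + O(h^2)$.

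First I would combine the two expansions into $\rho_j(t_n) T_{ji} = \delta_{ji} + h\big(\pi_j(t_n)\delta_{ji} + G_{ji}\big) + O(h^2)$, so that
\[
q_i^{dd}(t_{n+1}) = \sum_j q_j^{dd}(t_n)\,\rho_j(t_n)\,T_{ji} = q_i^{dd}(t_n) + h\Big(q_i^{dd}(t_n)\,\pi_i(t_n) + \sum_j q_j^{dd}(t_n)\,G_{ji}\Big) + O(h^2),
\]
and therefore
\[
\frac{q_i^{dd}(t_{n+1}) - q_i^{dd}(t_n)}{h} = q_i^{dd}(t_n)\,\pi_i(t_n) + \sum_j q_j^{dd}(t_n)\,G_{ji} + O(h),
\]
which is precisely a consistent discretisation of \eqref{unnormcdrme}; letting $h \to 0$ along $t_n = nh$ yields $q_i^{dd}(t_n) \simeq q_i^{cd}(t_n)$ in the same asymptotic sense as in Lemma \ref{lem:conn_rep}. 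For the normalised equations \eqref{normddrme} and \eqref{normcdrme} I would not repeat the computation, but simply invoke that $p_i = q_i/\sum_j q_j$ in both settings and that the normalised forms have already been shown to follow from the unnormalised ones, so asymptotic equivalence of $q^{dd}$ and $q^{cd}$ transfers directly.

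The main obstacle — really the only delicate point — is pinning down the meaning of "$\simeq$": the displays above only establish that the local truncation error is $O(h^2)$ per step, and to conclude $q_i^{dd}(t_n) \to q_i^{cd}(t_n)$ on a fixed horizon $[0,t]$ one must ensure these $n = t/h$ errors do not accumulate badly. This is the standard consistency-plus-stability argument for Euler's method and needs only local boundedness and Lipschitz dependence of $\pi$ in its arguments together with $\|G\| < \infty$, both automatic in the finite-trait case; a Gronwall estimate then upgrades the difference-quotient computation to genuine uniform convergence. Since Lemma \ref{lem:conn_rep} is stated and proved at exactly this informal level, I would keep the argument at the level of the difference-quotient identity and merely remark that the Gronwall step makes it rigorous if desired.
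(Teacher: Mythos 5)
Your proof is correct, and it reaches the same conclusion as the paper's but by a different route. The paper vectorises the update as $q(t_{n+1})^\top = q(t_n)^\top R T$ with $R = \diag(\rho_i) = e^{hL}$, $T = e^{hG}$, keeps the product in exponential form, and invokes the Lie product formula $e^{hL}e^{hG} \simeq e^{h(L+G)}$ to identify the generator $L+G$. You instead Taylor-expand each factor coordinate-wise, $\rho_j T_{ji} = \delta_{ji} + h(\pi_j\delta_{ji} + G_{ji}) + O(h^2)$, and read off the generator from the first-order term. These are morally the same calculation — the Lie product formula to first order \emph{is} this Taylor expansion — but your version is more elementary and self-contained, while the paper's version is deliberately phrased via the Lie product formula because that phrasing directly yields the subsequent remark that $e^{hL}e^{hG}$ and $e^{hG}e^{hL}$ have the same continuous limit, i.e., that the order of replication and mutation is immaterial in continuous time. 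Your approach sees the same fact only implicitly (both $LG$ and $GL$ are $O(h^2)$), so if you want to support that remark you would need to say so explicitly. Your added discussion of consistency-plus-stability and Gronwall is a genuine, correct improvement in rigour over both this lemma's and Lemma \ref{lem:conn_rep}'s informal ``$\simeq$'' asymptotics; the paper works purely at the local-truncation-error level throughout. Your inference of the normalised case from the unnormalised one also matches the paper's practice elsewhere.
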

\begin{proof}
We prove this on the vector $(q_i(t_n))_i$ directly, and notationally we set $R = \diag(\rho_i(t_n))_i$ and $L = \diag(\pi_i(t_n))$. Then from \eqref{unnormddrme},
\begin{align*}
    q(t_{n+1})^\top &= q(t_n)^\top\cdot R\cdot T = q(t_n)^\top \cdot \frac{R\cdot T - \mathbf I}{h} = q(t_n)^\top\cdot \frac{e^{hL}e^{hG} - \mathbf I  }{h} = q(t_n)^\top\cdot \frac{\left(\left[e^{hL}e^{hG}\right]^{1/h}  \right)^h - \mathbf I}{h}\\
    &\simeq q(t_n)^\top\cdot \frac{\left( e^{L+G}\right)^h - \mathbf I}{h} = q(t_n)^\top\cdot \frac{e^{h(L+G)} - \mathbf I}{h} \simeq q(t_n)^\top\cdot (L+G),
\end{align*}
where we used the Lie product formula, and the last expression is the vectorised form of \eqref{unnormddrme}.
\end{proof}

\begin{rem}
    Note that for discrete time, it matters whether replication or mutation happens first, i.e. the following version of the discrete-discrete replicator-mutator equation is not equivalent to \eqref{unnormddrme} and could justifiably be called the \textbf{discrete-time discrete-time mutator-replicator\footnote{Note the swapped order of replication and mutation.} equation:}
    \begin{align*}
        q_i(t_{n+1}) &= \left(\sum_j q_j(t_n) T_{ji}\right)\rho_i(t_n)
    \end{align*}
    In the time-continuous limit, though, this difference vanishes. This can be seen by observing that in the proof of lemma \ref{lem:conn_rep_mut}, the Lie product formula produces a term which does not depend on the order of application of $R$ and $T$, since 
    \begin{align*}
        \left[e^{hL}e^{hG}\right]^{1/h} &\simeq e^{L + G} \simeq \left[e^{hG}e^{hL}\right]^{1/h}.
    \end{align*}
    As a result of this fact, the continuous-discrete replicator-mutator equation has just one canonical form \eqref{unnormcdrme}, and there is no separate continuous-discrete mutator-replicator equation. In some sense, this equation exhibits simultaneous replication and mutation by virtue of the fact that these two effects are coupled by a sum rather than a matrix product (which is what happens for the discrete-discrete replicator-mutator equation \eqref{unnormddrme}).
\end{rem}

\begin{rem} The fact that the matrix $G$ in the c-d replicator-mutator equation is a generator matrix, and not a stochastic matrix, is critical. Its entries $G_{ji}$ cannot be given the interpretation of a probability (of mutation from $j$ to $i$), which is is overlooked in several publications. For example, if $G$ was a stochastic matrix, we could not have derived the normalised c-d replicator equation in the form above, and it would not conserve mass. 
    Similarly, the following is \textit{not} a continuous-time replicator-mutator equation (but rather again a type of quasispecies equation):
    \begin{equation}
        {\dot {q_{i}}}=\sum _{j=1}^{n}{q_{j}\pi_{j}(x)G_{ji}}-\bar \pi(t) (x)q_{i},
    \end{equation}
    regardless of whether $G$ is chosen to be a generator or a stochastic matrix. This can lead to some confusion, but the reasoning is similar to what was discussed in remark \ref{rem:wright_malthus}: Going from the discrete-time to the continuous-time setting we need to essentially perform a logarithmic transformation from ``Wrightian fitness and mutation'' to a ``Malthusian fitness and mutation'' (although this is not a standard term), which means that the product between fitness and mutation in \ref{unnormddrme} becomes a sum in \ref{unnormcdrme} (for the same reason that $\log(ab) = \log a + \log b$). This nuance is not always made clear, even in eminent publications in the field like \cite[eqs. (1.2),(1.3)]{Hofbauer1985}.
\end{rem}

The continuous state-space case works similarly:

\paragraph{Discrete-time continuous-trait (d-c) replicator-mutator equation}

The following version which has discrete time steps but a continuous trait space seems to be relatively rare, with only little coverage (exceptions being \cite{burger1994distribution,jones2003stability}), but we describe it for the sake of completeness:

We consider the adjoint semigroup $\cT^\star$ with $(\cT^\star \mu)(\d x) = \int P(y,\d x)\d \mu(y)$ for a transition function $P(\cdot, \cdot)$ of a Markov process. For notational simplicity we will write $\cT^\star \mu(x)$ for the density of $\cT^\star \mu$ in $x$. Then the following two equations describe replication and mutation for discrete time and a continuous trait space:

\begin{align}
    \label{unnormdcrme}q(t_{n+1}, x) &= \cT^\star (q(t_n,\cdot)\rho(t_n,\cdot))(x)&&\text{(unnormalised)}\\
    \label{normdcrme}p(t_{n+1}, x) &= \cT^\star \left(q(t_n,\cdot)\frac{\rho(t_n,\cdot)}{\bar\rho(t_n)}\right)(x)&&\text{(normalised)}
\end{align}
Here, $p(t_n,x) = \frac{q(t_n,x)}{\int q(t_n,z)\d z}$, $\bar \rho(t_n) = \int \rho(t_n,x)p(t_n,x)\d x$, and \eqref{normdcre} can be derived from \eqref{unnormdcre} like this:
\begin{align*}
    p(t_{n+1}, x) &= \frac{q(t_{n+1},x)}{\int q(t_{n+1},w)\d w} = \frac{\int q(t_n,z)\rho(t_n,z)P(z,x)\d z}{\iint q(t_n,z)\rho(t_n,z)P(z,w)\d (w,z)} = \frac{\int q(t_n,z)\rho(t_n,z)P(z,x)\d z}{\int q(t_n,z)\rho(t_n,z)\d z} \\
    &= \frac{\int \frac{q(t_n,z)}{\int q(t_n,w)\d w}\rho(t_n,z)P(z,x)\d z}{\int \frac{q(t_n,z)}{\int q(t_n,w)\d w}\rho(t_n,z)\d z} = \frac{\int p(t_n,z) \rho(t_n,z) P(z,x)}{\int p(t_n,z)\rho(t_n,z)\d z}
\end{align*}

We finally arrive at the model which we are going to use for the rest of the manuscript, which models continuous-time replication and mutation in a continuous trait space. The following is very general, but we will work with a specific form of this replicator-mutator equation, where mutation is driven by Gaussian noise and replication is mediated by a quadratic fitness functional.

\paragraph{Continuous-time continuous-trait (c-c) replicator-mutator equations}
$\mathcal R$ is a generator of a semigroup, i.e. $\mathcal R[\mathbf 1] = 0$ where $\mathbf 1$ is the constant function with value $1$.
\begin{align}
    \partial_t q(t,x) &= q(t,x)\pi_{q_t}(x) + \mathcal Rq(t,x)\\    
    \partial_t p(t,x) &= p(t,x)(\pi_{q_t}(x) - \pi_{q_t,q_t}) + \mathcal Rp(t,x)
\end{align}

\begin{lem}[Connection between the d-c and c-c replicator-mutator equation]\label{lem:conn_crep_mut}
    In the small time-step limit $h\to 0$, where $t_n = n\cdot h$, the d-d and the c-d replicator-mutator equation are equivalent if $\rho_i(t_n) = e^{h\pi_i(t_n)}$ and $\cT^\star = e^{h \mathcal R^\star}$, i.e., $\mathcal R^\star$ is the generator of the semigroup $\cT^\star$.
\end{lem}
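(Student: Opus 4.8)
The plan is to follow exactly the pattern of the proof of Lemma~\ref{lem:conn_rep_mut}, but now at the level of operators acting on densities rather than on finite vectors. I would work with the unnormalised equations \eqref{unnormdcrme} and the unnormalised c-c replicator-mutator equation, since the normalised statements then follow from the quotient-rule computation already displayed above (normalisation is a purely algebraic consequence of the unnormalised dynamics, and $\mathcal R[\mathbf 1]=0$, equivalently $\mathcal R^\star$ being mass-conserving, is precisely what makes the mutation term drop out of the mass balance). Write $M_{t_n}$ for the operator ``multiplication by $\pi_{q_{t_n}}(\cdot)$'', so that ``multiplication by $\rho(t_n,\cdot)=e^{h\pi_{q_{t_n}}(\cdot)}$'' is the operator $e^{hM_{t_n}}$; with $\cT^\star = e^{h\mathcal R^\star}$ the right-hand side of \eqref{unnormdcrme} is $e^{h\mathcal R^\star}e^{hM_{t_n}}q(t_n,\cdot)$.

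Then I would form the difference quotient and expand with the Lie product formula, in complete analogy with Lemma~\ref{lem:conn_rep_mut}:
\begin{align*}
  \frac{q(t_{n+1},x)-q(t_n,x)}{h}
  &= \frac{e^{h\mathcal R^\star}e^{hM_{t_n}} - \mathbf I}{h}\,q(t_n,\cdot)(x)
   = \frac{\big(\big[e^{h\mathcal R^\star}e^{hM_{t_n}}\big]^{1/h}\big)^{h} - \mathbf I}{h}\,q(t_n,\cdot)(x)\\
  &\simeq \frac{\big(e^{\mathcal R^\star + M_{t_n}}\big)^{h} - \mathbf I}{h}\,q(t_n,\cdot)(x)
   = \frac{e^{h(\mathcal R^\star + M_{t_n})} - \mathbf I}{h}\,q(t_n,\cdot)(x)\\
  &\simeq \big(\mathcal R^\star + M_{t_n}\big)q(t_n,\cdot)(x)
   = \mathcal R^\star q(t_n,x) + \pi_{q_{t_n}}(x)\,q(t_n,x),
\end{align*}
which is exactly the right-hand side of the unnormalised c-c replicator-mutator equation at time $t_n$. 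Letting $h\to0$ with $t_n=nh$ fixed identifies the discrete trajectory with the solution of the PDE, and the normalised statement follows as indicated. As in the remark after Lemma~\ref{lem:conn_rep_mut}, the symmetric form $\big[e^{h\mathcal R^\star}e^{hM_{t_n}}\big]^{1/h}\simeq e^{\mathcal R^\star+M_{t_n}}\simeq\big[e^{hM_{t_n}}e^{h\mathcal R^\star}\big]^{1/h}$ shows the order of replication and mutation is immaterial in the limit.

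The main obstacle — and the point where this is genuinely more delicate than the finite-dimensional Lemma~\ref{lem:conn_rep_mut} — is justifying the Lie--Trotter step, i.e.\ $\big[e^{h\mathcal R^\star}e^{hM_{t_n}}\big]^{1/h}\to e^{\mathcal R^\star+M_{t_n}}$, together with the claim that $\mathcal R^\star+M_{t_n}$ actually generates the semigroup solving the c-c equation. Both $\mathcal R^\star$ (the generator of the mutation semigroup, e.g.\ a second-order elliptic operator in the Gaussian case) and $M_{t_n}$ (multiplication by the quadratic $\pi_{q_{t_n}}$) are unbounded, so the elementary estimate used in the bounded case must be replaced by a common-core argument plus the Trotter--Kato / Chernoff product formula. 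For the setting of this paper — Gaussian mutation and quadratic fitness — this is classical: $\mathcal R^\star+M_{t_n}$ is a Schr\"odinger-type operator with at most quadratic potential, the product formula holds on a natural core (Schwartz functions, or the span of Hermite functions), and the semigroup is given by a Feynman--Kac / Mehler-type formula. A secondary, milder point is the frequency dependence: $\rho(t_n,\cdot)$, hence $M_{t_n}$, depends on $q(t_n,\cdot)$ itself, so the evolution is nonlinear; this is harmless because in the one-step comparison above $q(t_n,\cdot)$ is frozen data and the asymptotic identification is carried out step by step — and, if desired, the dependence can be hidden behind plain time-dependence as in the remark following \eqref{normddre}.
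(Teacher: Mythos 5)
Your proof follows exactly the route the paper intends and itself leaves unspecified (the paper's proof is the single sentence ``proven by extension of the same formal idea in lemma \ref{lem:conn_rep_mut}''): you write the one-step d-c update as $e^{h\mathcal R^\star}e^{hM_{t_n}}$ acting on $q(t_n,\cdot)$, form the difference quotient, and invoke the Lie product formula, exactly as in the finite-dimensional Lemma \ref{lem:conn_rep_mut}. You in fact go further than the paper by correctly flagging that $\mathcal R^\star$ and $M_{t_n}$ are unbounded, so the formal Lie--Trotter step would need a Trotter--Kato or Chernoff product formula on a common core to be made rigorous---a gap the paper leaves silent.
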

\begin{proof}
This is proven by extension of the same formal idea in lemma \ref{lem:conn_rep_mut}.
\end{proof}

\begin{rem}
It should be noted that there are also variants possible, with mutation being an additive term, as described in \cite{Hofbauer1985}, but this is beyond the scope of this manuscript. Also related is the quasispecies equation (sometimes referred to as eigenspecies equation) \cite{Eigen1988}
\begin{align}
    \label{eq:eigenctstimectstrait}
    \text{Quasispecies model:} &\quad  \partial_t p(t,x) = \int f(x,y) q(x,y) p(t,y) dy - \iint f(z,y)p(t,z)p(t,y)\d z \d y \cdot p(t,x),  
\end{align} which has different dynamical properties, and which we will not be considering in the following. 
\end{rem}

\section{Explicit solutions for the replicator-mutator equation under the influence of a quadratic fitness function} \label{sec:explicit}
In this section we present some explicit solutions and analytical properties of the unnormalised replicator-mutator equation
\begin{align}
    \label{eq:repmutlin}
    \partial_t q(t,x) = \underbrace{- \nabla \cdot (q(t,x) Gx) + \frac{1}{2}  \nabla \cdot (\Sigma \nabla q(t,x))}_{\text{mutation}}  +  \underbrace{q(t,x)\pi_{q_t}(x) }_{\text{replication}}
\end{align}
where the fitness function $\pi_{q_t}$ is given by
\begin{align}
        \label{eq:utilitylinear}
        \pi_{q_t}(x) &= \frac{\int f(x,z)q_t(z)\d z}{\int q_t(z) \d z}
\end{align}
with fitness payoff function
\begin{align}
        f(x,z) &= -\frac{1}{2}\|Ax-y(t)\|_\Gamma^2 -\frac{1}{2}\|Az-y(t)\|_\Gamma^2 + s \langle Ax-y(t), Az-y(t)\rangle_\Gamma + K \\
        &= -\frac{s}{2}\|Ax - Az\|_\Gamma^2 -\frac{1-s}{2}\|Ax-y(t)\|^2_\Gamma -\frac{1-s}{2}\|Az-y(t)\|^2_\Gamma + K.
\end{align}

This payoff function $f(x,z)$ describes the evolutionary fitness of an individual with trait $x$ in relation to an individual with trait $z$, and depending on a possibly time-dependent observation $y(t)$. A slightly simplified version of this fitness function can be found in \cite{cressman2006stability}. The object $y(t)$ is to describe a target ``feature'' with optimal fitness, i.e. individuals with a trait $x$ such that the corresponding feature $Ax$ satisfies $Ax \approx y(t)$ are better adapted at time $t$ than others. The constant $K$ describes the overall fitness-independent reproductive capacity, and $s\in(-\infty, 1]$ modulates the effect of ``community'' or ``uniformity'' on an individual's fitness. 

By integrating over $q_t$ in the definition of the fitness function in \eqref{eq:utilitylinear}, the fitness $\pi_{q_t}(x)$ is the averaged fitness utility of $x$ in the context of its ambient population $q_t$.

The following computation will be repeatedly useful: If $q_t =P(t)\cdot \mathcal N(m(t),C(t))$ is an unnormalised multivariate Gaussian with total mass $P(t)$, then the fitness $\pi_{q_t}$ of trait $x$ in a population $q_t$ is
\begin{equation}
\begin{split}
    \label{eq:fitness}
    \pi_{q_t}(x) &= P(t)^{-1}\int f(x,z) q_t(z)\d z = \int f(x,z) \d N(m,C)(z)\\
    &=  -\frac{1}{2}\|Ax-y(t)\|_\Gamma^2 -\frac{1}{2}\|Am-y(t)\|_\Gamma^2 + s \langle Ax-y(t), Am(t)-y(t)\rangle_\Gamma - \tr(A^\top \Gamma^{-1} A C) + K.
\end{split}
\end{equation}

\begin{rem}
    The specific type of mutation in \eqref{eq:repmutlin} corresponds to an Ornstein-Uhlenbeck diffusion $dX_t = GX_t dt + \Sigma^{1/2} dB_t $ on trait space. Most relevant biological applications are likely to have $G=0$, i.e., mutation is modelled by a random walk.
\end{rem}

\begin{rem}
    We will see later that we can also consider the simpler setting where $f(x,z) = -\frac{1}{2}\|Ax-y(t)\|_\Gamma^2 + K$, i.e., fitness of trait $x$ does not depend other individuals and their traits. 
\end{rem}

\begin{rem}
    The role of the parameter $s\in (-\infty, 1]$ is to provide a way of modelling fitness benefits of ``staying in the herd'', i.e., of an individual trait $x$ being close to the population mean. If $s=0$, then there is no such effect. To elucidate this argument, we choose a multivariate Gaussian distribution $q_t$ so we can use the simple analytical expression \eqref{eq:fitness} for the fitness.
    \begin{itemize}
            \item If $s=0$, then clearly $\argmax_x \,\pi_{q_t}(x) = \argmin_x\, \|Ax-y(t)\|_\Gamma^2$, i.e. the fittest trait is the one with the best match with the optimal feature $y(t)$.
        \item $s =1$: Since in this case, $\pi_{q_t}(x) = \|Ax-Am\|_\Gamma^2$, traits $x$ with the property $Ax = Am$ are most fit. In a certain sense, ``conformity'' or ``collaboration'' (in the sense of being close to the population's feature average) is beneficial in such an environment. In this extreme case, the optimal feature $y$ does not impact the evolutionary dynamics, and fitness is communicated purely in terms of conformity.
        \item $s \in(0,1)$ corresponds to a fitness function with a trade-off between maximising utility and conformity, respectively.  
        \item $s < 0$ corresponds to a utility function that benefits from non-conformity: Even if $x$ is quite close to minimising the misfit $\|Ax-y\|_\Gamma^2$, individual fitness can be improved by moving a bit further away from the population mean, even at the cost of increasing the misfit term.
    \end{itemize}
\end{rem}

We will see that in this specific setting, if the initial distribution $q(0,\cdot)$ is an unnormalised multivariate Gaussian distribution, then $q(t,\cdot)$ stays a multivariate Gaussian for all times $t$, and the mean and covariance, as well as the total mass $\int q(t,x)\d x$ follow simple and (in some cases) explicitly solvable differential equations.

Note that the standard formulation of the replicator-mutator equation, only considers relative frequency of traits, i.e., we don't track overall population growth or decline, only the behaviour of the relative frequency of any given trait. We can recover this standard (normalised) replicator-mutator equation by modifying the replication term such that it conserves mass (the mutation term already conserves mass):
\begin{align}
    \label{eq:repmutlin_normalised}
    \partial_t q(t,x) = \underbrace{- \nabla \cdot (q(t,x) Gx) + \frac{1}{2}  \nabla \cdot (\Sigma \nabla q(t,x))}_{\text{mutation}}  +  \underbrace{q(t,x)(\pi_{q_t}(x) -  \pi_{q_t,q_t}) }_{\text{replication}}
\end{align}

with the population $q_t$'s mean fitness $\pi_{p_t,p_t} = P(t)^{-1}\int \pi_{p_t}(x)p_t(x)\d x$. 
Note that the overall reproduction capacity parameter $K$ vanishes by subtracting mean fitness. 
This form of quadratic fitness function has been analysed in simplified form in \cite{cressman2006stability} for $y = 0$ and $s=0$. In the special case of $s = 1$, this is equivalent to
\[\pi_{p_t}(x) - \pi_{p_t, p_t} = -\frac{1}{2}\|A(x-m(t))\|_\Gamma^2 + \text{const.}\]
But since incorporating total population size does not increase mathematical complexity much, we consider the unnormalised version \eqref{eq:repmutlin} here, because it allows us to study population growth and decay, as well as extinction events, depending on the various parameters involved.

If we want to recover the probability distribution predicted by the usual (normalised, constant total population mass $=1$) replicator-mutator equation, we just need to normalise, i.e. divide by the population size $P(t)$.

We will prove the following statements in the remainder of the article:
\begin{itemize}
    \item Lemma \ref{lem:moment_eq} in section \ref{sec:moments} shows that the replicator-mutator equation conserves Gaussianity, i.e., if we start with an initial multivariate (unnormalised) Gaussian $q_0$, then for all future times $t$, the measure $q_t$ is also an unnormalised Gaussian. Then we derive ordinary differential equations (ODE) for the dynamics of the mean trait $m(t)$, the covariance of the distribution $C(t)$, and the population's total mass $P(t)$, which we will call \textit{moment equations} (since $P,m,C$ are the $0^{\mathrm{th}}$, $1^{\mathrm{st}}$ and $2^{\mathrm{nd}}$ moments of $q_t$).
    \item Section \ref{sec:solutions_replication} considers the replicator-mutator equation for the special case of $\Sigma = 0$ and $G=0$, which corresponds to the no-mutation case (pure replicator equation). Lemma \ref{lem:cov_ccrep} explicitly solves the ODE for the covariance $C$ derived in the preceding section. The remaining two lemmata explicitly solve the ODE for the mean $m$, and characterises its asymptotic ($t\to\infty$) behavior in two special cases: Lemma \ref{lem:explicitsoln} considers the case  of a constant optimal feature $y(t) = y$, i.e., the fitness landscape does not depend on time. Lemma \ref{lem:explicitsoln_nonconst} allows an optimal feature $y(t)$ varying in time, but we require $s=0$ (the ``fitness via uniformity'' coefficient) out of mathematical convenience. 
    \item In section \ref{sec:regularised_opt} we point out connections to inverse problems and Tikhonov regularisation.
    \item Section \ref{sec:repl_mut_solutions} considers the full replicator-mutator equation for non-vanishing $\Sigma$, but $G=0$, i.e. for a random walk-type mutation. In a similar way to the preceding section, lemma \ref{lem:repmuttime} first solves the ODE for the covariance explicitly, then lemma \ref{lem:repmuttime_mean} derives a characterisation of the dynamics of the mean, with corollary \ref{cor:expl_special} recording some particularly interesting special cases.
    \item Section \ref{sec:examples} applies the results from the preceding section \ref{sec:explicit} to demonstrate, illustrate, and explain a variety of phenomena which have made their name in evolutionary and mathematical biology, but which often relied on computationally intensive simulation studies (e.g., via simulation of a large number of ``adaptive walks'' or genetic algorithms). We show that many of these phenomena can be explained by simple mathematical expressions and arguments, once the explicit results derived in section \ref{sec:explicit} can be leveraged. These interesting phenomena include the \textit{flying kite effect}, the principle of \textit{survival of the flattest}, and both the population dynamics and the issue of extinction when the optimal feature $y(t)$ moves with constant velocity (exhibting the \textit{fixed lag effect}), with bounded speed, in an oscillating motion, or based on random fluctuations.
\end{itemize}

\subsection{Derivation of moment equations}\label{sec:moments}

Our first main result is the following lemma containing a) the statement that the population conserves Gaussianity of an initial distribution, and b) the differential equations governing the mean, covariance, and total mass of the distribution under the influence of selection and mutation over time, as modelled by \eqref{eq:repmutlin}.

Note that there is a strong connection between these moment equations and the mathematical theories of filtering and control \cite{kalman1960contributions,bucy1967global}, where similar or virtually identical ordinary differential equations (ODEs) appear. There is a large body of research on how to solve these equations. Older references typically work on analytical solutions, while newer research tends to focus on numerical methods. In the context of filtering, numerical procedures (like the Ensemble Kalman filter \cite{evensen2003ensemble}) tend to be popular because they also allow dimensionality reduction and to deal with nonlinear mappings between trait space and feature space. In this manuscript, we are interested in explicit solution because these will allow us to showcase elementary special scenarios of the behavior of the replicator-mutator equation. In addition, we are not necessarily motivated by extremely high-dimensional trait spaces, so linear algebra operations like matrix inversion are cheap enough to make these explicit solutions feasible (which is the crux of filtering and control settings, requiring numerical approaches).

\begin{lem}\label{lem:moment_eq}
We consider the replicator-mutator equation \eqref{eq:repmutlin}, restated here for convenience: 
\begin{align*}
    \partial_t q(t,x) = - \nabla \cdot (q(t,x) Gx) + \frac{1}{2}  \nabla \cdot (\Sigma \nabla q(t,x))  +  q(t,x)\pi_{q_t}(x) 
\end{align*}
and the utility function (\ref{eq:utilitylinear}) tracking a possibly time-dependent optimal feature $y(t)\in \R^k$ with $A\in \R^{k\times d}$,  and $s \leq 1$. 

    Furthermore, assume that $q_0 = P_0\cdot \mathcal N(m_0,C_0)$ is a multivariate Gaussian measure with mass $P_0$ (i.e., $q_0$ is not necessarily a probability measure, only a finite measure). Then the following statements hold:
    The replicator-mutator equation describes the evolution of an unnormalised multivariate Gaussian measure $q_t = P(t)\cdot \mathcal N(m(t),C(t))$ with total mass $P(t)$ for all $t\geq 0$, and its mean, covariance, and total mass satisfy the following system of ODEs: 
    \begin{equation}\label{eq:moment_ODEs}\begin{split}
        \dot m(t) &= Gm(t) - (1-s )C(t)A^\top \Gamma^{-1}\left( Am(t) - y(t)\right)\\
        \dot C(t) &= GC(t) + C(t)G^\top + \Sigma -C(t)A^\top \Gamma^{-1} A C(t)\\
        \dot P(t) &= P(t)\cdot \Big(K  -(1-s)\|Am(t)-y(t)\|_\Gamma^2   -\tr [A^\top \Gamma^{-1}AC(t)] \Big)
    \end{split}\end{equation}
    
\end{lem}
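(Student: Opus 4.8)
The plan is to substitute the unnormalised Gaussian ansatz $q(t,x)=P(t)\,\mathcal N(m(t),C(t))(x)$ directly into \eqref{eq:repmutlin}, divide through by $q(t,x)$, and exploit the fact that both sides are then polynomials of degree at most two in $x$. Matching the degree-two, degree-one and degree-zero parts in $x$ produces three identities which, after elementary linear algebra, are exactly the ODEs for $C$, $m$ and $P$ in \eqref{eq:moment_ODEs}. This shows the Gaussian form is consistent with the dynamics; to obtain the lemma as stated I would then note that \eqref{eq:moment_ODEs} is locally well-posed, so plugging its solution back in exhibits an actual Gaussian solution of \eqref{eq:repmutlin}, and preservation of Gaussianity for \emph{every} solution started at $q_0$ follows from uniqueness of solutions of \eqref{eq:repmutlin} (which, for a fixed realisation of $m(\cdot),C(\cdot)$, is a linear Fokker--Planck equation with smooth coefficients).

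For the left-hand side I would work with $\phi(t,x):=\log q(t,x)=\log P(t)-\tfrac12\log\det C(t)-\tfrac12(x-m(t))^\top C(t)^{-1}(x-m(t))+\text{const}$, so that $\partial_t q/q=\partial_t\phi$. Using Jacobi's formula $\tfrac{\d}{\d t}\log\det C=\tr(C^{-1}\dot C)$ and $\tfrac{\d}{\d t}C^{-1}=-C^{-1}\dot C\,C^{-1}$ gives, with $v:=x-m(t)$,
\[
\frac{\partial_t q}{q}=\frac{\dot P}{P}-\tfrac12\tr(C^{-1}\dot C)+\dot m^\top C^{-1}v+\tfrac12\,v^\top C^{-1}\dot C\,C^{-1}v ,
\]
a quadratic in $v$. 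On the right-hand side the mutation part is computed from $\nabla\phi=-C^{-1}v$ and $\mathrm{Hess}\,\phi=-C^{-1}$: one finds $q^{-1}[-\nabla\cdot(qGx)]=-\tr G+v^\top C^{-1}G(v+m)$ and $q^{-1}[\tfrac12\nabla\cdot(\Sigma\nabla q)]=\tfrac12 v^\top C^{-1}\Sigma C^{-1}v-\tfrac12\tr(\Sigma C^{-1})$, while the replication term is the Gaussian-averaged fitness $\pi_{q_t}(x)$ of \eqref{eq:fitness}, which is an explicit quadratic function of $x$ whose coefficients depend on $q_t$ only through $m(t)$ and $C(t)$ --- the mass $P(t)$ cancels in the normalisation defining $\pi_{q_t}$. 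This last point is what makes the resulting ODE system self-contained.

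Expanding $\pi_{q_t}(m+v)$ about $v=0$ and matching the three homogeneous parts of the two quadratics in $v$ then finishes the computation. The degree-two part gives $\tfrac12 C^{-1}\dot C\,C^{-1}=\tfrac12(C^{-1}G+G^\top C^{-1})+\tfrac12 C^{-1}\Sigma C^{-1}-\tfrac12 A^\top\Gamma^{-1}A$; conjugating by $C$ on both sides yields the Riccati equation $\dot C=GC+CG^\top+\Sigma-CA^\top\Gamma^{-1}AC$. The degree-one part gives $C^{-1}\dot m=C^{-1}Gm-(1-s)A^\top\Gamma^{-1}(Am-y)$, i.e. the stated $m$-equation after multiplying by $C$. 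The degree-zero part, together with the trace identity $\tr(C^{-1}\dot C)=2\tr G+\tr(\Sigma C^{-1})-\tr(A^\top\Gamma^{-1}AC)$ (which follows from the Riccati equation just derived and cyclicity of the trace), collapses to $\dot P/P=K-(1-s)\|Am-y\|_\Gamma^2-\tr(A^\top\Gamma^{-1}AC)$.

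The coefficient-matching is mechanical, if lengthy; the step I expect to require the most care is the bookkeeping of the quadratic, linear and constant parts --- in particular the symmetrisations hidden in the degree-two identity and the cancellations in the degree-zero identity. The genuinely non-routine ingredients are the ones needed for rigour rather than for the formal derivation: (i) global existence of the solution of the symmetric Riccati equation together with the fact that, for $\Sigma\succeq0$ and $A^\top\Gamma^{-1}A\succeq0$, it keeps $C(t)$ positive definite (so the density is well-defined) with no finite-time blow-up --- classical in the Kalman--Bucy / Riccati theory cited above --- and (ii) the uniqueness statement for \eqref{eq:repmutlin} that upgrades ``there is a Gaussian solution'' to ``every solution started Gaussian stays Gaussian''. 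An alternative to the ansatz route would be to take Gaussianity from (ii) and then derive \eqref{eq:moment_ODEs} by differentiating $\int q\,\d x$, $\int x\,q\,\d x$ and $\int xx^\top q\,\d x$ under the integral sign and integrating by parts, using the Gaussian moment formulas to close the hierarchy; this is essentially equivalent but slightly less direct.
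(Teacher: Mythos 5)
Your proposal is correct and follows essentially the same route as the paper: substitute the unnormalised Gaussian ansatz, divide by $q$, and match polynomial coefficients in $v = x - m(t)$. Two small variations are worth noting. First, you extract $\dot P/P$ from the degree-zero coefficient (closing it with the trace identity derived from the Riccati equation), whereas the paper obtains $\dot C$ and $\dot m$ from the coefficient match and then gets $\dot P$ by a separate direct integration, $\dot P = \int \partial_t q\,\d x = \int \pi_{q_t}\,q\,\d x = \pi_{q_t,q_t}\,P$; the two are equivalent and yours is slightly more unified. Second, for preservation of Gaussianity the paper iterates a formal Euler-discretisation argument (the map $\phi\mapsto\phi+h\,\partial_t\phi$ preserves quadratics, and one passes to the limit $h\to 0$), while you invoke forward uniqueness. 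Be aware that your parenthetical justification does not quite close the loop: \eqref{eq:repmutlin} is \emph{nonlinear} because $\pi_{q_t}$ depends on $q_t$ through its mean and covariance, so freezing $m(\cdot),C(\cdot)$ at the constructed Gaussian solution yields a different, linear Fokker--Planck equation; uniqueness for that linear problem does not by itself exclude a non-Gaussian solution of the original nonlinear PDE, whose fitness coefficient would be a different function. Both arguments are at a comparable level of informality. The rest of your computation --- the LHS expansion via Jacobi's formula, the symmetrisation hidden in the degree-two match, the observation that $P(t)$ cancels inside $\pi_{q_t}$, and the trace bookkeeping in the degree-zero match --- is correct and reproduces the paper's intermediate identities.
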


\begin{proof}[Proof of lemma \ref{lem:moment_eq}]
  We start by proving that $q_t$ remains the density of an unnormalised multivariate Gaussian measure for all times $t\geq 0$, given $q_0$ is Gaussian.

The following equations are again true if $q(t,\cdot)$ is a multivariate Gaussian measure ($\nabla$ is always implicity assumed to refer to derivatives with respect to the trait variable $x$, not time $t$).
    \begin{align*}
        \frac{\nabla q(t,x)}{q(t,x)} &= - C^{-1}(x-m)\\
        \frac{\nabla \cdot (q(t,x)Gx)}{q(t,x)} &= - (x-m)^\top C^{-1}Gx + \tr G \\
        &= -\frac{1}{2}x^\top C^{-1}Gx - \frac{1}{2}x^\top G^\top C^{-1} x+ x^\top G^\top C^{-1}m + \tr G\\
        \frac{\nabla\cdot (\Sigma \nabla q(t,x))}{q(t,x)}&= (x-m)^\top C^{-1}\Sigma C^{-1}(x-m) - \tr [\Sigma C^{-1}]
    \end{align*}
    Using these differential properties, we can derive (again under the assumption of Gaussianity)
    \begin{align*}
        \frac{\d }{\d t}\log q(t,x) &= \frac{\partial_t q(t,x)}{q(t,x)} = -\frac{\nabla \cdot(q(t,x) Gx)}{q(t,x)} + \frac{1}{2}\frac{\nabla \cdot (\Sigma \nabla q(t,x))}{q(t,x)} + \pi_{q_t}(x) \\
        &=\frac{1}{2}x^\top C^{-1}Gx + \frac{1}{2}x^\top G^\top C^{-1} x- x^\top G^\top C^{-1}m - \tr G\\
        &+ \frac{1}{2}(x-m)^\top C^{-1}\Sigma C^{-1}(x-m) - \frac{1}{2}\tr [\Sigma C^{-1}] \\
        &-\frac{1}{2}\|Ax-y\|_\Gamma^2  -\frac{1}{2}\|Am-y\|_\Gamma^2 -\frac{1}{2}\tr A^\top \Gamma^{-1}AC + s \langle Ax-y, Am-y\rangle_\Gamma + K \\
        &= -\frac{1}{2} x^\top \left[-C^{-1}G - G^\top C^{-1} - C^{-1}\Sigma C^{-1} + A^\top \Gamma^{-1}A \right] x \\
        &+ x^\top \left[-G^\top C^{-1}m - C^{-1}\Sigma C^{-1} m+A^\top \Gamma^{-1} y +sA^\top \Gamma^{-1}(Am -y) \right]\\
        &+ \tr G - \frac{1}{2}\tr[\Sigma C^{-1}]-\frac{1}{2}\|Am-y\|_\Gamma^2 -\frac{1}{2}\tr A^\top \Gamma^{-1}AC  + K\\
        &= -\frac{1}{2}x^\top R x + x^\top b + c       
    \end{align*}
    where 
    \begin{align*}
        R &= -C^{-1}G - G^\top C^{-1} - C^{-1}\Sigma C^{-1} + A^\top \Gamma^{-1}A\\
        b &= -G^\top C^{-1}m - C^{-1}\Sigma C^{-1} m+A^\top \Gamma^{-1} y +sA^\top\Gamma^{-1}(Am -y)
    \end{align*}
    For $t=0$, all of the expressions above are known to be correct, which means that, in particular, $\log q(0,\cdot)$ and $\frac{\d}{\d t} \log q(0,\cdot)$ are quadratic forms. For some $h>0$, then also $\log q(0,\cdot) + h \frac{\d}{\d t} \log q(0,\cdot)$ is quadratic. Iterating this argument, this means that for any positive stepsize $h> 0$, the Euler discretisation in time of \eqref{eq:repmutlin} yields a sequence of exponentials of quadratic multivariate polynomials. Taking the limit $h\to 0$ proves that $\log q(t,\cdot)$ indeed is a quadratic function, too, which proves Gaussianity of all $q(t,\cdot)$. 
    
    In order to derive expressions for the mean $m(t)$, covariance matrix $C(t)$ and mass $P(t)$ of this unnormalised multivariate Gaussian, we take the expressions above and compare them to an alternative way of computing $\frac{\d }{\d t}\log q(t,x)$ using its Gaussian structure:
    \begin{align*}
        \frac{\d }{\d t}\log q(t,x) &= -\frac{\d}{\d t}\frac{1}{2}(x-m(t))^\top C(t)^{-1}(x- m(t))\\
        &= (x-m)^\top C^{-1} \dot m + \frac{1}{2}(x-m)^\top C^{-1}\dot C C^{-1}(x-m)\\
        &= -\frac{1}{2}x^\top \left[- C^{-1}\dot C C^{-1}\right] x + x^\top \left[C^{-1}\dot m - C^{-1}\dot C C^{-1}m \right] + c
    \end{align*}
    Comparing this term by term with the expression below we get
    \begin{align}
        \label{eq:Req}- C^{-1}\dot C C^{-1} &= -C^{-1}G - G^\top C^{-1} - C^{-1}\Sigma C^{-1} + A^\top \Gamma^{-1}A\\
        \label{eq:beq}-G^\top C^{-1}m - C^{-1}\Sigma C^{-1} m&+A^\top \Gamma^{-1} y +sA^\top\Gamma^{-1}(Am -y) =C^{-1}\dot m - C^{-1}\dot C C^{-1}m
    \end{align}
    The first equation \eqref{eq:Req} is equivalent to $\dot C = GC + CG^\top + \Sigma - rCA^\top \Gamma^{-1} AC$, which is the ODE for the covariance matrix in \eqref{eq:moment_ODEs}.
    The second equation \eqref{eq:beq} is equivalent to
    \begin{align*}
        \dot m &= -CG^\top C^{-1} m - \Sigma C^{-1}m + A^\top \Gamma^{-1} y + sCA^\top\Gamma^{-1} (Am-y) +\dot C C^{-1} m\\
        &= -C G^\top C^{-1}m -\Sigma C^{-1}m + A^\top \Gamma^{-1} y + sCA^\top\Gamma^{-1}(Am - y)\\
        &+ Gm + CG^\top C^{-1}m + \Sigma C^{-1}m - rCA^\top\Gamma^{-1}Am\\
        &=Gm -(1-s)CA^\top \Gamma^{-1}(Am-y),
    \end{align*}
    which is the ODE for the mean evolution in \eqref{eq:moment_ODEs}.

    It remains to derive the ODE for $P(t)$, but since $P(t) = \int q(t,x)\d x$, this is just
    \begin{align*}
        \dot P(t) &= \frac{\d }{\d t}\int q(t,x)\d x = \int \partial_t q(t,x)\d x = \int \pi_{q_t}(x) q(t,x)\d x = \pi_{q_t,q_t}{\cdot P(t)}\\
         &= \left(-(1-s)\|Am-y\|_\Gamma^2   -\tr A^\top \Gamma^{-1}AC + K\right){\cdot P(t)}\qedhere
    \end{align*}
\end{proof}
\begin{rem} For the case $s=0$, the set of differential equations for $m$ and $C$ in \eqref{eq:moment_ODEs} are exactly the equations characterising the Kalman-Bucy filter \cite{bucy2005filtering}: Consider a hidden state $x(t)$ and an associated time series of measurements $z(t)$ described by the following stochastic differential equations,
\begin{align}
\label{eq:kalmanbucy}
    \d x(t) &= G x(t)\d t + \Sigma^{1/2}\d W_t\\
    \d z(t) &= A x(t) \d t+ \Gamma^{1/2}\d V_t.\label{eq:observ_KB}
\end{align}
for matrices $G,A$ and covariance matrices $\Sigma,\Gamma$, with $W_t,V_t$ being independent Brownian motion terms. 
    
The Kalman-Bucy filter describes the mean $\hat x(t)$ and covariance  $R(t)$ of the Gaussian posterior distribution, given by
     \begin{align}
         \d \hat x(t) &= G\hat x(t) \d t - R(t) A^\top\Gamma^{-1}(A\hat x(t) \d t - \d z(t))\\
         \frac{\d}{\d t} R(t) &= GR(t) + R(t) G^\top + \Sigma - R(t) A^\top \Gamma^{-1}AR(t)
     \end{align}

    If $z(t)$ was replaced by $\d z(t) = Ax(t) \d t$, then it has a derivative $\dot z(t) = A x(t) =y(t)$, and then the ODE for $\widehat {x(t)}$ could equivalently be written as
    \begin{align*}
        \dot {\widehat {x(t)}} &= G\hat x(t) - R(t) A^\top\Gamma^{-1}(A\hat x(t)  - \dot z(t)) \\
        &=G\hat x(t) - R(t) A^\top\Gamma^{-1}(A\hat x(t)  - y(t)),
    \end{align*}
    which is exactly the equation for $m$ in \eqref{eq:moment_ODEs}.
    
    This means that the replicator-mutator equation ``tracking'' a target feature with optimal fitness $y(t)$ has the same dynamics as a motion tracking algorithm for an unknown state $x(t)$ via observations $y(t)$, after the observation process $z(t)$ from \eqref{eq:observ_KB} has been replaced with a smooth and noiseless version $\dot z(t) = Ax(t) =: y(t)$ (but without changing the filter equations). That is, the replicator-mutator equation behaves exactly as the filtering equations for a particular linear-Gaussian dynamical system.

\end{rem}

\begin{rem}
    A simple modification of the proof of lemma \ref{lem:moment_eq} shows that we can also consider the simpler fitness functional 
    \begin{equation}
        \pi_{q_t}(x) = -\frac{1}{2}\|Ax-y\|_\Gamma^2 + K.
    \end{equation}
    This means that the fitness of trait $x$ does not depend on its ambient population, but is just a function of the trait alone. By performing the same quadratic expansion, we can derive
    \begin{equation}\label{eq:moment_ODEs_simpler}\begin{split}
        \dot m(t) &= Gm(t) - C(t)A^\top \Gamma^{-1}\left( Am(t) - y(t)\right)\\
        \dot C(t) &= GC(t) + C(t)G^\top + \Sigma -C(t)A^\top \Gamma^{-1} A C(t)\\
        \dot P(t) &= P(t)\cdot \left(K  -\frac{1}{2}\|Am(t)-y(t)\|_\Gamma^2   -\tr [A^\top \Gamma^{-1}AC(t)] \right)
    \end{split}\end{equation}
    The structure of these equations is almost identical to \eqref{eq:moment_ODEs} and can be solved with the same techniques. Note that there is no choice of parameters that makes \eqref{eq:moment_ODEs} and \eqref{eq:moment_ODEs_simpler} identical, though, since the moment equations for $m$ are identical if and only if $s = 0$, but the moment equations for $P$ are identical if and only if $s = 1/2$. Nevertheless, all solutions derived in later parts of this manuscript can be adapted to this special case.
\end{rem}

\subsection{Pure replication}\label{sec:solutions_replication}
In this section we derive explicit solutions to the differential equations for $m$, $C$, and $P$ in the previous section, under the additional assumption that $G= 0$ and $\Sigma = 0$. This means that we are considering the case of the (no-mutation) replicator equation
\begin{equation*}
    \partial q_t(x) = q(t,x)\pi_{q_t}(x).
\end{equation*}

We treat the (more interesting, but also more involved) replicator-mutator equation in the following section, but here we can showcase some basic techniques that we will continue to use for the replicator-mutator equation.

The biological motivation for considering a system like this is as follows: We imagine an initial distribution of traits $q_0$ (we can pretend that this models a very large population of individuals with traits $x_i$ such that the empirical distribution of the traits is well-characterised by the continuous distribution $q_0$). We also consider an optimal feature $y(t)$, possibly depending on time. The fitness function $\pi_{q_t}$ is given by
\begin{align}
        \pi_{q_t}(x) &= \frac{\int f(x,z)q_t(z)\d z}{\int q_t(z) \d z}
        \intertext{with utility (or payoff) function}
        f(x,z) &= -\frac{1}{2}\|Ax-y(t)\|_\Gamma^2 -\frac{1}{2}\|Az-y(t)\|_\Gamma^2 + s \langle Ax-y(t), Az-y(t)\rangle_\Gamma + K
\end{align} 
and describes how the optimal feature $y(t)$ rewards or adversely affects individuals depending on whether $Ax \approx y(t)$ at time $t$.

Since we consider pure replication, traits proliferate or die out depending on the fitness function, but there is no mutation. Of course, in this case we expect the distribution to concentrate around an optimal trait $x$ such that $Ax \approx y(t)$. Two major difficulties can arise: First, if $y(t)$ indeed depends on time, then selection is also time-dependent, and there will not be an optimal trait fixed in time. This immediately leads to the question whether a population under the influence of pure replication can successfully track a time-dependent optimal feature. This we answer to the negative later.

Second, if $A$ is degenerate (i.e., if several traits $x$ can lead to the same optimal feature $y$ --  i.e., non-injectivity of $A$; or if the optimal feature cannot be achieved by mapping via $A$ -- i.e., non-surjectivity of $A$), linear algebra dictates how the system responds to this issue, and it mainly boils down to projection onto the range of $A$ (if $A$ is not surjective), at minimal distance to the initial mean (if $A$ is not injective).

    The following lemma considers the system \eqref{eq:moment_ODEs} in terms of $m$ and $C$ only. In particular at this point we won't think about whether $P(t)$ decays (i.e., the total population dies out), or stabilises. This is done further below. The first statement for the explicit solution is a well-known result for Riccati equations, see, e.g., \cite{garbuno2020interacting}, but we will also need the more geometric statement about the asymptotic case corresponding to a specific projection, which we prove here quickly for convenience.

\begin{lem}[Covariance of the replicator equation]\label{lem:cov_ccrep}
We consider the special case of the replicator equation
    \begin{align*}
    \frac{dq_t}{dt}(x) =  q(t,x)\pi_{q_t}(x)
\end{align*} The variance of $q_t$ has explicit solution
    \begin{equation}
        C(t) = M(t) C_0,
    \end{equation}
    where $M(t) = (I + rtC_0A^\top \Gamma^{-1}A)^{-1}$. For $t\to\infty$, $M(t)$ converges to the $C_0$-orthogonal projection operator onto the kernel of $A$, i.e. $\lim_{t\to\infty}M(t) = \Pi_{\ke A}^{C_0}$. This means that $\lim_{t\to\infty} C(t) = \Pi_{\ke A}^{C_0}\cdot C_0 $  
\end{lem}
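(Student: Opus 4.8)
The plan is to establish the two assertions in turn. Throughout write $B := A^\top\Gamma^{-1}A$, which is symmetric positive semidefinite with $\ke B = \ke A$ (since $Bv=0$ forces $v^\top Bv = \|Av\|_\Gamma^2 = 0$, hence $Av=0$). Specialising the covariance ODE of Lemma~\ref{lem:moment_eq} to $G=0$, $\Sigma=0$ gives the matrix Riccati equation $\dot C = -rCBC$ with $C(0)=C_0$. When $C_0$ is nondegenerate the slick route is the substitution $S(t):=C(t)^{-1}$: then $\dot S = -C^{-1}\dot C\,C^{-1} = rB$, so $S(t) = C_0^{-1}+rtB$ is positive definite for all $t\ge 0$, and $C(t) = (C_0^{-1}+rtB)^{-1} = (I+rtC_0B)^{-1}C_0 = M(t)C_0$. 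To cover singular $C_0$ as well, I would instead verify directly that $M(t)C_0$ solves the ODE: writing $N(t):=I+rtC_0B$ one has $\dot M = -M\dot N M = -rMC_0BM$, hence $\frac{\d}{\d t}(MC_0) = -rMC_0BMC_0 = -r(MC_0)B(MC_0)$ with $M(0)C_0=C_0$, and uniqueness for the (locally Lipschitz) Riccati flow identifies this with $C(t)$. Two small facts enter here: $I+rtC_0B$ is invertible for every $t\ge 0$ because $C_0B$ has nonnegative spectrum (it shares its nonzero eigenvalues with the positive semidefinite matrix $C_0^{1/2}BC_0^{1/2}$); and $M(t)C_0$ is symmetric because $(I+rtC_0B)C_0 = C_0(I+rtBC_0)$ gives $(I+rtC_0B)^{-1}C_0 = C_0(I+rtBC_0)^{-1}$.

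\textbf{Asymptotics.} Equip $\R^d$ with the inner product $\langle u,v\rangle_\star := u^\top C_0^{-1}v$ (taking $C_0\succ 0$ for the moment). With respect to $\langle\cdot,\cdot\rangle_\star$ the operator $C_0B$ is self-adjoint, since $\langle C_0Bu,v\rangle_\star = u^\top Bv = \langle u,C_0Bv\rangle_\star$, and positive semidefinite, since $\langle C_0Bv,v\rangle_\star = \|Av\|_\Gamma^2\ge 0$; hence it admits a spectral decomposition $C_0B = \sum_i \lambda_i P_i$ with $\lambda_i\ge 0$ and $P_i$ the $\langle\cdot,\cdot\rangle_\star$-orthogonal projections onto its eigenspaces. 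Therefore $M(t) = (I+rtC_0B)^{-1} = \sum_i (1+rt\lambda_i)^{-1}P_i \to \sum_{i:\lambda_i=0}P_i$ as $t\to\infty$, and the limit is exactly the $\langle\cdot,\cdot\rangle_\star$-orthogonal projection onto $\ke(C_0B) = \ke B = \ke A$, i.e.\ $\Pi_{\ke A}^{C_0}$. Multiplying on the right by $C_0$ yields $\lim_{t\to\infty}C(t) = \Pi_{\ke A}^{C_0}C_0$.

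\textbf{Main obstacle.} The genuinely delicate point is a rank-deficient $C_0$: then $C^{-1}$ and $\langle\cdot,\cdot\rangle_\star$ are undefined, and the very meaning of $\Pi_{\ke A}^{C_0}$ needs care. The way around it is to note that $C(t)=M(t)C_0$ always takes values in $\ran C_0$ and that $\ran C_0$ is $M(t)$-invariant (since $I+rtC_0B$ maps $\ran C_0$ into itself), so the entire argument may be carried out with $C_0$ replaced by its restriction to $\ran C_0$, which is invertible there. Everything else is routine linear algebra, the explicit Riccati formula being standard (cf.\ \cite{garbuno2020interacting}).
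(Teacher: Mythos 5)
Your proof of the explicit formula runs along the same lines as the paper's: differentiate $C^{-1}$ to obtain a linear ODE, solve, and re-invert. The direct verification you add for singular $C_0$ (checking $M(t)C_0$ solves the Riccati equation and invoking uniqueness) is a sensible supplement that the paper does not spell out.

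Your treatment of the asymptotics, however, is genuinely different. The paper applies the Woodbury identity to $M(t)$ to write it as $I - C_0^{1/2}\bigl(t^{-1}I + (\Gamma^{-1/2}AC_0^{1/2})^\top(\Gamma^{-1/2}AC_0^{1/2})\bigr)^{-1}(\Gamma^{-1/2}AC_0^{1/2})^\top\Gamma^{-1/2}A$ and then invokes the regularised pseudoinverse limit $\lim_{t\to\infty}(t^{-1}I+Q^\top Q)^{-1}Q^\top = Q^+$, identifying the limit with a projection via the Moore--Penrose characterisation. You instead observe that $C_0B$ is self-adjoint and positive semidefinite with respect to the weighted inner product $\langle u,v\rangle_\star = u^\top C_0^{-1}v$, expand $M(t) = \sum_i(1+rt\lambda_i)^{-1}P_i$ in the resulting $\star$-orthogonal spectral decomposition, and read off the limit as the $\star$-orthogonal projection onto $\ke(C_0B)=\ke A$ — which is precisely what $\Pi_{\ke A}^{C_0}$ means by the paper's own definition. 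Your spectral route is shorter, more elementary (no Woodbury, no external pseudoinverse-limit lemma), and makes it transparent \emph{why} the limit is a $C_0$-weighted orthogonal projection rather than deriving this after the fact. The trade-off is that the weighted inner product requires $C_0\succ 0$ from the outset, whereas the paper's pseudoinverse machinery at least superficially looks friendlier to degenerate $C_0$; in fact both arguments secretly use $C_0^{-1/2}$ or $C_0^{-1}$, so neither handles rank-deficient $C_0$ without the kind of range-restriction argument you sketch in your final paragraph. Your flag of that delicacy is warranted, and your sketch of the fix (restrict to $\ran C_0$, which is invariant) is the right idea, though left as a sketch.
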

\begin{rem}
In the simplest case where $C_0A^\top\Gamma^{-1}A$ is invertible, this means that $C(t)\xrightarrow{t\to\infty} 0$ (i.e., the population contracts further and further and approaches a Dirac point measure). If $C_0A^\top\Gamma^{-1}A$ is not invertible, for example because $A$ is not injective, then $C(t)$ degenerates in the directions where $A$ is not singular, i.e. where the mediation of $A$ allows the population to get closer to the optimal feature. The generality of the lemma also allows us to consider cases like a degenerate $C_0$ (in which case contraction happens on the orthogonal complement of the $\ke C_0$).
\end{rem}
\begin{proof}[Proof of lemma \ref{lem:cov_ccrep}] From lemma \ref{lem:moment_eq} we know that the distribution $q_t$ stays a multivariate Gaussian and that its covariance follows the ODE in \ref{eq:moment_ODEs}.
       The solution of the matrix ODE is given by $C(t) = (C_0^{-1} + rtA^\top\Gamma^{-1} A)^{-1} = (I + rtC_0A^\top \Gamma^{-1}A)^{-1}C_0$, which can be derived from the ODE for $C^{-1}$ given by $\frac{\d}{\d t}C^{-1}(t) = -C^{-1}(t)\dot C(t)C^{-1}(t) = A^\top \Gamma^{-1} A$, yielding $C^{-1}(t) = C_0^{-1} + rtA^\top \Gamma^{-1} A$, and taking the matrix inverse again.  

        Using the Woodbury matrix identity,
        \begin{align*}
            M(t) &= (I + rtC_0 A^\top \Gamma^{-1}A)^{-1}= (I + C_0^{1/2} (rt I) (C_0^{1/2}A^\top \Gamma^{-1}A))^{-1} \\
            &= I - C_0^{1/2}\left(t^{-1}I + C_0^{1/2}A^\top \Gamma^{-1}AC_0^{1/2}\right)^{-1}C_0^{1/2}A^\top \Gamma^{-1} A \\
            &= I - C_0^{1/2}\left(t^{-1}I + (\Gamma^{-1/2}AC_0^{1/2})^\top (\Gamma^{-1/2}AC_0^{1/2})\right)^{-1} (\Gamma^{-1/2}AC_0^{1/2})^\top \Gamma^{-1/2}A
        \end{align*}
        Now we use that $\lim_{t\to\infty}(t^{-1}I+ Q^\top Q)^{-1} Q^\top = Q^+$ \cite[Theorem 4.3]{barata2012moore}, which means that
        \begin{align*}
            \lim_{t\to\infty} M(t) = I - C_0^{1/2}(\Gamma^{-1/2}AC_0^{1/2})^+ \Gamma^{-1/2}A = C_0^{1/2}\Pi_{\ke (\Gamma^{-1/2}A C_0^{1/2})}C_0^{-1/2} = C_0^{1/2}\Pi_{C_0^{-1/2}\ke A}C_0^{-1/2},
        \end{align*}
        i.e. $\lim_{t\to\infty} M(t) =\Pi_{\ke A}^{C_0}$, where $\Pi_{V}^{Q}$ is the $Q$-orthogonal projection operator onto the subspace $V$, i.e. for all vectors $u$, we have $\Pi_V^Q u \in V$, also $\Pi_V^Q$ is idempotent, and $\langle \Pi_{V}^{Q}u, (I-\Pi_{V}^{Q})(u)\rangle_{Q} = 0$. These properties follow from straightforward calculation using the pseudoinverse's property $(M^+M)^\top = M^+ M$.
        
        Also, immediately by definition of $M(t)$,
        \begin{equation*}
            \lim_{t\to\infty} C(t) =  \Pi_{\ke A}^{C_0}\cdot C_0 = C_0^{1/2}\Pi_{\ke AC_0^{1/2}}C_0^{1/2}.\qedhere
        \end{equation*}
\end{proof}

An explicit solution for the mean ODE in \eqref{eq:moment_ODEs} can be given in a variety of settings. The simplest case is when the optimal feature $y(t)$ does not depend on time. This statement was proven in \cite{bungert2023complete} in the context of the Ensemble Kalman method for inversion, using explicit diagonalisation, and can possibly be considered canonical in the context of filtering, see, e.g., \cite{garbuno2020interacting}. In order to prepare ourselves for the more difficult replicator-mutator equation further below, we are going to show this here in a more direct and illustrative way which can be generalised to the setting of mutation.

\begin{lem}[Mean evolution of the replicator equation for a constant observation $y$]
\label{lem:explicitsoln}
    We consider the moment equations and utility function in \eqref{eq:moment_ODEs}.  Let $y(t) = y$ be a constant optimal feature. Then the following statements about the solution of \eqref{eq:moment_ODEs} hold.
    \begin{enumerate}
    \item \label{item:solution} \textbf{Evolution of mean for constant $y$.}
        There is an explicit analytical solution for the dynamics of the mean of the population,
        \begin{align}\label{eq:mean_weighted}
            m(t) &= M^{1-s}(t) m_0 + \left[I - M^{1-s}(t) \right] (A^\top \Gamma^{-1} A)^{+}A^\top \Gamma^{-1}y\\
                &= m_0 + \left[I - M^{1-s}(t) \right] \left( (A^\top \Gamma^{-1}A)^{+}A^\top \Gamma^{-1}y - m_0 \right)
        \end{align}
        where $M(t) = (I + rtC_0A^\top \Gamma^{-1}A)^{-1}$. In particular, if $s=0$, then        
        \begin{align}
            m(t) &= M(t) m_0 + \left[I - M(t) \right] (A^\top \Gamma^{-1} A)^{+}A^\top \Gamma^{-1}y
        \end{align}
       
    \item \label{item:variational} \textbf{Asymptotic ($t\to\infty$) behavior of mean.}  
    \begin{enumerate}
        \item     If $s < 1$ and $C_0$ is positive definite, then $\lim_{t\to\infty} m(t) = \Pi_{\ke A}^{C_0}(m_0) + (\Gamma^{-1/2}A)^+ \Gamma^{-1/2}y$.  Furthermore, the limit can be characterised as  $\lim_{t\to\infty} m(t) = \mathrm{argmin}_{u}\{\|u-m_0\|_{C_0}^2: u \in \arg \min_v \|Av -y\|_\Gamma^2\}$. 

        In particular, if. $A^\top\Gamma^{-1}A$ is invertible, then there is a unique minimiser $m_\infty = \arg \min_v \|Av -y\|_\Gamma^2$, and $\lim_{t\to\infty} m(t) = m_\infty$.
        \item If $s = 1$, then $m(t)=m_0$ for all $t\geq 0$. 
    \end{enumerate}
    \end{enumerate}
\end{lem}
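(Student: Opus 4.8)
The plan is to solve the mean equation of \eqref{eq:moment_ODEs} (with $G=0$) in closed form by an affine change of variables, exploiting the covariance formula from Lemma \ref{lem:cov_ccrep}, and then to pass to the limit. First I would substitute $C(t)=M(t)C_0$ into the mean ODE, which for $y(t)\equiv y$ and $G=0$ reads $\dot m(t)=-(1-s)\,M(t)C_0A^\top\Gamma^{-1}(Am(t)-y)$. The crucial step is to shift by the minimum-norm least-squares solution $v:=(A^\top\Gamma^{-1}A)^+A^\top\Gamma^{-1}y=(\Gamma^{-1/2}A)^+\Gamma^{-1/2}y$ of $Av=y$. I would first record the identity $A^\top\Gamma^{-1}(Av-y)=0$, which holds because $\ran(A^\top\Gamma^{-1}A)=\ran(A^\top\Gamma^{-1/2})$ contains $A^\top\Gamma^{-1}y$ while $A^\top\Gamma^{-1}A\,(A^\top\Gamma^{-1}A)^+$ is the orthogonal projector onto $\ran(A^\top\Gamma^{-1}A)$. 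Hence $A^\top\Gamma^{-1}(Am-y)=A^\top\Gamma^{-1}A\,(m-v)$, so $\tilde m:=m-v$ satisfies the homogeneous linear ODE $\dot{\tilde m}(t)=-(1-s)\,M(t)C_0A^\top\Gamma^{-1}A\,\tilde m(t)$ with $\tilde m(0)=m_0-v$.

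Next I would solve this via the ansatz $\tilde m(t)=M^{1-s}(t)\,\tilde m(0)$. This is meaningful because $C_0A^\top\Gamma^{-1}A$, being the product of two symmetric positive semidefinite matrices, is diagonalizable with non-negative spectrum, so $M(t)=(I+rt\,C_0A^\top\Gamma^{-1}A)^{-1}$ has spectrum in $(0,1]$ and a well-defined real power $M^{1-s}(t)$. Since $M(t)$ (for every $t$) and $C_0A^\top\Gamma^{-1}A$ are all functions of the single matrix $C_0A^\top\Gamma^{-1}A$, they mutually commute, and in a common eigenbasis the operator identity $\tfrac{\d}{\d t}M^{1-s}(t)=-(1-s)\,M(t)\,C_0A^\top\Gamma^{-1}A\,M^{1-s}(t)$ reduces to an elementary scalar identity. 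With $M^{1-s}(0)=I$ and uniqueness for linear ODEs, this yields $m(t)=v+M^{1-s}(t)(m_0-v)=M^{1-s}(t)m_0+[I-M^{1-s}(t)]v$, i.e.\ \eqref{eq:mean_weighted} and its rearrangement; for $s=0$ one has $M^{1-s}=M$.

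For the asymptotic statements: if $s=1$ the right-hand side of the mean ODE vanishes identically, so $m(t)\equiv m_0$ (consistently, $M^{1-s}=M^0=I$). If $s<1$ and $C_0\succ0$, then $\ke(C_0A^\top\Gamma^{-1}A)=\ke A$, the eigenvalues of $M(t)$ equal $1$ on $\ke A$ and tend to $0$ on its complement, and raising to the fixed positive power $1-s$ does not change these limits; hence by Lemma \ref{lem:cov_ccrep} $M^{1-s}(t)\to\Pi_{\ke A}^{C_0}$, so $m(t)\to\Pi_{\ke A}^{C_0}m_0+(I-\Pi_{\ke A}^{C_0})v$ as $t\to\infty$. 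I would then reconcile this with the variational characterization by observing that $\argmin_v\|Av-y\|_\Gamma^2$ is the affine subspace $v+\ke A$, so that minimizing $\|u-m_0\|_{C_0}^2$ over it is precisely the $C_0$-orthogonal projection of $m_0-v$ onto $\ke A$, with minimizer $v+\Pi_{\ke A}^{C_0}(m_0-v)=\Pi_{\ke A}^{C_0}m_0+(I-\Pi_{\ke A}^{C_0})v$; when $A^\top\Gamma^{-1}A$ is invertible, $\ke A=\{0\}$ and this is the unique least-squares solution $m_\infty$.

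The step I expect to be most delicate is making the non-integer matrix power $M^{1-s}(t)$ and its time derivative rigorous — i.e.\ justifying the functional-calculus manipulation through the commutativity and diagonalizability of $C_0A^\top\Gamma^{-1}A$ — together with stating the pseudoinverse range identities ($\ran(A^\top\Gamma^{-1}A)=\ran(A^\top\Gamma^{-1/2})$ and $(A^\top\Gamma^{-1}A)^+A^\top\Gamma^{-1}=(\Gamma^{-1/2}A)^+\Gamma^{-1/2}$) precisely; the remaining manipulations are routine.
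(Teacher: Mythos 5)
Your approach is correct and arrives where the paper does, but the intermediate device is genuinely different. You shift the mean by the minimum‑norm least‑squares solution $v=(A^\top\Gamma^{-1}A)^+A^\top\Gamma^{-1}y=(\Gamma^{-1/2}A)^+\Gamma^{-1/2}y$, using the normal‑equation identity $A^\top\Gamma^{-1}(Av-y)=0$ to turn the mean ODE into the \emph{homogeneous} linear equation $\dot{\tilde m}=-(1-s)M(t)C_0A^\top\Gamma^{-1}A\,\tilde m$, which the ansatz $\tilde m(t)=M^{1-s}(t)\tilde m(0)$ solves directly. The paper instead multiplies by $M^{-(1-s)}(t)$, chooses the exponent so that the homogeneous part cancels, and then must compute the matrix antiderivative $\int_0^t M^{s}(u)\,\d u$ via Lemma~\ref{lem:properties_M}. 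Your route dispenses with that integral at the small cost of one extra pseudoinverse range identity; both arguments lean on the same commutativity and diagonalisability facts about $C_0A^\top\Gamma^{-1}A$ (so that non‑integer powers $M^{1-s}$ and their time derivatives make sense), which is exactly the machinery Lemma~\ref{lem:properties_M} supplies.

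One discrepancy is worth flagging. Your derivation gives $\lim_{t\to\infty}m(t)=\Pi_{\ke A}^{C_0}m_0+(I-\Pi_{\ke A}^{C_0})v$, and you correctly reconcile this with the variational characterisation (minimising $\|u-m_0\|_{C_0}^2$ over the affine set $v+\ke A$). That is the right limit. The lemma's other explicit expression, $\Pi_{\ke A}^{C_0}m_0+(\Gamma^{-1/2}A)^+\Gamma^{-1/2}y$, is only equal to this when $\Pi_{\ke A}^{C_0}v=0$, and the paper's stated justification (that $I-\Pi_{\ke A}^{C_0}$ projects onto $\ran A^\top\ni v$) does not hold in general: $I-\Pi_{\ke A}^{C_0}$ is the \emph{oblique} projector onto $C_0\ran A^\top$ along $\ke A$, not the Euclidean projector onto $\ran A^\top$. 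A $2\times2$ check with $A=(1,0)$, $\Gamma=1$, $y=1$, $m_0=0$, $C_0=\bigl(\begin{smallmatrix}1&1\\1&2\end{smallmatrix}\bigr)$ gives $v=(1,0)^\top$, while direct integration of the mean ODE yields $m(t)\to(1,1)^\top$, matching your formula and the variational statement but not the simplified one. So your form $\Pi_{\ke A}^{C_0}m_0+(I-\Pi_{\ke A}^{C_0})v$ is the one to keep; the paper's simplification in the proof and in the lemma's first asymptotic identity appears to be in error, while the variational characterisation (which you also prove) is correct.
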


Before we prove this, we first give a quick interpretation of the statements of this lemma. The replicator equation models a population $p_t$ of traits under the influence of replication mediated by a fitness function which promotes traits $x$ such that $Ax \approx y$ in the sense that $\|Ax-y\|_\Gamma^2$ is small. This is exactly the setting of a Bayesian linear inverse problem \cite{stuart2010inverse} of type 
\begin{equation}\label{eq:BIP}
    y = Ax + \eps,
\end{equation}
where $\eps \sim \mathcal N(0,\Gamma)$ is Gaussian additive noise, and we assume a Gaussian prior $x\sim p_0$. The artificial time $t$ acts like a homotopy between prior $p_0$, is approximately equal to the Bayesian posterior at $t=1$ (this can be made exact using for example with an additional covariance inflation term, see \cite{bergemann2010localization}) and concentrates around the maximum likelihood estimator for $t\to\infty$.

In the context of evolution we can see from \eqref{eq:mean_weighted} that the mean $m(t)$ is a combination of the initial mean $m_0$ and a term enforcing minimality of $\|Am-y\|_\Gamma^2$, with the trade-off being mediated by the matrix $M(t)$. For $t=0$, this weighted combination is equal to $m(t) = I\cdot m_0 + 0$, while the second statement of lemma \ref{lem:explicitsoln} shows that $m(t)$ for $t\to\infty$ has an enticing geometric interpretation as the maximum-likelihood estimator $\argmin \|Av - y\|_\Gamma^2$, with ties (due to possible non-injectivity of $A$) being broken by minimal $C_0$-weighted distance to the initial population's mean. See figure \ref{fig:minnormsol} for a visualisation of this behavior.

\begin{figure}
    \centering
    \includegraphics[width=0.5\linewidth]{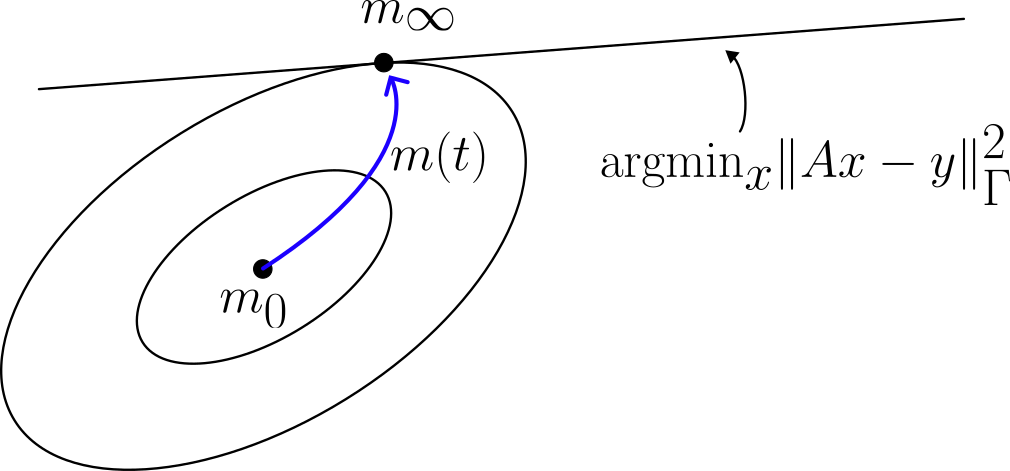}
    \caption{The trajectory and asymptotic state of the mean $m(t)$ follows the geometry of the initial distribution $p_0$ (ellipses visualise covariance structure $C_0$), and approaches the $\|\cdot - m_0\|_{C_0}$-minimum-norm solution of traits $x$ that minimise $\|Ax-y\|_\Gamma^2$. The shape of the trajectory is explained by the flying kite effect which is investigated in more detail later.}
    \label{fig:minnormsol}
\end{figure}

\begin{proof}
            Solving the ODE for $m$ is a bit elaborate. The idea will be to find a suitable matrix $L(t)$, analyse $\frac{\d}{\d t}[L(t)m(t)]$, and solve the resulting (easier) ODE directly.
        
        We begin by defining $M(t) = C(t)C_0^{-1} = (I+rtC_0A^\top \Gamma^{-1} A)^{-1}$ (suppressing the dependence on $t$ notationally from here on), and $L := M^\alpha = \exp(\alpha \log M)$. Using the results of lemma \ref{lem:properties_M}, $L$ is well-defined and we can compute 
        \begin{align*}
            \frac{\d}{\d t}[L\cdot m] &= \alpha \dot M M^{\alpha-1} m+ M^\alpha \dot m\\
            &= \alpha \dot C C_0^{-1}M^{\alpha-1}m - (1-s ) M^\alpha CA^\top \Gamma^{-1}(Am -y)\\
            &= -\alpha CA^\top \Gamma^{-1}A CC_0^{-1}M^{\alpha-1} m - (1-s ) M^\alpha CA^\top\Gamma^{-1}(Am -y)\\
            &= -\alpha M (C_0A^\top \Gamma^{-1}A) M M^{\alpha-1} m - (1-s) M^\alpha M C_0 A^\top \Gamma^{-1}(Am -y)\\
            &= -(\alpha + (1-s))M^{\alpha+1} C_0 A^\top\Gamma^{-1} A) m + (1-s) M^{\alpha + 1}C_0 A^\top\Gamma^{-1} y
        \end{align*}
        Now we choose $\alpha = -(1-s)$, which means that the first term vanishes, and $\alpha + 1 = s$. This means that
        \begin{align*}
            \frac{\d}{\d t}[M^{-(1-s)}\cdot m] = (1-s) M^{s} C_0 A^\top \Gamma^{-1} y
        \end{align*}
        We use the fact that $B^\top = B^\top B B^+$ for $B:=\Gamma^{-1/2}A$, hence
        \begin{align*}
            C_0 A^\top \Gamma^{-1} y &= C_0 (\Gamma^{-1/2}A)^\top \Gamma^{-1/2}y = C_0 (\Gamma^{-1/2}A)^\top  (\Gamma^{-1/2}A)  (\Gamma^{-1/2}A)^+ \Gamma^{-1/2}y \\
            &= (C_0 A^\top \Gamma^{-1} A) (\Gamma^{-1/2}A)^+ \Gamma^{-1/2}y, 
        \end{align*}
        i.e. 
        \begin{align*}
            \frac{\d}{\d t}[M^{-(1-s)}\cdot m] = (1-s) M^{s} (C_0 A^\top \Gamma^{-1} A) (\Gamma^{-1/2}A)^+ \Gamma^{-1/2}y
        \end{align*}
        Integrating this quantity and application of property 5. of lemma \ref{lem:properties_M} yields 
        \begin{align}
            M(t)^{-(1-s)}m(t) - m(0) &= (1-s)\int_0^t M^{s}(u) \d u \cdot  (C_0 A^\top \Gamma^{-1} A) (\Gamma^{-1/2}A)^+ \Gamma^{-1/2}y\\
            &= - (1-s)  (s - 1)^{-1} (M(t)^{s-1} - I)(\Gamma^{-1/2}A)^+ \Gamma^{-1/2}y,
        \end{align}
        i.e., $m(t) =  M(t)^{(1-s)}m(0) + (I - M(t)^{(1-s)})(\Gamma^{-1/2}A)^+ \Gamma^{-1/2}y$ where we can alternatively write
        \begin{align*}
            (\Gamma^{-1/2}A)^+ \Gamma^{-1/2}y = (A^\top \Gamma^{-1}A)^+ A^\top \Gamma^{-1}y
        \end{align*}
        using the Hermitian form of the pseudoinverse $B^+ = (B^\top B)^+ B^\top$ for $B:= \Gamma^{-1/2}A$.
                By the characterisation of the asymptotic covariance in lemma \ref{lem:cov_ccrep}, for $s < 1$,
        \begin{align*}
            \lim_{t\to\infty} m(t) &= \Pi_{\ke A}^{C_0}(m_0) + (I -  \Pi_{\ke A}^{C_0}) (\Gamma^{-1/2}A)^+ \Gamma^{-1/2}y =  \Pi_{\ke A}^{C_0}(m_0) + (\Gamma^{-1/2}A)^+ \Gamma^{-1/2}y,
        \end{align*}
        where the second term is simplified by the fact that $(I -  \Pi_{\ke A}^{C_0})$ projects onto (recall that $\Gamma$ is non-degenerate) $\ran (A^\top) = \ran ((\Gamma^{-1/2}A)^\top) = \ran ((\Gamma^{-1/2}A)^+)$
        
        We first prove this for the special case $s =0$, using the characterisation we just proved: Since $m(t)$ is the solution of \eqref{eq:repl_tikhonov} for all $t>0$, we know that $\lim_{t\to\infty} m(t)$ is equivalently given by 
        \[ \arg\min_{u\in V_y} \frac{1}{2}\|u-m_0\|_{C_0}^2,\]
        where $V_y = \{\text{minimisers of } \frac{1}{2}\|Au - y\|_\Gamma^2\}$. This proves property \ref{item:variational}. for $s =0$. But since the limit $\lim_{t\to\infty} m(t)$ is independent of the choice of $s $ (as long as $s < 1$), this holds more generally, too. 
        
\end{proof}

The following lemma provides an explicit formula for moment evolution in the case where $s=0$, but $y(t)$ is allowed to depend on $t$. Possible interesting examples for this behavior include an optimal feature moving with constant speed, in an oscillatory motion, or fluctuating randomly, see \cite{burger1995evolution,burger1997adaptation,gomulkiewicz2009demographic,chevin2013genetic,kopp_rapid_2014,matuszewski2015catch}, and also section \ref{sec:examples}.
\begin{lem}[Mean evolution of the replicator equation for non-constant $y(t)$ but $s=0$]
\label{lem:explicitsoln_nonconst}
    We consider the moment equations and utility function in \eqref{eq:moment_ODEs}. Let $s=0$. Then the solution of \eqref{eq:moment_ODEs} is given by   
        \begin{align}
            m(t) &= M(t) m_0 + \left[I - M(t) \right] (A^\top \Gamma^{-1} A)^{+}A^\top \Gamma^{-1} \frac{1}{t}\int_0^t y(u)\d u
        \end{align}
        where $M(t) = (I + rtC_0A^\top \Gamma^{-1}A)^{-1}$.
\end{lem}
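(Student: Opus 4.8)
The plan is to reuse the integrating-factor argument from the proof of Lemma~\ref{lem:explicitsoln}, observing that the choice $s=0$ is precisely what makes that argument survive a time-dependent $y$: the scalar weight $M^{s}$ that otherwise sits inside the integral collapses to the identity when $s=0$, so $y(t)$ passes through untouched. Concretely, with $G=\Sigma=0$ and $s=0$ the mean equation in \eqref{eq:moment_ODEs} is $\dot m(t) = -C(t)A^\top\Gamma^{-1}(Am(t)-y(t))$, and by Lemma~\ref{lem:cov_ccrep} we substitute $C(t)=M(t)C_0$ with $M(t)=(I+rtC_0A^\top\Gamma^{-1}A)^{-1}$, noting that its inverse is the affine matrix $M(t)^{-1}=I+rtC_0A^\top\Gamma^{-1}A$. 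Since the mean ODE is linear with continuous (indeed smooth) coefficients, its solution exists and is unique, so it suffices to exhibit one solution.

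First I would differentiate the combination $M(t)^{-1}m(t)$. Using $\tfrac{\d}{\d t}M(t)^{-1}=rC_0A^\top\Gamma^{-1}A$ and $M(t)^{-1}\dot m(t)=-C_0A^\top\Gamma^{-1}(Am(t)-y(t))$, the terms proportional to $m(t)$ cancel (this cancellation is exactly the $s=0$ case of the one in Lemma~\ref{lem:explicitsoln}, where the coefficient of $\dot m$ matches the coefficient in the covariance equation), leaving $\tfrac{\d}{\d t}[M(t)^{-1}m(t)]=C_0A^\top\Gamma^{-1}y(t)$. Integrating from $0$ to $t$ and using $M(0)=I$ gives the manifestly regular intermediate formula $m(t)=M(t)m_0+M(t)C_0A^\top\Gamma^{-1}\int_0^t y(u)\,\d u$.

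It then remains to rewrite the second term in the stated shape. For this I would invoke the pseudoinverse identity already used in the proof of Lemma~\ref{lem:explicitsoln}, namely $C_0A^\top\Gamma^{-1}=(C_0A^\top\Gamma^{-1}A)(A^\top\Gamma^{-1}A)^{+}A^\top\Gamma^{-1}$ (from $B^\top=B^\top BB^{+}$ and $B^{+}=(B^\top B)^{+}B^\top$ with $B=\Gamma^{-1/2}A$), together with the elementary observation that, since $M(t)^{-1}-I=rtC_0A^\top\Gamma^{-1}A$ and $M(t)$ commutes with $C_0A^\top\Gamma^{-1}A$, one has $rtM(t)C_0A^\top\Gamma^{-1}A=I-M(t)$. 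Writing $\int_0^t y(u)\,\d u = t\cdot\tfrac1t\int_0^t y(u)\,\d u$ and combining these two facts turns $M(t)C_0A^\top\Gamma^{-1}\int_0^t y(u)\,\d u$ into $[I-M(t)](A^\top\Gamma^{-1}A)^{+}A^\top\Gamma^{-1}\tfrac1t\int_0^t y(u)\,\d u$, which is the claimed formula.

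I do not expect a genuine obstacle: the statement is essentially a corollary of the technique of Lemma~\ref{lem:explicitsoln}, the only structural novelty being that the inhomogeneous term is now the time-average of $y$ rather than $y$ itself. The one point needing mild care is the behaviour at $t=0$: the factor $\tfrac1t\int_0^t y(u)\,\d u$ in the displayed expression must be understood as its continuous extension $y(0)$, so it is worth recording the equivalent unnormalised form $m(t)=M(t)m_0+M(t)C_0A^\top\Gamma^{-1}\int_0^t y(u)\,\d u$, which has no such issue and from which $m(0)=m_0$ is immediate.
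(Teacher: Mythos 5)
Your proof is correct and follows essentially the same route as the paper's: both integrate a suitable combination whose time derivative depends only on $y(t)$ (the paper uses $C(t)^{-1}m(t)$, you use $M(t)^{-1}m(t)=C_0\,C(t)^{-1}m(t)$, which differ only by the constant prefactor $C_0$), and both then rearrange into the stated form using the same pseudoinverse identities. Your observation about the continuous extension of $\tfrac1t\int_0^t y$ at $t=0$ and the unnormalised intermediate formula is a pleasant extra detail but not a departure from the paper's argument.
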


\begin{proof}
We could follow the same proof technique as in lemma \ref{lem:explicitsoln}, item \ref{item:solution}, but since $s=0$, we can more straightforwardly compute
\begin{align*}
    \frac{\d }{\d t} C^{-1}m&= - C^{-1}\dot C C^{-1}m + C^{-1}\dot m = C^{-1}CA^\top \Gamma^{-1}A CC^{-1}m - C^{-1}CA^\top\Gamma^{-1}(Am-y(t))\\
    &= A^\top \Gamma^{-1}y(t),
\end{align*}
so $C^{-1}(t) m(t) - C_0^{-1}m_0 =  A^\top \Gamma^{-1}\int_0^t y(u)\d u$, i.e., using the identity $B^\top  = B^\top BB^+$ for $B = \Gamma^{-1/2}A$ 
\begin{align*}
    m(t) &= C(t)C_0^{-1}m_0+ rtC(t)A^\top\Gamma^{-1} \frac{1}{t}\int_0^t y(u)\d u \\
    &= C(t)C_0^{-1}m_0 + C(t)C_0^{-1} \cdot (rtC_0A^\top\Gamma^{-1}A) (\Gamma^{-1/2}A)^+ \Gamma^{-1/2}\frac{1}{t}\int_0^t y(u)\d u\\
    &= M(t) m_0 + M(t) (M(t)^{-1} - I) (\Gamma^{-1/2}A)^+ \Gamma^{-1/2}\frac{1}{t}\int_0^t y(u)\d u\\
    &= M(t) m_0 + (I - M(t))(A^\top \Gamma^{-1}A)^+ A^\top \Gamma^{-1}\frac{1}{t}\int_0^t y(u)\d u
\end{align*}
                
\end{proof}

\begin{rem}
    It is possible to also write down a similar formula for the solution of $m(t)$ for $s>0$ and $y$ depending on time, but the integral in question is a bit unwieldy and can seemingly not be simplified in general.
\end{rem}

\begin{rem} \label{rem:examples}
    In the case of a moving optimum, the long-time average $ \frac{1}{t}\int_0^t y(u)\d u$ becomes a quantity of interest. If we assume $A$ to be invertible (which is an interesting special case, but a rather strong assumption), then $m(t)\simeq \frac{1}{t}\int_0^t A^{-1}y(u)\d u$. Here are a few examples of relevant examples.
    \begin{enumerate}
        \item $y(t) = y_0$ (constant, but initial population starts away from this optimum). Here, $ \frac{1}{t}\int_0^t y(u)\d u = y_0$ and the population asymptotically concentrates around the optimal trait $A^{-1}y_0$.
        \item $y(t) = y_0 + v\cdot t$ (moving with constant velocity). Then $ \frac{1}{t}\int_0^t y(u)\d u = y_0 + \tfrac{1}{2}v \cdot t$ (i.e., following the same trajectory, but with half the velocity of the moving optimum). This means that the population does not catch up with the moving optimum, it actually lags behind more and more, with the distance growing with half the moving optimum's velocity.
        \item $y(t) = \sin(kt)$ (oscillating/periodic movement). Then, $ \frac{1}{t}\int_0^t y(u)\d u = \frac{1-\cos(kt)}{kt} \to 0$ for $t\to \infty$, which means that the mean converges to the center of the oscillating dynamics.
        \item $y(t) = W_t$ (Brownian motion, corresponding to fluctuating optimum without drift). Then, asymptotically, $ \frac{1}{t}\int_0^t y(u)\d u  \sim \mathcal N(0,t^{-1})$, which converges to $0$ in probability.
    \end{enumerate}
\end{rem}

\subsection{Connection to regularised optimisation}\label{sec:regularised_opt}

Ignoring covariance $C$ and population size $P$ for now, we can derive the following connection to regularised optimisation: The mean of the population driven by a replicator-mutator equation follows the path of a Tikhonov-regularised solution to the inverse problem $y(t) = Ax + \eps$, with a regularisation parameter that vanishes in the long time limit $t\to \infty$. This connection is well-known in the inversion community, see \cite{stuart2010inverse}, but seems to not have been pointed out explicitly in the context of evolution.

\begin{lem}[The mean evolution of the replicator equation as a Tikhonov-regularised estimator]
\label{lem:tikhonov}
    We consider the moment equations and utility function in \eqref{eq:moment_ODEs}. Then for a given time $t > 0$, the mean $m(t)$ for $r = 1, s = 0$ is a solution of the following generalised Tikhonov-regularised least squares problem 
    \begin{align}\label{eq:repl_tikhonov}
        {\arg \min}_x \left\|Ax - \tfrac{1}{t}\int_0^ty(u)\d u\right\|_\Gamma^2 + \frac{1}{t} \|x - x_0\|_{C_0^{-1}}^2.
    \end{align}
    In particular, if $y(t) = y$ is constant, then $m(t)$ solves
    \begin{align}\label{eq:repl_tikhonov_constanty}
        {\arg \min}_x \left\|Ax - y\right\|_\Gamma^2 + \frac{1}{t} \|x - x_0\|_{C_0^{-1}}^2.
    \end{align}

  \begin{proof}
  The generalised Tikhonov regularisation problem given by 
\begin{align*}
    x^\ast = {\arg \min}_x \|Ax - y\|_P^2 + \|x - x_0\|_Q^2 
\end{align*}
where $\|x - x_0\|_Q^2 = (x - x_0)^\top Q(x - x_0)$, is known to have solution 
\begin{align*}
    x^\ast = (A^\top HA + Q)^{-1}(A^\top Hy + Qx_0)
\end{align*}
Recall the explicit solution for the mean in replicator dynamics problem for the setting $r = 1, s = 0$,
\begin{align*}
    m(t) & = \left( I -  (I + tC_0A^\top  \Gamma^{-1}A)^{-1}  \right) (A^\top \Gamma^{-1}A)^+ A^\top \Gamma^{-1}y + (I + tC_0A^\top  \Gamma^{-1}A)^{-1}m_0 \\
    & = \left( I -  I + tC_0(I + tA^\top  \Gamma^{-1}AC_0)^{-1}A^\top  \Gamma^{-1}A  \right)  (A^\top \Gamma^{-1}A)^+ A^\top \Gamma^{-1}y + (I + tC_0A^\top \Gamma^{-1}A)^{-1}m_0\\
    & =  tC_0(I + tA^\top \Gamma^{-1}AC_0)^{-1}A^\top \Gamma^{-1}y + (I + tC_0A^\top \Gamma^{-1}A)^{-1}m_0 \\
    \label{eq:msimp}
    & = (I + tC_0A^\top \Gamma^{-1}A)^{-1} tC_0A^\top \Gamma^{-1}y + (I + tC_0A^\top \Gamma^{-1}A)^{-1}m_0\\
    &= \left(tC_0A^\top \Gamma^{-1}A + I \right)^{-1} \left(tC_0A^\top \Gamma^{-1}y + m_0\right)\\
    &=\left(A^\top \Gamma^{-1}A + \frac{1}{t} C_0^{-1}\right)^{-1}\left(A^\top \Gamma^{-1}y + \frac{1}{t}C_0^{-1}m_0\right)
\end{align*}
using the Woodbury identity in the second line and also the identities $(B^\top B)^+B^\top = B^+$ and $B^\top  = B^\top BB^+$. This is equivalent to the characterisation of $x^\star$ above for $H = \Gamma$, $Q = \frac{1}{t}C_0^{-1}$ and $x_0=m_0$. \qedhere
\end{proof}
\end{lem}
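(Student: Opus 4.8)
The plan is to compare the explicit formula for $m(t)$ derived in Lemma~\ref{lem:explicitsoln_nonconst} with the closed-form minimiser of a generalised Tikhonov functional, and to show that the two coincide after a purely algebraic rearrangement. As a preliminary step I would record that, for symmetric positive definite weight matrices $H$ and $Q$, the functional $x\mapsto\|Ax-y\|_H^2+\|x-x_0\|_Q^2$ has the unique minimiser
\[
x^\ast=(A^\top HA+Q)^{-1}(A^\top Hy+Qx_0),
\]
obtained by differentiating in $x$ and setting the gradient to zero; the inverse exists because $Q$ is positive definite.

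Next I would take the $s=0$, $r=1$ solution $m(t)=M(t)m_0+[I-M(t)](A^\top\Gamma^{-1}A)^{+}A^\top\Gamma^{-1}\bar y(t)$, with $M(t)=(I+tC_0A^\top\Gamma^{-1}A)^{-1}$ and $\bar y(t):=\tfrac1t\int_0^t y(u)\,\d u$, and massage it into the form above. First, the Woodbury identity (equivalently, a direct push-through) gives $I-M(t)=tC_0(I+tA^\top\Gamma^{-1}AC_0)^{-1}A^\top\Gamma^{-1}A=(I+tC_0A^\top\Gamma^{-1}A)^{-1}tC_0A^\top\Gamma^{-1}A$. Second, writing $B=\Gamma^{-1/2}A$ and using the pseudoinverse identities $(B^\top B)^{+}B^\top=B^{+}$ and $B^\top=B^\top B B^{+}$, the product $(A^\top\Gamma^{-1}A)(A^\top\Gamma^{-1}A)^{+}A^\top\Gamma^{-1}$ collapses to $A^\top\Gamma^{-1}$, so the data term becomes $(I+tC_0A^\top\Gamma^{-1}A)^{-1}tC_0A^\top\Gamma^{-1}\bar y(t)$. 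Combining this with the $m_0$ term yields $m(t)=(I+tC_0A^\top\Gamma^{-1}A)^{-1}\left(tC_0A^\top\Gamma^{-1}\bar y(t)+m_0\right)$, and factoring $tC_0$ out of the matrix being inverted (and distributing the resulting $\tfrac1t C_0^{-1}$ through the parenthesised vector) turns this into
\[
m(t)=\Big(A^\top\Gamma^{-1}A+\tfrac1t C_0^{-1}\Big)^{-1}\Big(A^\top\Gamma^{-1}\bar y(t)+\tfrac1t C_0^{-1}m_0\Big),
\]
which is exactly $x^\ast$ with $H=\Gamma^{-1}$, $Q=\tfrac1t C_0^{-1}$, $x_0=m_0$, and $y$ replaced by $\bar y(t)$. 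The constant-$y$ case \eqref{eq:repl_tikhonov_constanty} is then immediate, since $\bar y(t)=y$ whenever $y(u)\equiv y$.

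I expect the only real subtlety to be the bookkeeping around the Moore--Penrose pseudoinverse: one must check that the identities used to absorb $(A^\top\Gamma^{-1}A)^{+}$ into its neighbouring factors are legitimate (they are, because $(A^\top\Gamma^{-1}A)^{+}A^\top\Gamma^{-1}=(\Gamma^{-1/2}A)^{+}\Gamma^{-1/2}$ has range contained in $\ran(A^\top)$, on which $A^\top\Gamma^{-1}A$ acts invertibly), and that the final honest inverse $\big(A^\top\Gamma^{-1}A+\tfrac1t C_0^{-1}\big)^{-1}$ genuinely exists — which it does as soon as $C_0$ is positive definite, the standing assumption on the initial covariance. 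Everything else is routine linear algebra of the kind already rehearsed in the proofs of Lemmata~\ref{lem:explicitsoln} and \ref{lem:explicitsoln_nonconst}.
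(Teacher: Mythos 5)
Your proposal follows the same route as the paper's own proof: invoke the closed-form minimiser of a generalised Tikhonov functional, then rearrange the explicit formula for $m(t)$ (via Woodbury and pseudoinverse identities) into that form. You are in fact slightly more careful than the printed proof in two small respects: you explicitly carry the time-averaged observation $\bar y(t)=\tfrac1t\int_0^t y(u)\,\d u$ through the algebra rather than specialising silently to constant $y$, and you correctly match the data-weight matrix as $H=\Gamma^{-1}$ (the paper's proof writes $H=\Gamma$, which is a typo given that the Tikhonov solution is stated with $H$, not $H^{-1}$, in the normal equations).
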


\subsection{Replication and mutation}\label{sec:repl_mut_solutions}
Now we consider the replicator-mutator equation, but we still assume that $G=0$, which corresponds to an assumption that there is a pure diffusion process driving the dynamics of the optimal feature $y(t)$.

Biologically, this corresponds to the same situation as in the preceding section, but now we have additional mutation on the trait space via a Brownian motion scaled by a covariance matrix $\Sigma$. We will see that this allows the population to track a moving optimum $y(t)$ (which was not possible for a population purely under the influence of selection, see remark \ref{rem:examples}). This mutation setup corresponds to ``blind'' or symmetric Gaussian mutation, since the drift term $G$ in \eqref{eq:repmutlin} is set to $0$.

Before we derive explicit solutions of the replicator-mutator equation we point out a connection to misspecified model filtering. Heuristically, Bayesian filtering \cite{majda2012filtering} is the task of recovering an unknown signal $x(t)$ assumed to follow a dynamical system like $\dot x(t) = Gx(t) + \Sigma^{1/2} \dot W(t)$, where $W$ is white noise, from an observation $y(t) = Ax(t) + \eps_t$, with $\eps_t\sim \mathcal N(0,\Gamma)$. This is done by modelling a filtering distribution in the space where $x(t)$ lives, by simultaneously evolving candidate reconstructions of $x$ forward in time via the signal's model dynamics, and incorporating incoming data $y(t)$ in an online way.

In some sense, evolution works in a similar way: There is a distribution of traits which is being influenced by ``data'' $y(t)$ (the optimal feature) exerting selection pressure via the fitness functional (just like the influence of the data via the likelihood function in Bayesian filtering), and simultaneous mutation (as described by the mutation noise model). Evolutionary dynamics is  performing filtering, implicitly (via its underlying mathematical model) assuming that the optimal feature is given by $y(t) = Ax(t) + \eps_t$, i.e., that there is an ephemeral hidden ``optimal trait'' $x(t)$ that would be optimally adapted to the current state of the optimal feature $y(t)$.  This connection has been discussed by several authors, e.g. \cite{Shalizi2009,Czegel2022,harper2009replicator,pathirajawacker24}.  Since individual traits are perturbed by a specific mutation model, this would mean that (again, implicitly) evolution assumes that the ``optimal trait'' $x(t)$ is also accurately modelled by a similar mutation model. This is of course not a biologically meaningful model: There is no ``optimal giraffe'' that all living giraffes are imitating in their pursuit of reproductional fitness, but fitness is rather a result of external (and possibly erratically changing) circumstances.

A particularly challenging situation is \textit{misspecified model filtering} \cite{calvet2015robust,teichner2023discrete}, where the dynamics used in the filtering algorithm is not the exact dynamics with which the true signal $x(t)$ is afflicted. This often arises due to incorrect model assumptions on the physical system governing $x(t)$. 

Biological evolution can be interpreted as a misspecified model filtering problem: There is no ``true'' optimal, time-dependent trait $x(t)$ from which the optimal feature $y(t)$ is generated, but $y(t)$ reflects external environmental factors like seasonality, climate change, influx of competing species, or other externalities. On the other hand, mutation follows a well-defined stochastic model, but this will in most cases not be compatible with the dynamics of $y(t)$. This means that the evolutionary ``filtering algorithm'' uses an internal, misspecified model.

The main issue of the following lemma will be finding an explicit solution to a certain Riccati-type equation. This is a research field with a long history, but we haven't been able to find the exact statement we need here, which is why we prove it here rather than combining various piecemeal statements to be found in papers spanning several decades, although we try to give references where possible.  To the best of our knowledge, we provide a more general closed-form solution to this specific Riccati-type equation than can be found elsewhere. In addition, while the fact that the asymptotic solution of Riccati equations have a geometric mean property, we believe that this has not been investigated in the context of the replicator-mutator equation, but it provides a beautiful geometric explanation for observable phenomena that are difficult to communicate otherwise. Concretely, the asymptotic covariance of the trait distribution consists of the geometric mean between the mutation covariance matrix (pushing the distribution away from the optimum) and the inverse ``penalising'' matrix $\Gamma$ (pulling it back in), balancing these two conflicting effects via their geometric mean.
\begin{lem}[Covariance evolution for the replicator-mutator equation]
\label{lem:repmuttime}
We consider the special case of the replicator-mutator equation with mutation kernel corresponding to pure diffusion

    \begin{align*}
    \frac{dq_t}{dt}(x) = \frac{1}{2} \nabla \cdot (\Sigma \cdot \nabla q_t) + q(t,x)\pi_{q_t}(x)
\end{align*}
which is \eqref{eq:repmutlin} for $G=0$. Let $p_0 = \mathcal N(m_0, C_0)$. In the following we set $Q := A^\top \Gamma^{-1} A$ for brevity. We assume that $\Sigma$ is symmetric and positive definite, and that $Q$ is symmetric and positive semi-definite. 
\begin{enumerate}

\item \textbf{Explicit solution for covariance.} The matrix $\Sigma Q$ is diagonalisable, and there is an invertible matrix $W$ such that $\Sigma Q W = W \Lambda^2$ for a diagonal matrix $\Lambda$ where $\Lambda = \mathrm{diag}(\mathbf 0, \Lambda_\bot)$ and $\Lambda_\bot$ contains all non-vanishing eigenvalues. The continuous-time Riccati equation for $C$ has solution $C(t) = D(t) E(t)^{-1}$ where
\begin{equation}
    \label{eq:explicit_t_general}
\begin{split}
D(t) &= W \begin{pmatrix}
            \mathbf I & \mathbf 0 \\ \mathbf 0 & \cosh(t\Lambda_\bot)
        \end{pmatrix} W^{-1}D(0) + W\begin{pmatrix}
            t\mathbf I & \mathbf 0 \\ \mathbf 0 & \Lambda_\bot^{-1}\sinh(t\Lambda_\bot)
        \end{pmatrix} W^{-1}\Sigma C(0)^{-1}D(0)\\
E(t) &=\Sigma^{-1}W\begin{pmatrix}
            \mathbf 0 & \mathbf 0 \\ \mathbf 0 & \sinh(t\Lambda_\bot)\Lambda_\bot
        \end{pmatrix}W^{-1} D(0) + \Sigma^{-1} W \begin{pmatrix}
            \mathbf I & \mathbf 0 \\ \mathbf 0 & \cosh(t\Lambda_\bot)
        \end{pmatrix} W^{-1}\Sigma C(0)^{-1}D(0)
\end{split}
\end{equation}
There are various degrees of freedom in choosing $D(0)$ and $E(0)$ such that $C(0) = D(0) E(0)^{-1}= C_0$ holds, the simplest being $D(0) = C_0$ and $E(0) = I$, but we won't need to specify this choice here.

\begin{itemize}
    \item In the special case where $Q = 0$, 
    \begin{align}\label{eq:explicit_t_Q0}
        C(t) &= C(0) + t\Sigma
    \end{align}
    \item In the special case where $Q$ is non-degenerate,
    \begin{equation}\label{eq:explicit_t_Qnondeg}
    \begin{split}
        D(t) &= W \cosh({t\Lambda}) W^{-1}D(0) + W \Lambda^{-1}\sinh({t\Lambda})W^{-1} \Sigma C(0)^{-1} D(0)\\
    E(t) &= \Sigma^{-1}W\sinh({t\Lambda})\Lambda W^{-1} D(0) + \Sigma^{-1}W\cosh({t\Lambda})W^{-1} \Sigma C(0)^{-1} D(0).
    \end{split}
    \end{equation}
Alternatively, we can write 
\begin{equation}\label{eq:explicit_t_Qnondeg2}
    \begin{split}
D(t) &= C_\infty \cdot \left[\exp(t(Q\Sigma)^{1/2}) \cdot \tfrac{1}{2}(C_\infty^{-1}C_0 - I) -  \exp(-t(Q\Sigma)^{1/2}) \cdot \tfrac{1}{2}(C_\infty^{-1}C_0 + I)\right] \\
E(t) &=\exp(t(Q\Sigma)^{1/2}) \cdot \tfrac{1}{2}(C_\infty^{-1}C_0 - I) +  \exp(-t(Q\Sigma)^{1/2}) \cdot \tfrac{1}{2}(C_\infty^{-1}C_0 + I),
    \end{split}
    \end{equation}
for the steady-state solution $C_\infty = \lim_{t\to\infty} C(t)$, the properties of which are described in the next item.
\end{itemize}
\item \textbf{Explicit solution for asymptotic covariance.} 
\begin{itemize}
    \item In the special case where $Q = 0$, $C(\infty)^{-1} = \mathbf 0$
    \item If $Q$ is non-degenerate, then the asymptotic state for $C(t)$ exists and is given by $$C_\infty = \lim_{t\to \infty} C(t) = W\Lambda^{-1} W^{-1}\Sigma = W\Lambda W^{-1} Q^{-1}.$$ 
    
    The asymptotic covariance $C_\infty$ can be shown to coincide with the geometric mean of $\Sigma$ and $Q^{-1}$ in the sense of \cite{ando2004geometric,moakher2005differential,nakamura2007geometric}, i.e. 
    $$C_\infty = G(Q^{-1},\Sigma) = Q^{-1}(Q \Sigma)^{1/2}.$$ In the same way, $C_\infty^{-1} = G(Q, \Sigma^{-1})$. 
    
    Also, $\Sigma C_\infty^{-1} = C_\infty Q = (\Sigma Q)^{1/2}$ and $Q C_\infty = C_\infty^{-1} \Sigma = (Q\Sigma)^{1/2}$. 
\end{itemize}
\end{enumerate}

\begin{rem}
Note that $Q^{-1}$ is the covariance of the best linear unbiased estimator for the minimiser $u$ of the linear inverse problem $\min_u \|A u -y\|_{\Gamma}^2$. This means that the asymptotic matrix $C_\infty$ is the geometric mean between this covariance and the evolution covariance $\Sigma$.
\end{rem}

\begin{rem}
    An explicit geometrical expression for the case of degenerate $Q$, but $Q$ not being identical to $0$, would be of interest.  We leave this for future consideration.
\end{rem}

\end{lem}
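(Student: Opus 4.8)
The plan is to solve the covariance ODE by the classical linearisation of a matrix Riccati equation. Setting $G=0$ in the moment equations of Lemma \ref{lem:moment_eq} and writing $Q := A^\top\Gamma^{-1}A$ as in the statement, the covariance obeys $\dot C(t) = \Sigma - C(t)QC(t)$ with $C(0)=C_0$, and Lemma \ref{lem:moment_eq} already guarantees a symmetric positive (semi)definite solution for all $t\ge 0$. First I would look for $C(t) = D(t)E(t)^{-1}$ and impose the linear system $\dot D = \Sigma E$, $\dot E = QD$, i.e. $\binom{D}{E}$ evolving under the block matrix $H = \left(\begin{smallmatrix}0 & \Sigma\\ Q & 0\end{smallmatrix}\right)$: then $\dot C = \dot D E^{-1} - D E^{-1}\dot E E^{-1} = \Sigma - (DE^{-1})Q D E^{-1} = \Sigma - CQC$, so any solution of the linear system with $D(0)E(0)^{-1} = C_0$ (simplest: $D(0)=C_0$, $E(0)=I$) produces the desired Riccati solution. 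Everything then reduces to computing $\exp(tH)$, and one records that $E(0) = C(0)^{-1}D(0)$, which is how \eqref{eq:explicit_t_general} eliminates $E(0)$.

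Next I would diagonalise $\Sigma Q$. Since $\Sigma\succ 0$ and $Q\succeq 0$, the matrix $\Sigma Q$ is similar (via $\Sigma^{1/2}$) to the symmetric positive semidefinite $\Sigma^{1/2}Q\Sigma^{1/2}$; orthogonally diagonalising the latter gives an invertible $W = \Sigma^{1/2}U$ with $\Sigma Q W = W\Lambda^2$, $\Lambda = \diag(\mathbf 0,\Lambda_\bot)$, $\Lambda_\bot\succ 0$ collecting the nonzero eigenvalues, and $Q\Sigma = \Sigma^{-1}(\Sigma Q)\Sigma$ is diagonalised by $\Sigma^{-1}W$. To compute $\exp(tH)$ I split into even and odd powers: $H^{2j} = \diag((\Sigma Q)^j,(Q\Sigma)^j)$ is block-diagonal and $H^{2j+1}$ is block-antidiagonal with blocks $(\Sigma Q)^j\Sigma$ and $(Q\Sigma)^j Q$. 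Substituting $(\Sigma Q)^j = W\Lambda^{2j}W^{-1}$ and $(Q\Sigma)^j = (\Sigma^{-1}W)\Lambda^{2j}(\Sigma^{-1}W)^{-1}$, the even series sum to $W\cosh(t\Lambda)W^{-1}$ (resp. conjugated by $\Sigma^{-1}W$) and the odd series to $\sinh$-type expressions — with the one subtlety that on the zero block of $\Lambda$ the series $\sum_j t^{2j+1}0^{2j}/(2j+1)!$ collapses to $t$ and $\sum_j t^{2j+1}0^{2j+2}/(2j+1)!$ to $0$, which is precisely why \eqref{eq:explicit_t_general} carries a $t\mathbf I$ block in $D(t)$ and a $\mathbf 0$ block in $E(t)$. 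Assembling the blocks yields \eqref{eq:explicit_t_general}. The case $Q=0$ is immediate directly ($\dot C=\Sigma$, so $C(t)=C_0+t\Sigma$, giving \eqref{eq:explicit_t_Q0}, and $C(t)^{-1}=(C_0+t\Sigma)^{-1}\to\mathbf 0$ since $\Sigma\succ 0$), and the nondegenerate case is \eqref{eq:explicit_t_general} with no zero block, i.e. \eqref{eq:explicit_t_Qnondeg}.

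For the asymptotics and the alternative form I would first establish the algebra of $C_\infty := W\Lambda^{-1}W^{-1}\Sigma$ (with $Q$ nondegenerate, so $\Lambda=\Lambda_\bot$ invertible): from $\Sigma Q W = W\Lambda^2$ one reads $C_\infty = W\Lambda W^{-1}Q^{-1}$, $\Sigma C_\infty^{-1} = C_\infty Q = W\Lambda W^{-1} = (\Sigma Q)^{1/2}$, and conjugating by $\Sigma$, $QC_\infty = C_\infty^{-1}\Sigma = (Q\Sigma)^{1/2}$, whence $C_\infty Q C_\infty = \Sigma$; writing $W=\Sigma^{1/2}U$ shows $C_\infty = \Sigma^{1/2}(\Sigma^{1/2}Q\Sigma^{1/2})^{-1/2}\Sigma^{1/2}$ is symmetric positive definite. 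Since $(\Sigma Q)^{1/2}=((Q\Sigma)^{1/2})^\top$ (transposes of matrices with equal spectra) and $Q^{-1}(Q\Sigma)^{1/2}=Q^{-1/2}(Q^{1/2}\Sigma Q^{1/2})^{1/2}Q^{-1/2}$ is symmetric, one gets $C_\infty = Q^{-1}(Q\Sigma)^{1/2} = G(Q^{-1},\Sigma)$, and dually $C_\infty^{-1}=G(Q,\Sigma^{-1})$. Then substituting $D = C_\infty F$ into the linear system and using $C_\infty^{-1}\Sigma = (Q\Sigma)^{1/2} = QC_\infty =: S$, it decouples into $\dot F = SE$, $\dot E = SF$; diagonalising via $F\pm E$ gives $F+E = \exp(tS)(F_0+E_0)$ and $F-E = \exp(-tS)(F_0-E_0)$, i.e. the $\cosh/\sinh$ (equivalently $\exp(\pm t(Q\Sigma)^{1/2})$) representation, and plugging in the initial data forced by $D(0)E(0)^{-1}=C_0$ produces \eqref{eq:explicit_t_Qnondeg2}. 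Since $\Lambda_\bot\succ 0$ the $\exp(-tS)$ contributions are exponentially negligible, so $C(t)=D(t)E(t)^{-1}\to C_\infty\exp(tS)(\cdot)(\cdot)^{-1}\exp(-tS) = C_\infty$, which is the asymptotic statement.

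I expect the main obstacle to be purely organisational: tracking the degenerate zero-eigenvalue block correctly when expanding $\exp(tH)$ — naively one would write $\Lambda^{-1}\sinh(t\Lambda)$, which is singular, whereas the correct entire-function block is $t\mathbf I$ on the kernel and $\Lambda_\bot^{-1}\sinh(t\Lambda_\bot)$ elsewhere — and bookkeeping the four block assemblies of $D(t)$, $E(t)$. A secondary point to address is why $E(t)$ stays invertible, so that $C(t)=D(t)E(t)^{-1}$ is genuinely defined for all $t\ge 0$: this follows because the Riccati solution exists globally and remains positive definite by Lemma \ref{lem:moment_eq}, so by uniqueness $C(t)$ cannot blow up and $E(t)$ cannot become singular. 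The remainder — functional calculus on $\Sigma Q$, the transpose/symmetry identifications with the geometric mean, and the decoupling $D=C_\infty F$ — is routine linear algebra.
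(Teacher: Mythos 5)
Your proposal is correct and follows the same overall strategy as the paper: Riccati linearisation $C=DE^{-1}$ via the block Hamiltonian $H=\left(\begin{smallmatrix}0&\Sigma\\Q&0\end{smallmatrix}\right)$, diagonalisation of $\Sigma Q$ into a kernel block and a positive block $\Lambda_\bot$, computation of $\exp(tH)$, and characterisation of $C_\infty$ via the steady-state Riccati equation and the geometric-mean identities. The differences are in the sub-arguments, and two are worth noting. First, to compute $\exp(tH)$ the paper constructs an explicit similarity $H=S\Xi S^{-1}$ with a carefully chosen $S$ built from $W_{\ke}$, $W_\bot\Lambda_\bot^{-1}$ and $V=\Sigma^{-1}W$ and reads off $\exp(t\Xi)$, whereas you split the power series into even and odd powers, using $H^{2j}=\diag((\Sigma Q)^j,(Q\Sigma)^j)$ and $H^{2j+1}$ block-antidiagonal; both correctly handle the zero-eigenvalue block (your series collapse to $t\mathbf I$ resp.\ $\mathbf 0$ is exactly where the paper's $t\mathbf I$ and $\mathbf 0$ blocks come from). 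Second, and this is a genuine improvement, your argument that $C_\infty$ is symmetric positive definite is much shorter than the paper's: you normalise $W=\Sigma^{1/2}U$ with $U$ orthogonally diagonalising $\Sigma^{1/2}Q\Sigma^{1/2}$ (a valid choice, and $W\Lambda^{-1}W^{-1}$ is invariant under the allowed rescalings of $W$), giving $C_\infty=\Sigma^{1/2}(\Sigma^{1/2}Q\Sigma^{1/2})^{-1/2}\Sigma^{1/2}$ manifestly symmetric positive definite, while the paper instead runs a Lyapunov argument ($\Lambda L+L\Lambda=0\Rightarrow L=0$ for symmetry, and a monotone integral bound on $\omega(t)=e^{-t\Lambda}Ke^{-t\Lambda}$ for positivity). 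For the alternative representation \eqref{eq:explicit_t_Qnondeg2}, your substitution $D=C_\infty F$ and the decoupling of $F\pm E$ is algebraically identical to the paper's similarity transform by $\left(\begin{smallmatrix}C_\infty&-C_\infty\\I&I\end{smallmatrix}\right)$, and your formula in fact matches the paper's derivation inside the proof (the version printed in the lemma's display has a sign slip in the coefficient matrices relative to the derivation, but this is a typo in the paper, not a gap in your argument). Your remark about invertibility of $E(t)$ — linking it to global existence of the positive definite Riccati solution guaranteed by Lemma \ref{lem:moment_eq} — is the right reason, and worth spelling out since the paper leaves it implicit.
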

\begin{rem}
    It is likely possible to derive a closed-form solution for the covariance matrix for $Q$ being symmetric and positive semi-definite (rather than positive definite), as well as possibly degenerate mutation covariance $\Sigma$, but the technicalities involved with non-commutativity of $C_0, \Sigma,$ and $\Gamma$ and the necessary decomposition of trait space into degenerate and orthogonal components of $\Sigma Q$ make this issue too non-trivial for the scope of this manuscript and we defer this to future work.
\end{rem}
Having found an explicit solution for the covariance, we now give some details on the evolution of the mean of the population under the influence of a replicator-mutator equation. For mathematical simplicity, we will assume that $C_0 = C_\infty$, which is to be understood to mean that the population has already had some time to adapt such that the covariance of the population has stabilised at the asymptotic covariance. In practice (and in numerical simulations) we see that the statements we prove below hold in the more general case where $C_0 \neq C_\infty$ (with the same asymptotics appearing after an initial ``burn-in time''). This means that the formulae for $m$ should be understood to in principle (or ``in spirit'') hold for arbitrary $C_0$.

\begin{lem}[Mean evolution for the replicator-mutator equation]
\label{lem:repmuttime_mean}
We consider the setting of lemma \ref{lem:repmuttime}  Let $q_0 = P_0\cdot \mathcal N(m_0, C_0)$. In the following we set $Q := A^\top \Gamma^{-1} A$ for brevity. We assume that $\Sigma$ is symmetric and positive definite, and that $Q$ is symmetric and positive semi-definite.  Set $\alpha := 1-s$.
If $C_0 = C_\infty$, then
\begin{align*}
    \dot m &= -  \alpha(\Sigma Q)^{1/2}\left(m - (\Gamma^{-1/2}A)^+\Gamma^{-1/2}y(t)\right)
\end{align*}
and thus, abbreviating $A^- = (\Gamma^{-1/2}A)^+\Gamma^{-1/2}$,
\begin{align} \label{eq:steadystate_m_2}
    m(t) &= e^{-\alpha(\Sigma Q)^{1/2}t}\left[\alpha(\Sigma Q)^{1/2}\int_0^t e^{\alpha(\Sigma Q)^{1/2}u} A^- y(u)\d u  + m(0)\right]
\end{align}
and
\begin{align}\label{eq:steadystate_m}
    m(t) - A^- y(t) = e^{-\alpha(\Sigma Q)^{1/2}t}\left[\int_0^t e^{\alpha(\Sigma Q)^{1/2}u}A^{-} \frac{\d y(u)}{\d u}\d u + \left(m(0) - A^- y(0)\right)\right].
\end{align}
\begin{proof}

We set $(\Gamma^{-1/2}A)^+\Gamma^{-1/2} =: A^-$ (and it indeed behaves like a pseudo-inverse). Assuming that $C_0 = C_\infty$ allows us to use the fact that the steady state of the covariance equation is reached, i.e.,
\begin{align*}
    \dot m &= -(1-s)C_\infty A^\top \Gamma^{-1}Am + (1-s)C_\infty A^\top\Gamma^{-1}y \\
     &= - (1-s)C_\infty A^\top \Gamma^{-1}AC_\infty  \cdot C_\infty ^{-1}m + (1-s)C_\infty (A^\top \Gamma^{-1}A)(\Gamma^{-1/2}A)^+\Gamma^{-1/2}y\\
     &= - \alpha C_\infty  QC_\infty  \cdot \left(C_\infty ^{-1}m - C_\infty ^{-1}(\Gamma^{-1/2}A)^+\Gamma^{-1/2}y\right)\\
     &= - \alpha\Sigma C_\infty ^{-1}\left(m - (\Gamma^{-1/2}A)^+\Gamma^{-1/2}y\right) \\
     &= -  \alpha(\Sigma Q)^{1/2} \left(m - (\Gamma^{-1/2}A)^+\Gamma^{-1/2}y\right)
\end{align*}
A standard variation of constants argument shows characterisation \eqref{eq:steadystate_m_2}. Defining $d(t) := m(t) - {A^-}y(t)$, the above equation is equivalent to
\begin{align*}
    \dot d(t) &= -\alpha(\Sigma Q)^{1/2} d(t) - A^-\dot y(t)
\end{align*}
from which \eqref{eq:steadystate_m} follows by again by a standard variation of constants computation.
\end{proof}
\end{lem}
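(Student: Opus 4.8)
The plan is to start from the mean equation in the moment system \eqref{eq:moment_ODEs}, which for $G=0$ reads $\dot m = -(1-s)\,C(t)\,A^\top\Gamma^{-1}\bigl(Am(t)-y(t)\bigr)$, and to use the hypothesis $C_0=C_\infty$ to reduce it to a \emph{linear}, time-inhomogeneous ODE. First I would observe that under this hypothesis the covariance is frozen at its steady state, $C(t)\equiv C_\infty$ for all $t\ge 0$: with $G=0$ the Riccati equation of Lemma \ref{lem:repmuttime} is $\dot C = \Sigma - CQC$ (with $Q=A^\top\Gamma^{-1}A$), and $C_\infty$ is a stationary point of it because $C_\infty Q C_\infty = \Sigma$, which follows on combining the two identities $C_\infty Q = (\Sigma Q)^{1/2}$ and $\Sigma C_\infty^{-1} = (\Sigma Q)^{1/2}$ recorded at the end of Lemma \ref{lem:repmuttime}. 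Uniqueness for the matrix ODE then forces $C(t)\equiv C_\infty$.

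\textbf{Simplifying the coefficients.} Substituting $C\equiv C_\infty$ and writing $\alpha=1-s$, the drift becomes $-\alpha\,C_\infty Q\, m = -\alpha(\Sigma Q)^{1/2}m$, again using $C_\infty Q = (\Sigma Q)^{1/2}$. For the forcing term I would invoke the Moore--Penrose identity $B^\top = B^\top B\,B^+$ with $B:=\Gamma^{-1/2}A$, which gives $A^\top\Gamma^{-1} = Q\,(\Gamma^{-1/2}A)^+\Gamma^{-1/2} = QA^-$, hence $C_\infty A^\top\Gamma^{-1}y = C_\infty Q\,A^- y = (\Sigma Q)^{1/2}A^- y$. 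Collecting the two pieces yields exactly $\dot m = -\alpha(\Sigma Q)^{1/2}\bigl(m-A^- y(t)\bigr)$, the first displayed equation. (It is worth noting in passing that $(\Sigma Q)^{1/2}$ is unambiguous: $\Sigma Q$ is conjugate, via $\Sigma^{1/2}$, to the symmetric positive semidefinite matrix $\Sigma^{1/2}Q\Sigma^{1/2}$, so it has nonnegative real spectrum and a canonical square root, which commutes with its own exponential.)

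\textbf{Integration.} The two closed-form expressions then follow from the standard variation-of-constants / integrating-factor computation. For \eqref{eq:steadystate_m_2} I would multiply the ODE by $e^{\alpha(\Sigma Q)^{1/2}t}$, note that the left side becomes $\frac{\d}{\d t}\bigl(e^{\alpha(\Sigma Q)^{1/2}t}m(t)\bigr)$ while the right side is $e^{\alpha(\Sigma Q)^{1/2}t}\,\alpha(\Sigma Q)^{1/2}A^- y(t)$, integrate over $[0,t]$, and multiply back by $e^{-\alpha(\Sigma Q)^{1/2}t}$. For \eqref{eq:steadystate_m} I would set $d(t):=m(t)-A^- y(t)$; since $A^-$ is constant, $\dot d = -\alpha(\Sigma Q)^{1/2}d - A^-\dot y$, and the same integrating-factor argument applied to $d$ gives the claimed identity.

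\textbf{Main obstacle.} I expect the only real work to be the first step: proving $C(t)\equiv C_\infty$ and assembling the algebraic identities linking $C_\infty$, $Q$, $\Sigma$ and $A^-$ (above all $C_\infty Q C_\infty = \Sigma$ and $A^\top\Gamma^{-1} = QA^-$), all of which ride on Lemma \ref{lem:repmuttime} and the pseudoinverse calculus; once these are in place everything else is routine linear-ODE bookkeeping. A minor caveat is that the hypothesis $C_0=C_\infty$ already presupposes that $C_\infty$ exists, i.e.\ (by Lemma \ref{lem:repmuttime}) that $Q$ is non-degenerate, or else one must read the steady state through the $W,\Lambda$ block decomposition; I would flag this without belabouring it. This restriction is precisely what buys the clean closed form: for general $C_0$ one would have to carry the full time-dependent $C(t)$ of Lemma \ref{lem:repmuttime} through the integrating factor, which does not simplify.
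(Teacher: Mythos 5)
Your proposal is correct and follows essentially the same route as the paper: freeze $C(t)\equiv C_\infty$ using the steady-state Riccati relation $C_\infty Q C_\infty=\Sigma$, rewrite $A^\top\Gamma^{-1}$ as $QA^-$ via the pseudoinverse identity $B^\top=B^\top B B^+$, collapse the coefficient to $(\Sigma Q)^{1/2}$ via $C_\infty Q=\Sigma C_\infty^{-1}=(\Sigma Q)^{1/2}$, and then integrate by variation of constants (once for $m$, once for $d=m-A^-y$). The only cosmetic difference is that you use $C_\infty Q=(\Sigma Q)^{1/2}$ directly while the paper inserts $C_\infty C_\infty^{-1}$ and passes through $\Sigma C_\infty^{-1}$; the explicit remark that $C_0=C_\infty$ presupposes non-degenerate $Q$ (or the block decomposition) is a sensible caveat that the paper leaves implicit.
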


\begin{rem} \label{rem:locallyweighted}
\begin{itemize}
    \item Note that \eqref{eq:steadystate_m_2} and \eqref{eq:steadystate_m} indeed have $y(u)$ and $\frac{\d y(u)}{\d u}$ in the integrand, respectively.
    \item We can re-interpret \eqref{eq:steadystate_m_2} as a convolution
    \begin{align*}
        m(t)&= K_t\star (A^- y) + e^{-\alpha(\Sigma Q)^{1/2}t}m(0)
    \end{align*}
    with the asymmetric exponential kernel $K(u) = \alpha(\Sigma Q)^{1/2}\exp(-\alpha(\Sigma Q)^{1/2}u)\chi_{[0,t]}(u)$. This means that $m(t)$ performs a smoothing of the data $y(t)$, weighting the most recent $y(s)$ for $s\approx t$ more strongly than more historical data. For finite $t$, $K_t$ is not normalised since $\int_0^t K(u) < 1$, but we can rewrite
    \begin{align*}
        m(t)&= (I - e^{-\alpha(\Sigma Q)^{1/2}t}) \tilde K_t\star (A^- y) + e^{-\alpha(\Sigma Q)^{1/2}t}m(0)
    \end{align*}
    for a re-weighted kernel $\tilde K_t = (I - e^{-\alpha(\Sigma Q)^{1/2}t})^{-1}K_t$, which shows that $m(t)$ performs a weighted interpolation between a convolution term taking into account data $y(t)$ (with more emphasis on most recent $y(u)$, $u\approx t$), and its initial value. The dependence on the initial value drops off exponentially in time, i.e., asymptotically, the convoluted data term prevails.
    \item Equation \eqref{eq:steadystate_m} shows that the residual $m(t) - A^-y(t)$ is governed by a feedback-type equation featuring a similar convolution term taking into account the trajectory of $y(t)$ in a Riemann-Stieltjes-type of integral term.
\end{itemize}

\end{rem}

\begin{cor}\label{cor:expl_special}
    We consider the setting of Lemma \ref{lem:repmuttime}. The following special cases are of particular interest.
    \begin{enumerate}
        \item If $Q = A^\top \Gamma^{-1}A$ and $\Sigma$ commute, then $Q$ and $\Sigma$ are jointly diagonalisable with $QW = W\Lambda_1$, $\Sigma W = W\Lambda_2$ where $\Lambda^2 = \Lambda_1\Lambda_2$. Also, $C_\infty = W \diag(\sqrt{\lambda_1/\lambda_2}) W^{-1}$. This illustrates the more generally valid fact that $C_\infty$ is the geometric mean of $Q^{-1}$ and $\Sigma$.
        \item If $\Sigma = \sigma^2 I$, and in particular the preceding item holds, $QW = W\Lambda_1$ and $\Sigma W = W \sigma^2$. Furthermore, $C_\infty = \sigma Q^{-1/2}$.
        \item In the one-dimensional case, setting $A = 1$ for simplicity, with $\Gamma = \gamma^2$, $Q =  \frac{1}{\gamma^2}$, and $\Sigma = \sigma^2$, we have $C_\infty = \sigma\gamma$ and in the asymptotic limit, 
        \begin{equation}\label{eq:mean_1d_repmut}
            m(t) = e^{-(1-s) \frac{\sigma}{\gamma}t} \left[(1-s) \frac{\sigma}{\gamma}\int_0^t e^{(1-s)\frac{\sigma}{\gamma}u} y(u)\d u + m(0)\right]
        \end{equation}
        and
        \begin{equation}\label{eq:mass_1d_repmut}
            \dot P(t) = - \frac{1-s}{\gamma^2}(m(t) - y(t))^2 - \frac{\sigma}{\gamma} + K
        \end{equation}
        \item If $\Sigma$ is non-degenerate and $y(t) = y$, i.e. $y$ is constant, then $\lim_{t\to\infty} \|Am(t)-y\|_\Gamma^2 =  \|(I - AA^-)y\|_\Gamma^2$, which can be geometrically interpreted as 
        \[ \|(I - AA^-)y\|_\Gamma^2 = \|(I - (\Gamma^{-1/2}A)^+(\Gamma^{-1/2}A)) \cdot \Gamma^{-1/2}y\|^2 = \|\Pi_{\ran(\Gamma^{-1/2}A)^\bot}  \Gamma^{-1/2}y\|^2,\]
        the norm of the orthogonal projection of $\Gamma^{-1/2}y$ onto the orthogonal complement of $\ran(\Gamma^{-1/2}A)$.

        In particular, if $y \in\ran(\Gamma^{-1/2}A)$, then
        \begin{equation}
            \lim_{t\to\infty}  \|Am(t)-y\|_\Gamma^2 = 0
        \end{equation}
        and, asymptotically, the total population size follows an exponential growth or decay with rate
        \begin{equation}
            \frac{\dot P(t)}{P(t)} =  -\|\Pi_{\ran(\Gamma^{-1/2}A)^\bot}  \Gamma^{-1/2}y\|^2  -\tr [(Q\Sigma)^{1/2}] + K
        \end{equation}

    \end{enumerate}
    \begin{proof}
        The first statements follow by simple evaluations, but the last statement is true because from \eqref{eq:steadystate_m},
        \begin{align*}
            Am(t) - y &= A(m(t) - A^-y) - (I - AA^-)y \\
            &= Ae^{-\alpha(\Sigma Q)^{1/2}t}\left[\int_0^t e^{\alpha(\Sigma Q)^{1/2}u}A^{-} \frac{\d y(u)}{\d u}\d u + \left(m(0) - A^- y(0)\right)\right] - (I - AA^-)y\\
            &= Ae^{-\alpha(\Sigma Q)^{1/2}t} \left(m(0) - A^- y(0)\right) - (I - AA^-)y,
        \end{align*}
        so, using the fact that $(I - AA^-)$ is a projector onto the $\Gamma$-orthogonal complementary subspace of $\ran(A)$,
        \begin{align*}
            \|Am(t)-y\|_\Gamma^2 &= \|\Gamma^{-1/2}Ae^{-\alpha(\Sigma Q)^{1/2}t} \left(m(0) - A^- y(0)\right)\|^2 + \|(I - AA^-)y\|_\Gamma^2 + 0\\
            &\xrightarrow{t\to\infty} \|(I - AA^-)y\|_\Gamma^2 ,
        \end{align*}
        since $\Sigma Q$ was assumed to be non-degenerate and $\Gamma^{-1/2}Ae^{-\alpha(\Sigma Q)^{1/2}t} v\xrightarrow{t\to\infty} 0$ for any $v$: We can decompose $v = w^\bot + \sum_i \beta_i w_i$ where $\Sigma Q w^\bot = 0$ and $\Sigma Q w_i = \lambda_i w_i$ for $\lambda_i$. The components with non-vanishing eigenvalues decay exponentially for $t\to\infty$ since $e^{-\alpha (\Sigma Q)^{1/2})t}w_i = e^{-\alpha \sqrt{\lambda_i}t}w_i\to 0$. The orthogonal component $ w^\bot$ is being removed, as well: Since $e^{-\alpha (\Sigma Q)^{1/2})t}w^\bot = e^{-\alpha \cdot 0 \cdot t}w^\bot = w^\bot$,
        \begin{align*}
            \Gamma^{-1/2}Ae^{-\alpha(\Sigma Q)^{1/2}t}w^\bot = \Gamma^{-1/2}A w^\bot = 0,
        \end{align*}
        because $Q = A^\top \Gamma^{-1}A$ and for any matrix $B$, $\ke(B^\top B) = \ke B$.
    \end{proof}
\end{cor}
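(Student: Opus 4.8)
The plan is to obtain items 1--3 as direct specialisations of Lemma \ref{lem:repmuttime} and the moment ODEs \eqref{eq:moment_ODEs}, and to reserve the real work for item 4.

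For item 1 I would use that two symmetric positive semi-definite matrices which commute are simultaneously diagonalisable, so a single invertible $W$ satisfies $QW = W\Lambda_1$ and $\Sigma W = W\Lambda_2$; then $\Sigma Q W = W\Lambda_2\Lambda_1$, so the matrix $\Lambda$ of Lemma \ref{lem:repmuttime} obeys $\Lambda^2 = \Lambda_1\Lambda_2$, and substituting into $C_\infty = W\Lambda^{-1}W^{-1}\Sigma$ yields a matrix diagonal in the basis $W$ with entries $\sqrt{\lambda_2/\lambda_1}$, i.e. the entrywise geometric mean of the eigenvalues of $Q^{-1}$ and $\Sigma$. Item 2 is the case $\Lambda_2 = \sigma^2 I$, where this collapses to $C_\infty = \sigma W\Lambda_1^{-1/2}W^{-1} = \sigma Q^{-1/2}$, and item 3 is the scalar instance $A=1$, $\Gamma = \gamma^2$, so $Q = \gamma^{-2}$, $(\Sigma Q)^{1/2} = \sigma/\gamma$, $C_\infty = \sigma\gamma$, $A^- = 1$; plugging these into \eqref{eq:steadystate_m_2} gives \eqref{eq:mean_1d_repmut}, and into the $P$-line of \eqref{eq:moment_ODEs} (using $\tr[A^\top\Gamma^{-1}AC_\infty] = \sigma/\gamma$) gives \eqref{eq:mass_1d_repmut}.

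For item 4 I would begin from the residual identity \eqref{eq:steadystate_m}: with $y(t)\equiv y$ the integrand $\d y/\d u$ vanishes, so $m(t) - A^- y = e^{-\alpha(\Sigma Q)^{1/2}t}(m(0) - A^- y)$. Writing $Am(t) - y = A(m(t) - A^- y) - (I - AA^-)y$ and noting that $\Gamma^{-1/2}(I - AA^-)\Gamma^{1/2}$ is the orthogonal projector onto $\ran(\Gamma^{-1/2}A)^\bot$, the two summands are $\Gamma$-orthogonal, so $\|Am(t)-y\|_\Gamma^2 = \|\Gamma^{-1/2}Ae^{-\alpha(\Sigma Q)^{1/2}t}(m(0)-A^-y)\|^2 + \|(I-AA^-)y\|_\Gamma^2$; the claimed geometric reformulation of the last term is merely a restatement of this projector, and it vanishes exactly when $\Gamma^{-1/2}y\in\ran(\Gamma^{-1/2}A)$. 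The asymptotic exponential rate then follows by feeding $\|Am(t)-y\|_\Gamma^2\to 0$ together with $\tr[A^\top\Gamma^{-1}AC(t)]\to\tr[QC_\infty] = \tr[(Q\Sigma)^{1/2}]$ (the last identity taken from Lemma \ref{lem:repmuttime}) into the $P$-equation of \eqref{eq:moment_ODEs}.

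The one genuinely non-routine step is showing that the $A$-image term decays, i.e. $\Gamma^{-1/2}A e^{-\alpha(\Sigma Q)^{1/2}t} v \to 0$ for all $v$, which is delicate precisely because $\Sigma Q$ is only positive semi-definite and may have a nontrivial kernel. I would decompose $v = v^\bot + \sum_i \beta_i w_i$ along the eigendecomposition of $\Sigma Q$, with $\Sigma Q v^\bot = 0$ and $\Sigma Q w_i = \lambda_i w_i$, $\lambda_i > 0$: the $w_i$-components are damped at rate $e^{-\alpha\sqrt{\lambda_i}t}$, while on $v^\bot$ the exponential acts as the identity but $\Gamma^{-1/2}A v^\bot = 0$, since $\Sigma$ invertible gives $\ke(\Sigma Q) = \ke Q = \ke(\Gamma^{-1/2}A)$ (using $\ke(B^\top B) = \ke B$ for $B = \Gamma^{-1/2}A$). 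Hence the whole $A$-image term vanishes and $\lim_{t\to\infty}\|Am(t)-y\|_\Gamma^2 = \|(I-AA^-)y\|_\Gamma^2$, as claimed.
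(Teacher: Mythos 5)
Your argument is correct and follows the same route as the paper's own proof: items 1--3 by direct substitution into Lemma~\ref{lem:repmuttime} and the moment ODEs, and item 4 via \eqref{eq:steadystate_m} with $\dot y \equiv 0$, the $\Gamma$-orthogonal split $Am(t)-y = A(m(t)-A^-y)-(I-AA^-)y$, and the eigendecomposition of $\Sigma Q$ together with $\ke(B^\top B)=\ke B$ to kill the $A$-image term. One point worth flagging: your computation in item 1 gives $C_\infty = W\,\diag(\sqrt{\lambda_2/\lambda_1})\,W^{-1}$, whereas the corollary as stated writes $\diag(\sqrt{\lambda_1/\lambda_2})$. Your value is the right one --- it is the geometric mean of $Q^{-1}$ (eigenvalues $1/\lambda_1$) and $\Sigma$ (eigenvalues $\lambda_2$), and it is consistent with item 3, where $\lambda_1 = 1/\gamma^2$, $\lambda_2 = \sigma^2$ gives $\sqrt{\lambda_2/\lambda_1} = \sigma\gamma$, not $1/(\sigma\gamma)$ --- so the paper's item 1 has a typo with the ratio inverted. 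Your phrasing ``$\Sigma$ invertible gives $\ke(\Sigma Q) = \ke Q = \ke(\Gamma^{-1/2}A)$'' is also a cleaner justification for why the kernel component is annihilated than the paper's remark that ``$\Sigma Q$ was assumed to be non-degenerate'' (which contradicts the paper's own decomposition into kernel and non-kernel parts and is not what item 4 actually assumes).
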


\section{Interesting phenomena and numerical examples }\label{sec:examples}

The preceding theoretical statements allow us to investigate a series of interesting phenomena which have to a large degree been described either heuristically, derived from observations, on experimental data, or using in-silico particle-based simulations. By virtue of the fact that we have in some cases explicit solutions for the (asymptotic) behavior of mean, covariance, and total mass (i.e., population size) of the population over time, we can directly see the influence of phenomena like the flying kite effect, or the fixed lag property (described further down below) in the structure of the mathematical solution. We illustrate those effects with simulations.

While these effects are are only provably present in the mathematically idealised scenarios of Gaussian initial population, and quadratic fitness function, the qualitative behaviour of scenarios deviating from these restrictions can exhibit to some extent similar features. 

We start by recapitulating the mathematical model of evolution which we are going to be adopting from here on. The Euclidean space $X=\R^d$ models a space of traits (or, alternatively, species). This means, two elements $x,z\in X$ could model two individuals with a slightly differing genome, or different phenotype.

We further assume that there is an external reference ``feature'' $y(t)\in Y=\R^k$ possibly depending on time, which represents an optimal summary of traits in the sense that corresponding evolutionary fitness is maximised, and a coupling $A:X\to Y$ which translates any trait $x\in X$ into a feature $A(x)\in Y$. 

Next, a symmetric and positive definite penalty matrix $\Gamma \in \R^{k\times k}$ quantifies selection pressure: We define the misfit function
\[\Phi(x) = \frac{1}{2}\|A(x) - y(t)\|_\Gamma^2 ,\]
which describes the exact (quadratic) way in which deviating from the optimal feature (which would satisfy $Ax\approx y$) incurs selection pressure: If $\Gamma$ is ``small'' (in the sense of its eigenvalues), then even small deviations from the optimum means large reduction in fitness. 

This allows us to define a fitness function of $x$ via $\pi(x) = -\Phi(x) + K$. Alternatively, we can also define a fitness payoff function for an individual of trait $x$ impacted by the presence of an individual of trait $z$ via $f(x,z) = -\frac{s}{2} \|Ax-Az\|_\Gamma^2 - \frac{1-s}{2} \|Ax-y(t)\|_\Gamma^2- \frac{1-s}{2} \|Az-y(t)\|_\Gamma^2 + K$, which translates (as described in previous sections) into a fitness function $\pi_{q_t}(x) = \frac{\int f(x,z)q_t(z)\d z}{\int q(t,z)\d z}$ of trait $x$ in an environment shared with the full population $q_t$.

Let $p_0$ be an initial population of traits, modelled as an unnormalised probability measure, i.e. there is a $P_0 > 0$ such that $p_0/P_0$ is a probability measure. In all following examples we will assume that $p_0$ is an unnormalised multivariate Gaussian distribution on trait space.

Evolutionary dynamics acts on the population in the form of the unnormalised replicator-mutator equation with drift-neutral mutation (which is \eqref{eq:repmutlin} with $G=0$),
\begin{align}
    \label{eq:repmutlin_nodrift}
    \partial_t q(t,x) =  \frac{1}{2}  \nabla \cdot (\Sigma \nabla q(t,x))  +  q(t,x)\pi_{q_t}(x)  
\end{align}
The first term models the effect of random mutation, the second term applying selection pressure.

This setting fulfills the assumptions of lemmata \ref{lem:repmuttime} and \ref{lem:repmuttime_mean}, so we can explicitly and completely characterise both intermediate and asymptotic evolution of the unnormalised Gaussian via mean $m(t)$, covariance $C(t)$, and total mass $P(t)$.

\begin{example}\label{ex:giraffe}
    As an (admittedly very naive, but hopefully illustrative) example, we consider giraffes, assuming we are only modelling their neck length $x_1\in \R$ and their leg length $x_2\in \R$. Modelled in a joint trait space, each individual giraffe (with a specific neck length $x_1$ and leg length $x_2$) is represented by a point $x\in \R^2$. We will keep coming back to this example in order to illustrate all relevant mathematical objects.
    
    Continuing, let $A_\text{tot}: X\to \R = Y$ be the mapping $A_\text{tot}(x_1,x_2) = x_1+x_2$ which is a surrogate for total body height of the giraffe represented by neck length $x_1$ and leg length $x_2$. If $y_\text{tot}(t)\in\R$ is the optimal feature, then giraffes with any combination of neck and leg length such that $x_1+x_2\approx y_\text{tot}(t)$ would be considered ``fittest''. The penalty matrix $\Gamma_\text{tot} \in \R$ is then just a positive number $\Gamma= \gamma^2$, and fitness is described by the misfit function
    \[ \Phi_\text{tot}(x) = \frac{1}{2\gamma^2}(x_1 + x_2 - y(t))^2.\]
    Depending on the value of $\gamma$, deviation of total body length from the optimum $y(t)$ is then more or less strongly penalised.
    
    As a second example for a fitness function, let $A_\text{nl}: X\to \R^2 = Y$ be the identity mapping $A_\text{nl}(x_1,x_2) = (x_1,x_2)$. This means that $y_\text{nl}(t)\in\R^2$ now corresponds to an optimal neck and leg length pair $y_\text{nl}(t) = (y_1(t), y_2(t))$. If we choose 
    \[\Gamma_\text{nl} = \frac{1}{2}\begin{pmatrix}
        1+\eps & -1+\eps\\ -1+\eps & 1+\eps
    \end{pmatrix}\in \R^2\] 
    as a penalty matrix, then the misfit function is 
    \[\Phi_\text{nl}(x) = \frac{1}{2}\|y(t) - x\|_\Gamma^2.\]
    If $\eps >0$ is reasonably small, then the effect of this misfit function is that perturbation in the direction $(1,1)^\top$ (i.e., changing total body length) are punished relatively quickly while small perturbations in the direction $(1,-1)^\top$ (i.e., conserving total body length) are (only to some extent) tolerated. This means that the fitness-relevant feature of a giraffe is primarily adaptation of total body length to some optimal reference, and secondarily limited deviation from both reference neck and leg length individually.
    Of course this completely ignores the fact that neck length and leg length tend to be strongly positively correlated in reality, as evident from empirical data \cite{cavener2023giraffe}. Both misfit functions are visualised in the top row of figure \ref{fig:misfit_giraffe}, together with the initial population described next.
    
     We set $p_0 = \mathcal N(m_0,C_0)$ for $m_0 = (1,2)^\top$ and 
     \[C_0 = \begin{pmatrix}
         0.15 & 0.05\\ 0.05 & 0.05
     \end{pmatrix}.\]
    
    Figure \ref{fig:misfit_giraffe} shows the long-time behavior of the volution of the replicator-mutator equation in the two settings for the specific choice of an anisotropic mutation term $\Sigma = \diag(0.8, 0.05)$, i.e., with stronger mutation along $x_1$ (neck length) than $x_2$ (leg length). 
    
    In the first example, selection pressure only acts orthogonal to the dark shaded structure (which we'll call \textit{manifold of indifference}), and mutation freely diffuses the population along it. In the asymptotic limit does the population converge to a degenerate Gaussian with infinite variance along the manifold of indifference, but with equilibrium variance orthogonal to it. This equilibrium variance arises as the compromise between mutation driving the population away from the manifold, and selection pulling it in. Mathematically, the precision of the population converges to the precision of the degenerate Gaussian $C_\infty^{-1} = \Sigma^{-1}(\Sigma Q)^{1/2}$, which is the geometric mean of the mutation precision $\Sigma^{-1}$ and the matrix $Q = A^\top\Gamma^{-1}A$, as described by lemma \ref{lem:repmuttime}. 
    
    In the second example, the mean converges to the minimum of the misfit function, and the asymptotic covariance is given by the geometric mean $C_\infty = \Sigma(\Sigma^{-1}Q^{-1})^{1/2}$, which, because $A=I$, is in this case given by $C_\infty = \Sigma(\Sigma^{-1}\Gamma^{-1})^{1/2}$. In particular does the asymptotic covariance not align with the inverse penalty matrix $\Gamma^{-1}$, nor with the mutation covariance $\Sigma$, but rather (again) with the geometric mean between them.
\end{example}

\begin{figure}
    \centering
    \includegraphics[width=0.5\linewidth]{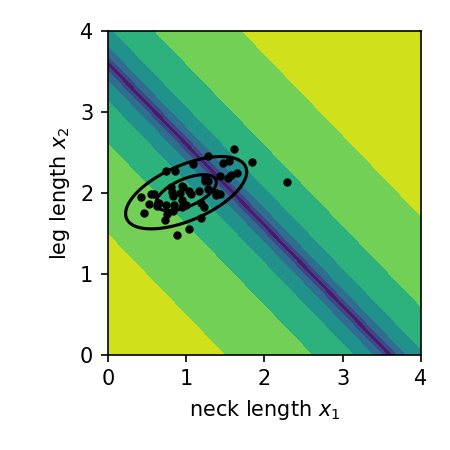}%
    \includegraphics[width=0.5\linewidth]{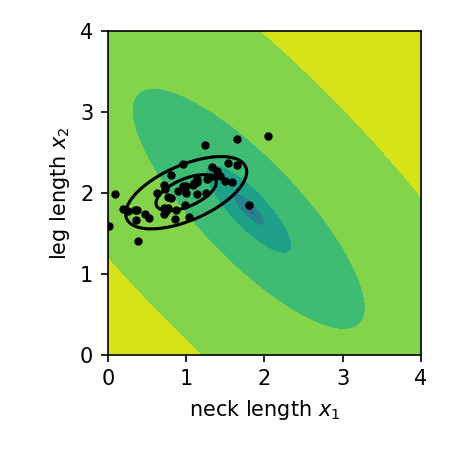}\\
    \includegraphics[width=0.5\linewidth]{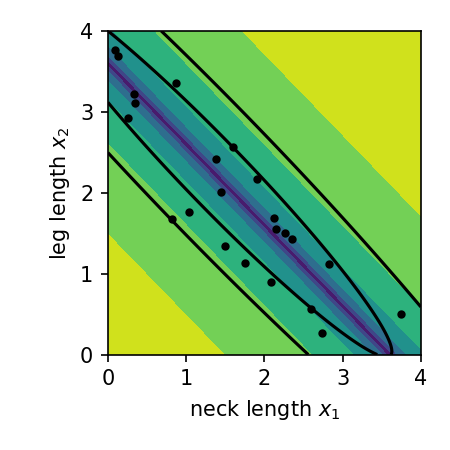}%
    \includegraphics[width=0.5\linewidth]{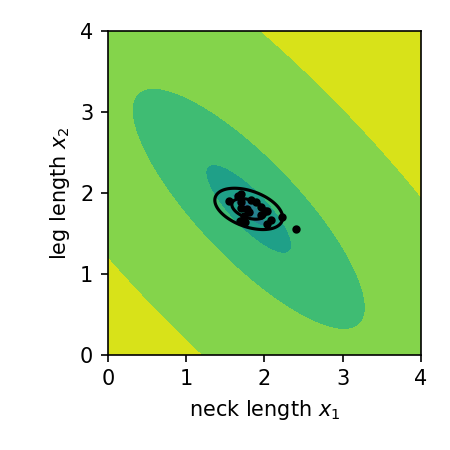}
    \caption{Misfit functions from example \ref{ex:giraffe}. Left column features a misfit $\Phi_\text{tot}$ that penalises deviation from absolute body length $x_1+x_2$ from an optimal value of $y(t) = 3.6m$. The right column considers the misfit $\Phi_\text{nl}$, an ideal neck-leg configuration $y(t) = (1.8m,1.8m)^\top$, with body-length conserving deviations (parallel to $(1,-1)$) less harshly penalised than body-length altering deviations (parallel to $(1,1)$). Top row: Both fitness landscapes are shown with a superimposed sample population (as an empirical measure) and contours of the initial population's density function. Bottom left: For $t\to\infty$, population mutates along direction of no selection pressure $(1,-1)^\top$, and keeps an equilibrium distance orthogonal to it $(1,1)^\top$ such that mutation and selection are balanced. Bottom right: Population concentrates around optimum, but equilibrates with a covariance $C_\infty$ given by the geometric mean between $Q^{-1}$ and  mutation covariance $\Sigma$.}
    \label{fig:misfit_giraffe}
\end{figure}

\subsection{The flying kite effect}\label{sec:flying}
The flying kite effect means that adaptation (in the sense of evolution of the the mean phenotype of a population) does not happen along the straightest line towards the point of optimal fitness, but is biased towards directions of larger genotype variation within the population. One of the first descriptions of this effect is \cite[eq. (8b)]{lande1979quantitative}, and described in detail as dynamics along ``genetic lines of least resistance'' in \cite{schluter1996adaptive}.  The term ``flying kite effect'' seems to have been coined by \cite{jones2004evolution} who demonstrated this behavior in the context of a moving optimum in the fitness landscape for a two-dimensional example, simulated with an agent-based model described in \cite{burger1994distribution,jones2003stability}. The computations were based on generations modelled discretely in time, which means that the mathematical model was similar in spirit to the discrete-continuous replicator-mutator equation. More recently, \cite{chevin2013genetic,kopp_rapid_2014} gave a comprehensive treatment on behavior of the discrete-continuous replicator-mutator equation in a variety of settings (in particular, for optima moving in time, changing abruptly, or fluctuating randomly with varying degrees of autocorrelation), with simulations acquired by ``adaptive walks'' as in \cite{kopp2018phenotypic}. 

To the best of our knowledge, the nature of the flying kite effect has not been studied in the time-continuous setting so far. For this reason, we show here how the replicator-mutator equation exhibits qualitatively similar behavior and how we can explicitly quantify behavior in the continuous-time setting. By choosing a suitably fine time-discretisation, these results will be more or less applicable to the discrete-time setting, too.

We give an example to illustrate the effect on a simple two-dimensional example: Let $A = \mathbf{I}\in \R^{2\times 2}$, and consider an initial population $p_0 = \mathcal N(m_0,C_0)$ with $m_0 = \mathbf{0}\in \R^{2}$ and $C_0=\begin{pmatrix}
    8.5&7.5\\7.5&8.5
\end{pmatrix}$ which corresponds to a multivariate Gaussian with strong correlation between the two traits, and is illustrated by the largest (blue) ellipse in figure \ref{fig:flying_kite}. If we set $y = Au_\text{opt} = (10,0)^\top$, which corresponds (due the form of $A$) to an optimal trait $u_\text{opt} = (10,0)^\top$, we can observe how the replicator-mutator equation \eqref{eq:ccre_norm} models the population's adaptation to this optimal trait. Figure \ref{fig:flying_kite} shows the evolution of this population in time, with means and covariances visualised at times $t\in\{0,0.1,1,3\}$ towards the optimum $u_\text{opt} = A^-y = (10,0)^\top$. It is apparent that the population (and in particular its mean, with its trajectory plotted as the curved solid black line) does not approach the optimum in a straight line, but is initially biased towards the population's line of largest variation. With the effect of selection reducing the variance along this line more quickly than in the orthogonal direction, the ellipse is being foreshortened, and the biasing deflection is mitigated over time, with the trajectory angling towards the optimum after some time. The same effect can be observed in higher dimensions, with the velocity of adaptation aligning with the directions of higher variance initially.

In mathematical terms, this is a simple effect of the mean evolution in \eqref{eq:moment_ODEs} being preconditioned by the population covariance: In the simple case of the example in this section, the mean evolves according to $\dot m(t) = -C(t)(m(t) - y)$. If (working with an exaggerated case for illustration) $C(t)$ is degenerate (i.e. contracted to a thin line) along all directions with exception of a main (normalised) direction of variance $v_{\max}$ with $\|v_{\max}\| = 1$, then the covariance matrix is given by $C(t) = \sigma(t)^2 v_{\max} \cdot v_{\max}^\top$ for a scalar variance $\sigma(t)^2$, which means that $\dot m(t) = -\sigma(t)^2 v_{\max} \langle v_{\max}, m(t)-y\rangle$, i.e. the velocity of adaptation $\dot m(t)$ is parallel to the direction of largest covariance $v_{\max}$. A more comprehensive argument in the more general case can easily be made using a singular value decomposition of $(t) = \sum_{i=1}^n \sigma_i(t)^2 v_i v_i^\top$ with $\sigma_1 \geq \sigma_2 \geq \cdots \sigma_n$ being the ordered variations along directions $v_i$, $i=1,\ldots,n$.

In the context of the Ensemble Kalman inversion, the same kind of behavior can be observed, see \cite[fig. 1]{blomker2022continuous} and \cite{bungert2023complete}, which is of course not surprising given the fact that the moment equations are very similar in these setting.
\begin{figure}
    \centering
    \includegraphics[width=0.7\textwidth]{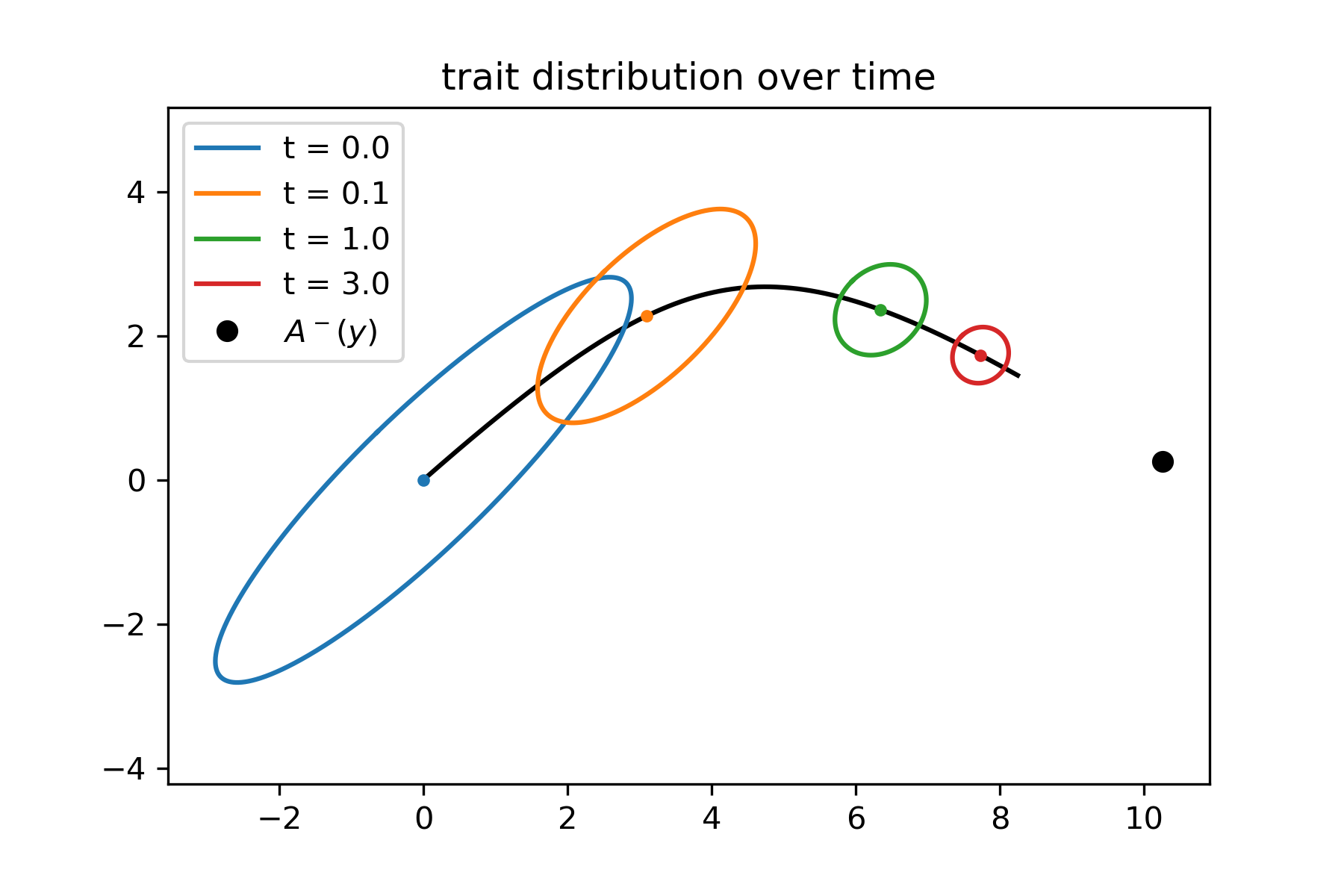}
    \caption{Flying kite effect: Mean population does not approach optimum $A^-y$ via a straight line, but initially starts off ``upwards'' along the direction of dominating variance. This flying kite effect is due to the mean dynamics in \eqref{eq:moment_ODEs} being preconditioned with the covariance matrix.}
    \label{fig:flying_kite}
\end{figure}

\subsection{Survival of the flattest}
Described in \cite{sardanyes2008simple}, it is now understood that for a given fixed strength of mutation, it is not always the fittest population that prevails. To illustrate this effect, we consider two populations. Population A is concentrated around the global fitness optimum, where the fitness peak decays sharply. Population B is concentrated around a local optimum in fitness, less than the global fitness, but with a less harsh dropoff in fitness away from this local optimum.

Evidence for this phenomenon was provided by in-silico experiments in \cite{sardanyes2008simple}, simulating individuals in a virtual population over time. Using the results we derived, we can demonstrate the same behavior by a simple algebraic condition: Rather than comparing two populations in a shared fitness landscape with two peaks, we can equivalently compare two ``universes'': Universe A features a population centered around a ``high and sharp peak'', while Universe B contains a population huddling around a ``shallow hill''. Since we ignore interactions between the populations (and this was also assumed in \cite{sardanyes2008simple}), comparing the relative growth rate of both populations in the different ``universes'' gives us the same result as having two population on a shared bimodal fitness landscape (but allows us to use quadratic fitness functions, and thus we can use the results obtained in section \ref{sec:explicit}).

Let us thus consider two populations\footnote{$q^{(i)}$ is an unnormalised Gaussian with mean $m^{(i)}$, covariance $C^{(i)}$ and total mass $P^{(i)}$.} $q^{(i)} = P^{(i)}\mathcal N(m^{(i)}, C^{(i)})$ under the influence of the replicator-mutator equation
\begin{align*}
    \label{eq:repmutlin_nodrift}
    \partial_t q(t,x) =  \frac{1}{2}  \nabla \cdot (\Sigma \nabla q(t,x))  +  q(t,x)\pi_{q_t}(x)  
\end{align*}
with (as in section \ref{sec:explicit})
\begin{equation*}
\begin{split}
    \pi_{q_t}(x)     &=  -\frac{1}{2}\|Ax-y(t)\|_\Gamma^2 -\frac{1}{2}\|Am-y(t)\|_\Gamma^2 + s \langle Ax-y(t), Am(t)-y(t)\rangle_\Gamma - \tr A^\top \Gamma^{-1} A C + K.
\end{split}
\end{equation*}
For simplicity we assume $s=0$ (fitness of individuals is independent of the ambient total population) and $y(t) = y_0 \in \ran(\Gamma^{-1/2}A)$ for all $t$, i.e., the fitness landscape is time-independent. We assume that mean and covariance have already equilibrated, i.e. $m^{(i)} = m_\infty$ and $C^{(i)} = C_\infty$, albeit $P^{(i)}(t)$ continues to follow its moment equation (from \eqref{eq:moment_ODEs}, and its special case in \ref{cor:expl_special})
\begin{equation*}\begin{split}
        \dot P(t) &= P(t)\cdot \left(K  -(1-s)\|Am(t)-y(t)\|_\Gamma^2   -\tr [A^\top \Gamma^{-1}AC(t)] \right)\\
        &=P(t)\cdot \left(K   -\tr( [Q \Sigma]^{1/2}) \right),
    \end{split}\end{equation*}
    where we used the fact that in equilibrium $\|Am_\infty - y\|_\Gamma^2 = 0$, which follows from corollary \ref{cor:expl_special}, the shorthand $Q = A^\top\Gamma^{-1}A$, and the fact that $QC_\infty = (Q\Sigma)^{1/2}$.

This means that the population size follows an exponential growth/decay of form
\begin{equation}
    P(t) = P(0)\exp\Big( t(K   -\tr( [Q \Sigma]^{1/2})\Big).
\end{equation}

Now we can clearly see the effect of the shape (narrow/wide, governed by $\Gamma$, which is part of $Q$) and its height (i.e. its reference optimal fitness $K$) on the exponential growth rate of the population size $P^{(i)}$: If the peak becomes more narrow, then $\Gamma$ gets smaller, which means that $\tr( [Q \Sigma]^{1/2}) $ increases. In order to keep the growth rate constant, $K$ (i.e. the height of the peak) needs to increase, as well.

This becomes even clearer if we consider the one-dimensional example, where $A = 1$, $\Gamma = \gamma^2$, and mutation covariance $\Sigma = \sigma^2$. Here, the exponential growth rate is given by $K - \frac{\sigma}{\gamma}$. Two populations $q^{(1)},q^{(2)}$ then have the same asymptotic growth rate, if their ambient fitness landscape's peak $K^{(i)}$ and width $\gamma^{(i)}$ are related via
\begin{equation}\label{eq:interplay}
    K^{(1)} = K^{(2)} + \sigma \left(\frac{1}{\gamma^{(1)}}-\frac{1}{\gamma^{(2)}} \right).
\end{equation}
In particular, even if $K^{(1)} \gg K^{(2)}$, i.e., population 1 sits in a fitness optimum much better than population 2, then this can be offset by $\gamma^{(2)} \gg \gamma^{(1)}$, i.e. the slope of the fitness neighborhood of population 2 being a lot more ``forgiving'' than the one of population 1. We can also directly quantify the effect of mutation in this setting: This offset is linearly mediated by the mutation size. In particular, if $\sigma = 0$ (i.e., there is no mutation), then the fitness peak's height $K$ is the only determining factor. Summarising: A population thrives in a fitness landscape if
\begin{itemize}
    \item the trait-independent (global) fitness $K$ is large, or
    \item the quadratic coefficient of the fitness landscape is large (more forgiving), or
    \item mutation is small enough (diffusion out of the fitness landscape is not too quickly),
\end{itemize}
or with a combination of these, the exact interplay described in \eqref{eq:interplay}. In particular we can see that it is possible to have a population go extinct in a fitness landscape with very high maximal fitness, if the combination of mutation rate (via $\Sigma$ / $\sigma$) and slope of the fitness landscape (via $\Gamma$ / $\gamma$) is damaging enough. Or, equivalently, considering two populations in parallel: In the presence of mutation it is possible that a population in a lower-level, but more shallow fitness landscape can grow more quickly than a population in a very high, but sharply peaked fitness landscape. For a demonstration of this effect, see figure \ref{fig:survival_flattest}. Note that illustrations of the survival of the flattest effects (like the one in \cite{sardanyes2008simple}) often use the time-discrete (Wrightian) fitness rather than time-continuous (Malthusian) fitness), so we show both the original Malthusian fitness and (an approximation of) the corresponding Wrightian fitness. \begin{figure}
    \centering
    \includegraphics[width=\linewidth]{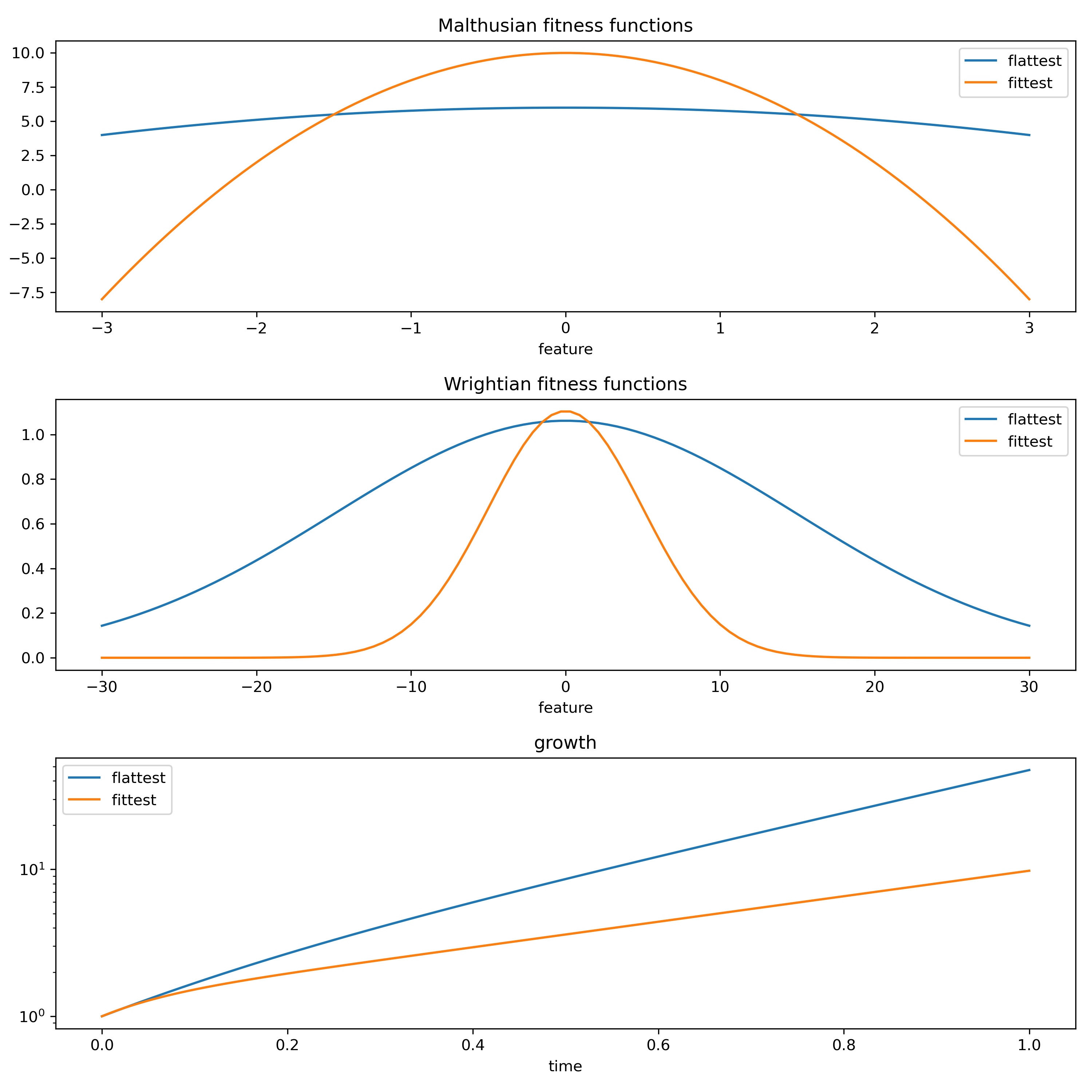}
    \caption{A population sitting in a fitness optimum with a lower maximal fitness (but a flatter fitness decay) can outgrow a population in a fitness optimum with higher fitness value if the latter is more sharply peaked. }
    \label{fig:survival_flattest}
\end{figure}

\subsection{Fixed optimum}
We now consider a population in a fitness landscape where the optimal trait is fixed, but possibly far away from the current population's mean trait. We can imagine that this arises from a sudden change in the environment (eg, the introduction of new species, or a sudden climate event in the environment). Because we derived exact solutions for the total population size, we can explore when the population can survive this situation (and when it goes extinct). We recall that the total population size is given by
\[\dot P(t) = P(t)\cdot \Big(K  -(1-s)\|Am(t)-y(t)\|_\Gamma^2   -\tr [A^\top \Gamma^{-1}AC(t)] \Big)\]
This means we can study extinction by considering the asymptotic exponential growth rate $ \Big(K  -(1-s)\|Am(t)-y(t)\|_\Gamma^2   -\tr [A^\top \Gamma^{-1}AC(t)] \Big)$. If this rate is positive, then the population eventually grows exponentially, otherwise it eventually decays at this rate. Note that this ignores the possibility of `extinction by exhaustion', i.e., if the adaptation to the fixed optimum takes too long, then the population size can intermittently drop below a threshold of minimal viability, from which it could not actually recover (0.07 giraffes clearly would not be able to sustain the population until adaptation is complete). 

\begin{lem}[Extinction of a population with fixed optimal trait]
    We consider the case where $y(t)=y$ is a constant optimal trait. Then, in the long-time behaviour, the population dies out (i.e., $P(t)\searrow 0$) if and only if $K < (1-s) \|\Pi_{(\im \Gamma^{-1/2}A)^\top} \Gamma^{-1/2}y\|^2 + \tr(Q\Sigma)^{1/2}$. In particular, if $y \in \im \Gamma^{-1/2}A$, then $P$ does not converge to $0$. 
    \begin{proof}
        Corollary \ref{cor:expl_special} shows that
        \[\|Am-y\|_\Gamma^2 \to \|\Pi_{(\im \Gamma^{-1/2}A)^\top} \Gamma^{-1/2}y\|^2\]
        and the trace term in the asymptotic limit is given by $QC = (Q\Sigma)^{1/2}$.
    \end{proof}
\end{lem}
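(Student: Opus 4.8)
The plan is to solve the scalar ODE for $P$ in closed form and then read off extinction purely from the sign of the asymptotic exponential rate, feeding in the asymptotics already obtained in Corollary~\ref{cor:expl_special}. Write $g(t):=K-(1-s)\|Am(t)-y\|_\Gamma^2-\tr[A^\top\Gamma^{-1}AC(t)]$, so that the moment equation from \eqref{eq:moment_ODEs} reads $\dot P(t)=P(t)\,g(t)$ and integrates to $P(t)=P(0)\exp\big(\int_0^t g(u)\,\d u\big)$. Hence $P(t)\searrow 0$ if and only if $\int_0^t g(u)\,\d u\to-\infty$, and the whole statement reduces to controlling $g$ for large $t$.

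First I would dispose of the two constituent terms of $g$. Under the standing assumption $C_0=C_\infty$ of Lemma~\ref{lem:repmuttime_mean} the covariance is stationary, $C(t)\equiv C_\infty$, and the identity $QC_\infty=(Q\Sigma)^{1/2}$ from Lemma~\ref{lem:repmuttime} turns the trace term into the \emph{constant} $\tr[(Q\Sigma)^{1/2}]$. For the misfit term I would quote Corollary~\ref{cor:expl_special}(4), applicable here since $\Sigma$ is positive definite and $y$ is constant; its proof in fact supplies the exact identity
\[
\|Am(t)-y\|_\Gamma^2=\big\|\Gamma^{-1/2}A\,e^{-(1-s)(\Sigma Q)^{1/2}t}\big(m(0)-A^-y\big)\big\|^2+\big\|\Pi_{(\im\Gamma^{-1/2}A)^\bot}\Gamma^{-1/2}y\big\|^2,
\]
where $A^-:=(\Gamma^{-1/2}A)^+\Gamma^{-1/2}$, and the first summand is nonnegative and decays exponentially to $0$. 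Therefore $g(t)=g_\infty-(1-s)\,\delta(t)$ with $g_\infty:=K-(1-s)\|\Pi_{(\im\Gamma^{-1/2}A)^\bot}\Gamma^{-1/2}y\|^2-\tr[(Q\Sigma)^{1/2}]$ and $0\le\delta(t)\to 0$ exponentially fast.

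Finally I would convert this into the behaviour of the integral. Since $\delta$ is nonnegative, bounded and exponentially small, $C_\ast:=\int_0^\infty\delta(u)\,\d u$ is finite, so $\int_0^t g(u)\,\d u=g_\infty t-(1-s)C_\ast+o(1)$ as $t\to\infty$. If $g_\infty<0$ this tends to $-\infty$ and $P(t)\searrow 0$; if $g_\infty>0$ it tends to $+\infty$; and if $g_\infty=0$ it converges to $-(1-s)C_\ast$, so $P(t)\to P(0)e^{-(1-s)C_\ast}>0$. This is precisely the asserted equivalence, with the strict inequality $K<(1-s)\|\Pi_{(\im\Gamma^{-1/2}A)^\bot}\Gamma^{-1/2}y\|^2+\tr[(Q\Sigma)^{1/2}]$; and the ``in particular'' clause is immediate since $y\in\im\Gamma^{-1/2}A$ forces $\Pi_{(\im\Gamma^{-1/2}A)^\bot}\Gamma^{-1/2}y=0$, so the maladaptation term drops out of the threshold and only the mutational-load term $\tr[(Q\Sigma)^{1/2}]$ can then drive extinction. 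The one point needing genuine care is the boundary case $g_\infty=0$: there one must know not merely that $g(t)\to g_\infty$ but that the convergence is fast enough (exponential, from the displayed identity) for $\int_0^t g$ to remain bounded below rather than drift to $-\infty$ — this is exactly what pins down strict versus non-strict inequality, and it is the only place where more than a one-line appeal to Corollary~\ref{cor:expl_special} is needed.
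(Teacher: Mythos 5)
Your proof is correct and takes essentially the same approach as the paper's (Corollary \ref{cor:expl_special} for the asymptotic misfit, $QC_\infty=(Q\Sigma)^{1/2}$ for the trace term), but carries out the extinction argument more carefully: integrating the ODE for $P$ and noting that the transient in $\|Am(t)-y\|_\Gamma^2$ decays exponentially lets you pin down the strict inequality including the boundary case where the asymptotic growth rate vanishes, a point the paper's one-line proof leaves implicit. You have also, correctly, softened the paper's ``in particular'' clause: as stated it does not follow from the main equivalence, since when $y\in\im\Gamma^{-1/2}A$ the threshold becomes $K<\tr[(Q\Sigma)^{1/2}]$ and extinction can still occur; your phrasing (``only the mutational-load term can then drive extinction'') is the accurate version.
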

This statement gives us a clear idea of how the overall fitness $K$, the `uniformity term' $s$, and the interplay between fitness penalty $\Gamma$ and mutation $\Sigma$ play a role in species survival.
\subsection{Optimum is moving with bounded speed, the fixed lag phenomenon}

The following lemma examines the asymptotic behavior of the replicator-mutator equation for a moving optimum $y(t) = y_0 + t\cdot v_y$. This is a classic situation analysed in the mathematical biology literature, see the references in section \ref{sec:flying}, but results are usually obtained either for a simplified one-dimensional case, or obtained by numerical simulations. Using the theoretical statements we have derived, we are able to shine a bit more light on the mathematical reasons for the behavior observed.

This lemma generalises known results for fixed lag behaviour to the multivariate case \cite{burger1995evolution,matuszewski2015catch}.

The main idea here is that a population adapting to an optimum moving constant in time (in feature space) can never truly catch up with this optimum, and tracks it with a fixed lag (which is not necessarily aligned with the direction of movement in feature space).
\begin{lem}\label{lem:fixed_lag}
    We consider the setting of lemma \ref{lem:repmuttime} and nondegenerate $Q=A^\top\Gamma^{-1}A$. If $\dot y = v_y = \text{const.}$, i.e. $y(t)$ is moving with constant velocity $v_y$, and if $C_0 = C_\infty$, then 
\begin{align*}
    m(t) &\simeq A^- y(t) + \underbrace{(\Sigma Q)^{-1/2} A^- v_y}_{\text{fixed lag}}
\end{align*}
The population asymptotically goes extinct if and only if $K < (1-s)\|\Gamma^{-1/2}(\Sigma Q)^{1/2}A^- v_y\|^2 + \tr[(Q\Sigma)^{1/2})].$
\begin{proof}
    This is a result of the last statement of lemma \ref{lem:repmuttime}, which we repeat here out of convenience:
    \begin{align*}
         m(t) - A^- y(t) = e^{-\alpha(\Sigma Q)^{1/2}t}\left[ \int_0^t e^{\alpha(\Sigma Q)^{1/2}s}A^{-} \dot y(s)\d s + \left(m(0) - A^- y(0)\right)\right].
    \end{align*}
    Since $\dot y(s) = v_y$ for all $s$, the integral is straightforward to compute and we obtain
    \begin{align*}
        m(t) - A^- y(t) &= e^{-\alpha(\Sigma Q)^{1/2}t}\left[ \alpha(\Sigma Q)^{-1/2}\left(e^{\alpha(\Sigma Q)^{1/2}t} - I\right)A^{-}  v_y + \left(m(0) - A^- y(0)\right)\right]\\
        &= \alpha(\Sigma Q)^{-1/2} \left( I - e^{-\alpha(\Sigma Q)^{1/2}t} \right) A^-  v_y + e^{-\alpha(\Sigma Q)^{1/2}t}\left(m(0) - A^- y(0)\right).
    \end{align*}
    since both $\Sigma$ and $Q$ are non-degenerate, the involved matrix exponentials converge to the $0$ matrix, and thus
    \begin{align*}
        m(t) - A^- y(t) &\simeq  (\Sigma Q)^{-1/2} A^-  v_y.
    \end{align*}
    By plugging the asymptotic state of the covariance and the fixed lag into the equation for $P$ in \eqref{eq:moment_ODEs}, we obtain
    \begin{align*}
        \dot P(t) &= P(t) \cdot\left(- (1-s)\|\Gamma^{-1/2}(\Sigma Q)^{1/2}A^- v_y\|^2 - \tr[(Q\Sigma)^{1/2})] + K\right) =: P(t)\cdot \chi
    \end{align*}
    which proves exponential growth or decay with rate $\chi$.
\end{proof}
\end{lem}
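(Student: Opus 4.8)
The whole statement reduces to specialising the closed-form mean evolution of Lemma~\ref{lem:repmuttime_mean} to a linearly moving optimum, so the plan is: (i) observe that $C_0 = C_\infty$ pins the covariance at $C_\infty$ for all $t$, since $C_\infty$ is a stationary point of the covariance Riccati flow, whence equation~\eqref{eq:steadystate_m} applies verbatim with $\alpha = 1-s$; (ii) use $\dot y(u)\equiv v_y$ to collapse its Riemann--Stieltjes-type integral to an elementary matrix-exponential integral; (iii) let $t\to\infty$ to read off the fixed lag; (iv) feed the resulting asymptotics back into the $P$-equation of \eqref{eq:moment_ODEs} to get the growth rate.

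For steps (ii)--(iii): since $Q$ is nondegenerate and $\Sigma$ is positive definite, $(\Sigma Q)^{1/2}$ is invertible, so I would evaluate $\int_0^t e^{\alpha(\Sigma Q)^{1/2}u}\,\d u = \big(\alpha(\Sigma Q)^{1/2}\big)^{-1}\big(e^{\alpha(\Sigma Q)^{1/2}t} - I\big)$ and substitute it into \eqref{eq:steadystate_m}. The outer factor $e^{-\alpha(\Sigma Q)^{1/2}t}$ cancels the $e^{\alpha(\Sigma Q)^{1/2}t}$ inside the bracket, leaving a $t$-independent term plus two terms each proportional to $e^{-\alpha(\Sigma Q)^{1/2}t}$ (the leftover of the lag integral and the initial-condition correction $m(0)-A^- y(0)$). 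Because every eigenvalue of $(\Sigma Q)^{1/2}$ is strictly positive, both exponential terms vanish, and what survives is $m(t)-A^- y(t)\simeq (\Sigma Q)^{-1/2}A^- v_y$, the fixed lag, entering the direction of motion only through $A^- v_y$.

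For step (iv): with $C(t)\equiv C_\infty$, the trace term in $\dot P = P\big(K - (1-s)\|Am(t)-y(t)\|_\Gamma^2 - \tr[A^\top\Gamma^{-1}AC(t)]\big)$ is exactly $\tr[QC_\infty] = \tr[(Q\Sigma)^{1/2}]$ via the identity $QC_\infty = (Q\Sigma)^{1/2}$ from Lemma~\ref{lem:repmuttime}. Writing $Am(t)-y(t) = A\big(m(t)-A^- y(t)\big) + (AA^- - I)y(t)$, the second summand lies in the $\Gamma$-orthogonal complement of $\ran A$ and drops out in the regime considered (the moving optimum stays in range, automatic when $A$ is surjective), while the first converges to $A$ times the fixed lag; expanding $\|A(\text{fixed lag})\|_\Gamma^2$ using $Q = A^\top\Gamma^{-1}A$ yields the quadratic form in $v_y$ recorded in the statement. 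Hence $\dot P/P \to \chi := K - (1-s)\|\Gamma^{-1/2}(\Sigma Q)^{1/2}A^- v_y\|^2 - \tr[(Q\Sigma)^{1/2}]$, and the population dies out iff $\chi<0$. I do not expect a genuine obstacle here: Lemmata~\ref{lem:repmuttime} and~\ref{lem:repmuttime_mean} have already absorbed the hard Riccati and variation-of-constants work, so the remaining care is purely bookkeeping — checking that every $e^{-\alpha(\Sigma Q)^{1/2}t}$-transient really decays (this is precisely where nondegeneracy of $Q$ is used, sidestepping the subtler null-space argument needed in Corollary~\ref{cor:expl_special}) and that no out-of-range component of $y(t)$ contaminates the asymptotic misfit.
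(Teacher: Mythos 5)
Your proposal follows exactly the paper's route: specialise equation \eqref{eq:steadystate_m} from Lemma \ref{lem:repmuttime_mean} to $\dot y\equiv v_y$, evaluate the now-elementary exponential integral, pass to the limit using non\-degeneracy of $\Sigma Q$ to kill the transients, and substitute the asymptotic misfit and covariance into the $\dot P$ equation. The only differences are elaborations (spelling out why $C_0=C_\infty$ freezes the covariance, and decomposing $Am-y$ into a range component and a $\ran A$-orthogonal component before norming) that the paper leaves implicit, so this is the same proof. One small caveat worth noting: you correctly compute $\int_0^t e^{\alpha(\Sigma Q)^{1/2}u}\,\d u=\alpha^{-1}(\Sigma Q)^{-1/2}\bigl(e^{\alpha(\Sigma Q)^{1/2}t}-I\bigr)$ but then quote the limiting lag as $(\Sigma Q)^{-1/2}A^- v_y$, silently dropping the factor $\alpha^{-1}=(1-s)^{-1}$; the paper's own proof has the same inconsistency (it writes $\alpha(\Sigma Q)^{-1/2}$ for the antiderivative and then also drops the $\alpha$ in the limit), so your write-up faithfully reproduces both the method and its loose end rather than silently fixing it.
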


\begin{figure}
    \centering
    \includegraphics[width=0.35\textwidth]{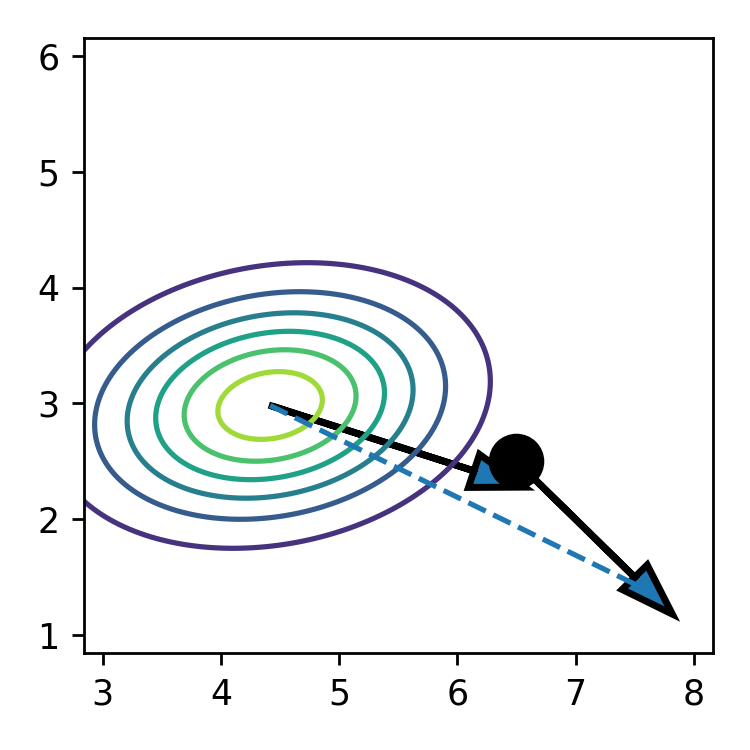}%
    \includegraphics[width=0.35\textwidth]{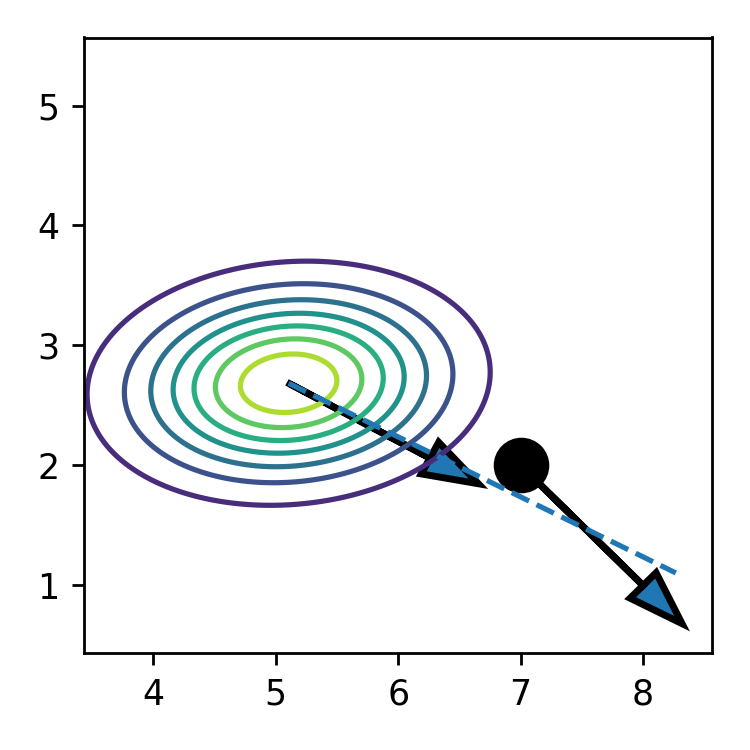}\\
    \includegraphics[width=0.35\textwidth]{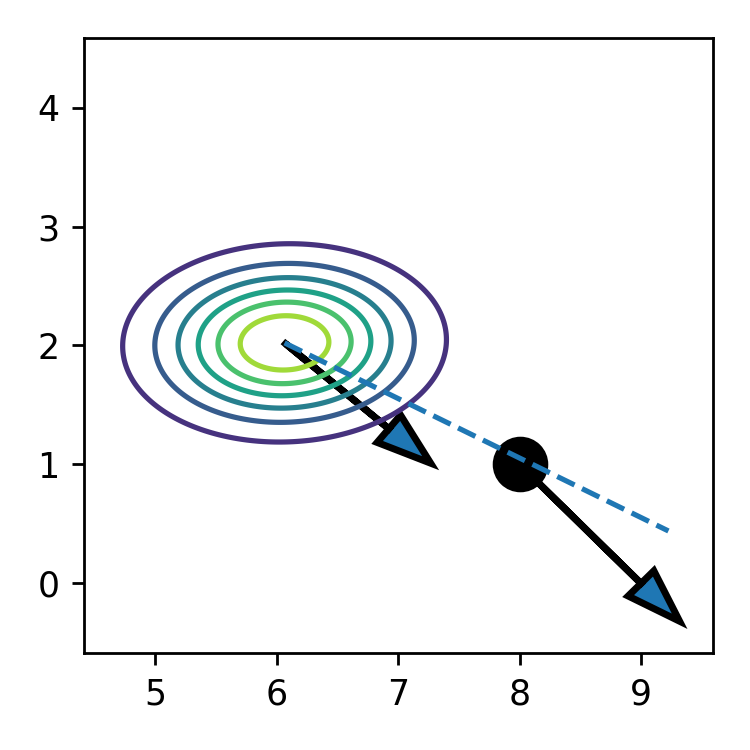}%
    \includegraphics[width=0.35\textwidth]{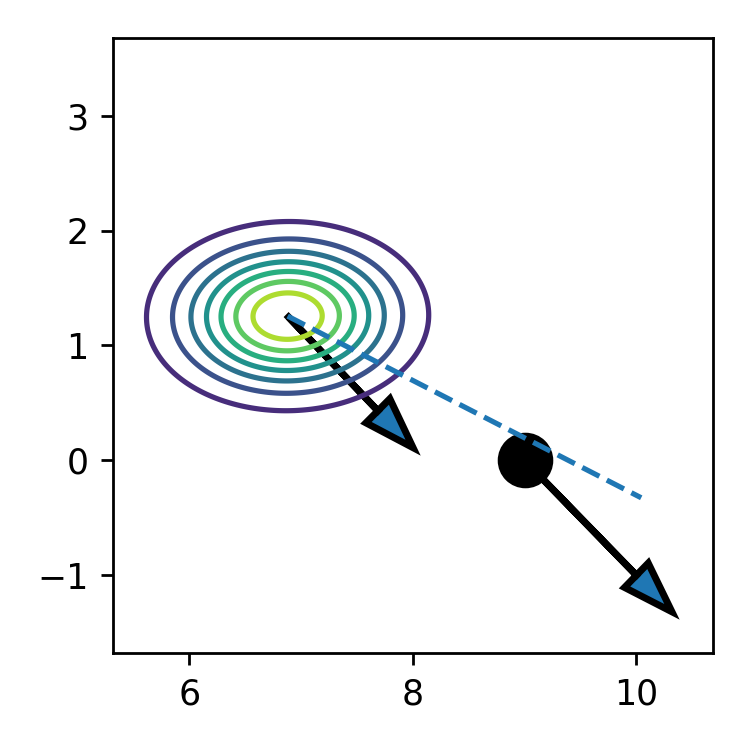}\\
    \includegraphics[width=0.35\textwidth]{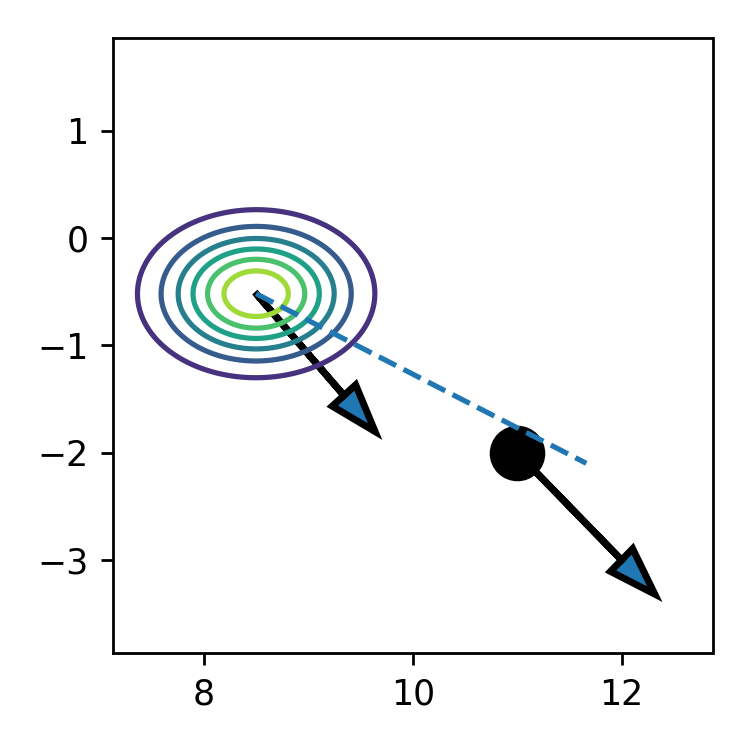}%
    \includegraphics[width=0.35\textwidth]{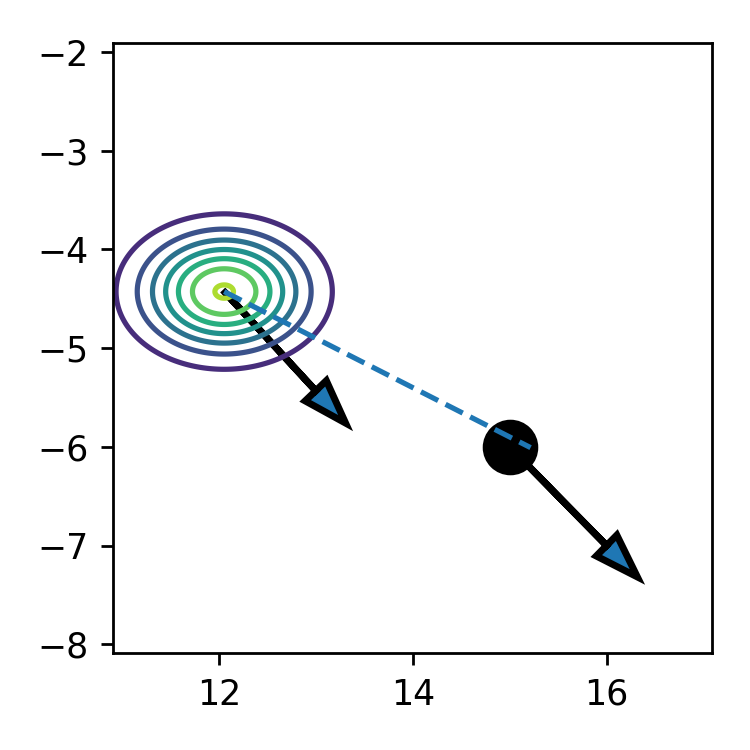}
    \caption{Fixed lag for a 2-dimensional example. Moving optimum $y(t)$ is marked with a solid black circle. Velocities of moving optimum and mean $m(t)$ are each shown with an arrow. Contour lines are ellipses corresponding to covariance matrix of the current distribution. Dashed line shows theoretical fixed lag, which is asymptotically reached after some time has passed.}
    \label{fig:moving_opt_2d}
\end{figure}

\begin{rem}
    The non-degeneracy assumption on $Q$ can likely be dropped by arguments found in various parts of this manuscript, and by modifying the fixed lag to account for directions of degeneracy in $Q$, with the mean ``chasing'' only the component of $y(t)$ observable by $A$ (i.e. $\Gamma$-orthogonal to $\ke A$), but we won't get into the details of maximal generality here.
\end{rem}

Having the optimum move with constant velocity is a strong restriction but it turns out we can prove a very similar statement just under the assumption that the speed of the moving optimum is bounded.
\begin{lem}If $\|A^-\dot y(s)\|\leq R$, then we can explicitly bound the lag between the mean $m(t)$ and the signal $y(t)$.
\begin{proof}
Let $\lambda_{\min}$ be the minimal eigenvalue of $(\Sigma Q)^{1/2}$ and recall the generalised Young inequality $ab\leq \frac{\varepsilon}{2}a^2 + \frac{1}{2\varepsilon}b^2$ for any $\varepsilon > 0$.
\begin{align*}
    m(t) - A^-y(t) &= e^{-(\Sigma Q)^{1/2}t}\left[ \int_0^t e^{(\Sigma Q)^{1/2}s}A^{-} \dot y(s)\d s + \left(m(0) - A^- y(0)\right)\right]\\
    \frac{1}{2}\frac{\d}{\d t}\|m(t) - A^-y(t)\|^2 &= \left\langle m(t) - A^-y(t), -(\Sigma Q)^{1/2} (m(t) - A^-y(t)) + e^{-(\Sigma Q)^{1/2}t}\left(e^{(\Sigma Q)^{1/2}t}A^-\dot y(t) \right) \right\rangle \\
    &= - \|m(t) - A^- y(t)\|_{(\Sigma Q)^{-1/2}}^2 +  \langle m(t) - A^- y(t), A^- \dot y(t)\rangle\\
    &\leq -\lambda_{\min}\|m(t) - A^-y(t)\|^2 + \frac{\lambda_{\min}}{2}\|m(t) - A^-y(t)\|^2 + \frac{1}{2\lambda_{\min}}\|A^- \dot y(t)\|^2\\
    &=-\frac{\lambda_{\min}}{2}\|m(t) - A^-y(t)\|^2 + S,
\end{align*}
where $S = \frac{1}{\lambda_{\min}}\sup_t \|A^- \dot y(t)\|^2$. This means that, setting $D(t) = \frac12\|m(t) - A^-y(t)\|^2,$ we have $\dot D(t) \leq -\lambda_{\min} D(t) + S$, which means that (using a differential inequality on the explicitly solvable 1d ODE for $D$) $D(t) \leq \left(S/\lambda_{\min} - D(0) \right)e^{-\lambda_{\min}t} + \frac{S}{\lambda_{\min}} \simeq \frac{1}{2\lambda_{\min}^2}\sup_t \|A^- \dot y(t)\|^2$. This means that if $\|A^-\dot y(t)\| \leq R,$ then $\|m(t) - A^-y(t)\| \simeq \lambda_{\min}^{-1} R$. Note that this expression is very similar to the fixed lag proven in \ref{lem:fixed_lag}.
\end{proof}
\end{lem}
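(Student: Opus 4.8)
The plan is to reuse, rather than re-derive, the closed-form residual identity already obtained in Lemma~\ref{lem:repmuttime_mean}. Abbreviating $A^- := (\Gamma^{-1/2}A)^+\Gamma^{-1/2}$, $\alpha := 1-s$, and $d(t) := m(t) - A^- y(t)$, the only input I need is the variation-of-constants formula~\eqref{eq:steadystate_m}, or equivalently the differential form $\dot d(t) = -\alpha(\Sigma Q)^{1/2} d(t) - A^-\dot y(t)$ from which it was derived. The strategy is a textbook Lyapunov/Gr\"onwall argument: I will show that $D(t) := \tfrac12\|d(t)\|^2$ obeys a scalar linear differential inequality $\dot D \le -c\,D + S$ with decay rate $c$ governed by the smallest eigenvalue $\lambda_{\min}$ of $(\Sigma Q)^{1/2}$ and forcing $S$ governed by $R$, and then integrate that inequality.

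First I would differentiate $D$ along a solution, getting $\dot D(t) = \langle d(t), \dot d(t)\rangle = -\alpha\langle d(t),(\Sigma Q)^{1/2} d(t)\rangle - \langle d(t), A^-\dot y(t)\rangle$. For the dissipative term I invoke non-degeneracy of $\Sigma Q$ (a hypothesis inherited from Lemma~\ref{lem:repmuttime}), which makes $(\Sigma Q)^{1/2}$ positive definite with spectrum $\ge \lambda_{\min} > 0$, hence $\langle d,(\Sigma Q)^{1/2}d\rangle \ge \lambda_{\min}\|d\|^2$. For the forcing term I use Cauchy--Schwarz, the bound $\|A^-\dot y(t)\| \le R$, and Young's inequality $ab \le \tfrac{\lambda_{\min}}{2}a^2 + \tfrac{1}{2\lambda_{\min}}b^2$ to absorb exactly half of the dissipation:
\begin{align*}
\dot D(t) \;\le\; -\alpha\lambda_{\min}\|d(t)\|^2 + \tfrac{\alpha\lambda_{\min}}{2}\|d(t)\|^2 + \tfrac{1}{2\alpha\lambda_{\min}}R^2 \;=\; -\alpha\lambda_{\min}\,D(t) + \tfrac{R^2}{2\alpha\lambda_{\min}}.
\end{align*}
A comparison argument for this explicitly solvable one-dimensional ODE then gives $D(t) \le D(0)e^{-\alpha\lambda_{\min}t} + \tfrac{R^2}{2\alpha^2\lambda_{\min}^2}\bigl(1-e^{-\alpha\lambda_{\min}t}\bigr)$, hence $\limsup_{t\to\infty}\|m(t)-A^-y(t)\| \le R/(\alpha\lambda_{\min})$. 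Setting $\alpha = 1$ (the normalisation implicit in the statement) reproduces the claimed bound, and one observes that its form mirrors the fixed lag $(\Sigma Q)^{-1/2}A^-v_y$ of Lemma~\ref{lem:fixed_lag}, whose Euclidean length is likewise at most $\lambda_{\min}^{-1}\|A^-v_y\|$. An essentially identical estimate could also be read off directly from~\eqref{eq:steadystate_m}, bounding $\|d(t)\|\le \int_0^t \|e^{-\alpha(\Sigma Q)^{1/2}(t-s)}\|\,R\,\d s + \|e^{-\alpha(\Sigma Q)^{1/2}t}\|\,\|d(0)\|$ via commutativity of the exponentials.

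The step I expect to need the most care --- and the only genuine obstacle --- is the dissipation estimate $\langle d,(\Sigma Q)^{1/2}d\rangle \ge \lambda_{\min}\|d\|^2$ in the \emph{Euclidean} inner product: since $(\Sigma Q)^{1/2} = \Sigma C_\infty^{-1}$ is in general not symmetric, the smallest eigenvalue of its symmetric part need not equal $\lambda_{\min}$. The clean remedy is to run the whole argument with the weighted Lyapunov function $D(t) := \tfrac12\|d(t)\|_{C_\infty^{-1}}^2$; with respect to the $C_\infty^{-1}$-inner product the operator $\Sigma C_\infty^{-1}$ is self-adjoint and positive definite (because $C_\infty^{-1}\Sigma C_\infty^{-1}$ is symmetric positive definite), so the spectral bound applies verbatim, at the cost only of harmless condition-number factors of $C_\infty$ when converting back to the Euclidean norm at the end. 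Everything else --- differentiating, Young, Gr\"onwall --- is routine.
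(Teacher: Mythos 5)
Your proof follows the same Lyapunov/Young's-inequality/Gr\"onwall route as the paper's own. The issue you flag at the end is real, and in fact the paper's proof glosses over it: the step
\begin{equation*}
-\|d(t)\|_{(\Sigma Q)^{-1/2}}^2 \;=\; -\,d(t)^\top (\Sigma Q)^{1/2} d(t) \;\le\; -\lambda_{\min}\|d(t)\|^2
\end{equation*}
treats $\lambda_{\min}$, the smallest eigenvalue of $(\Sigma Q)^{1/2}$, as a lower bound on the numerical range of that matrix, which is justified only when $(\Sigma Q)^{1/2}$ is symmetric. Here $(\Sigma Q)^{1/2}=\Sigma C_\infty^{-1}=C_\infty Q$ is symmetric only when $\Sigma$ and $Q$ commute; a diagonalisable matrix with strictly positive spectrum can have an indefinite symmetric part, so the inequality as written is not valid in general. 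Your remedy --- running the estimate in the inner product for which $\Sigma C_\infty^{-1}$ is self-adjoint, namely the one weighted by $C_\infty^{-1}$, using that $C_\infty^{-1}\Sigma C_\infty^{-1}$ is symmetric positive definite --- does close the gap, with only a condition number of $C_\infty$ appearing when converting back to the Euclidean norm. Two small remarks: the closing aside comparing to the fixed lag of Lemma \ref{lem:fixed_lag} suffers from the same non-normality caveat (the Euclidean operator norm of $(\Sigma Q)^{-1/2}$ need not equal $\lambda_{\min}^{-1}$), but it is not load-bearing; and the paper drops the factor $\alpha=1-s$ (implicitly setting $s=0$), which you correctly retain.
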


\subsection{Tracking an oscillating or periodic optimum}
For simplicity we consider a one-dimensional trait space here. In this case, all matrices are scalars, i.e., $A = a = 1$ (without loss of generality), $\Gamma = \gamma^2$, $Q = 1/\gamma^2$, $C_0 = c_0^2$, $\Sigma = \sigma^2$ and $A^- = 1$. Let $y(t) = y_0 + \xi \sin(\theta t)$ be an oscillating optimum, and we set $s=0$.
\begin{align*}
    m(t) - y(t) &= e^{-\sigma/\gamma t} \left( \int_0^t e^{\sigma/\gamma s} \xi \theta \cos(\theta s)\d s + m(0) - y(0)\right)\\
    &= \frac{\frac{\sigma}{\gamma} \theta \xi \cos(\theta t) + \theta^2 \xi \sin(\theta t) + e^{-\sigma/\gamma t}(m(0)-y(0) -\sigma/\gamma)}{\frac{\sigma^2}{\gamma^2} + \theta^2}\\
    &\simeq \frac{\frac{\sigma}{\gamma} \theta \xi \cos(\theta t) + \theta^2 (y(t) - y_0)}{\frac{\sigma^2}{\gamma^2} + \theta^2}
\end{align*}
where the asymptotics $\simeq$ can be understood as the long-term behaviour. The fast oscillation limit ($\theta \to \infty$) is given by $m(t) - y(t) \simeq y(t) - y_0$, i.e. $m(t) \to y_0$.

The asymptotic (pure oscilation-regime) population dynamics is then given by \eqref{eq:moment_ODEs}, i.e.,
\begin{align*}
    \frac{\dot P(t)}{P(t)} &= - \gamma^{-2}(m-y(t))^2 - \frac{\sigma}{\gamma} + K = -\frac{\xi^2}{\gamma^2} \frac{\frac{\sigma^2}{\gamma^2}\theta^2 \cos^2(\theta t) + 2\frac{\sigma}{\gamma}\theta^3 \sin(\theta t)\cos(\theta t) + \theta^4 \sin^2(\theta t)}{\left(\frac{\sigma^2}{\gamma^2} + \theta^2\right)^2}- \frac{\sigma}{\gamma} + K
\end{align*}
This means that $\dot P(t) = P(t) a(t)$ for $a(t)$ being defined as the right hand side of the equation above. Since the solution of this ODE is given by $P(t) = P(0) \exp\left(\int_0^t a(s)\d s\right)$, the effective exponential growth or decay constant is given by (using $\int_0^{2\pi}\sin^2(s)\d s = \pi$ and $\int_0^{2\pi}\sin(s)\cos(s)\d s = 0$)
\begin{align*}
    \int_0^{k\cdot (2\pi)/\theta} a(s)\d s &= -\frac{k}{\theta}\frac{\xi^2}{\gamma^2}\frac{\frac{\sigma^2}{\gamma^2}\theta^2 \cdot \pi + 0 + \theta^4\pi}{\left(\frac{\sigma^2}{\gamma^2} + \theta^2\right)^2} - 2\frac{k}{\theta}\pi \frac{\sigma}{\gamma} + 2\frac{k}{\theta}\pi K \\
    &= -\frac{2k\pi}{\theta} \left(\frac{\xi^2}{2\gamma^2} \frac{\theta^2}{\frac{\sigma^2}{\gamma^2} + \theta^2} + \frac{\sigma}{\gamma} - K\right),
\end{align*}
i.e.,
\begin{align*}
    t^{-1}\int_0^t a(s)\d s &\simeq - \left(\frac{\xi^2}{2\gamma^2} \frac{\theta^2}{\frac{\sigma^2}{\gamma^2} + \theta^2} + \frac{\sigma}{\gamma} - K\right)
\end{align*}
and $P$ exhibits asymptotic exponential growth if and only if $K > \frac{\xi^2}{2\gamma^2} \frac{\theta^2}{\frac{\sigma^2}{\gamma^2} + \theta^2} + \frac{\sigma}{\gamma}$ and exponential decay otherwise.

\begin{figure}
    \centering
    \includegraphics[width=0.9\textwidth]{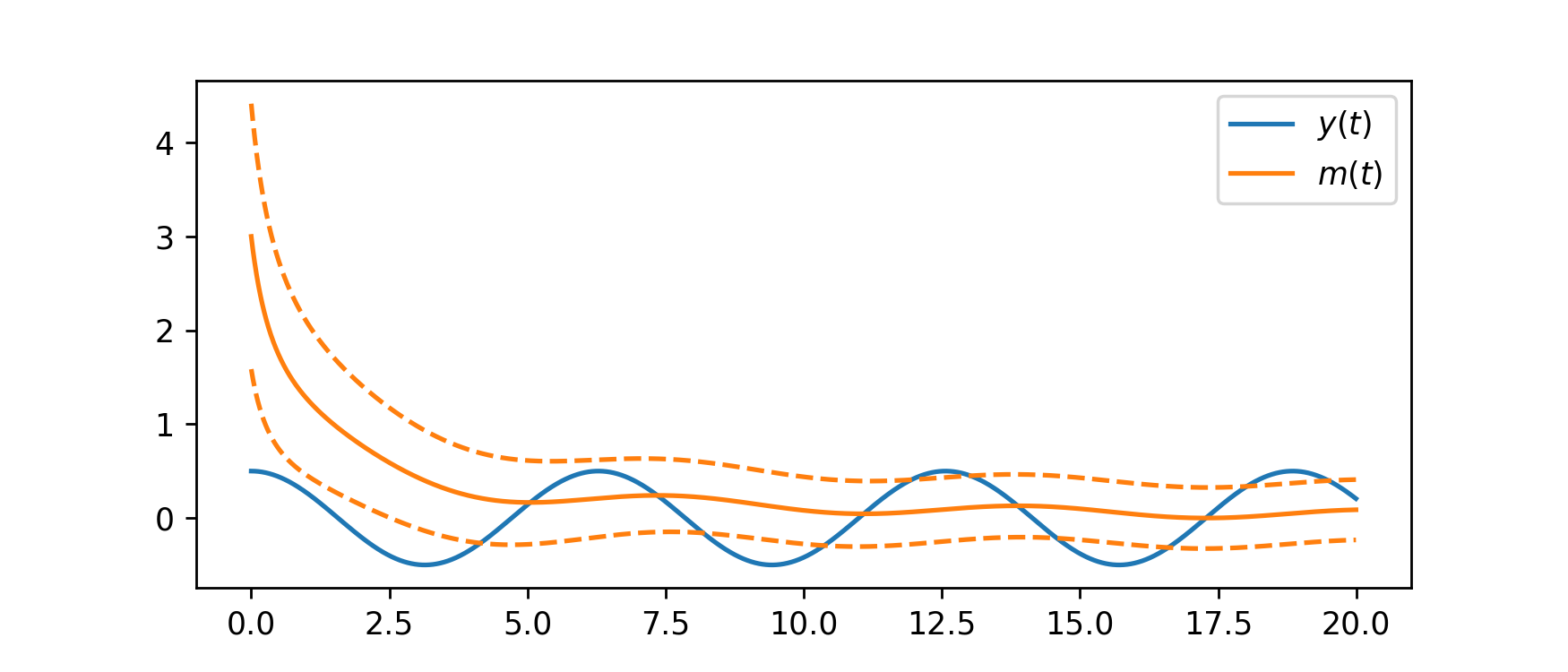}\caption{Oscillating  signal $y(t)$ (blue line) being tracked by the mean of the distribution. The distribution's covariance is sketched using $\pm$ one standard deviation above and below the mean.}
    \label{fig:oscopt}
\end{figure}

\subsection{Randomly fluctuating optimum}
In this section we will briefly investigate whether a population tracking a randomly fluctuating optimum goes extinct. Let $y(t) = a W(t)$ where $W$ is standard Brownian motion and consider the same scalar case as in the oscillating optimum case. 

We consider the deviation of the mean from the true signal by $ D(t) = m(t) - aW(t)$. The corresponding SDE is 
\begin{align*}
    \d  D &= \d m - a\d W = -\frac{\sigma}{\gamma} (m - aW)\d t - a\d W = -\frac{\sigma}{\gamma} D\d t - a\d W,
\end{align*}
which is an Ornstein-Uhlenbeck process with mean $0$ solved by 
\begin{align*}
     D(t) &= -a\int_0^t \exp\left(-\frac{\sigma}{\gamma}(t-s)\right) \d W(s).
\end{align*}
Its square $D^2 = (m(t)-a W(t))^2$ is a Cox-Ingersoll-Ross process of form
\begin{equation}
    \d D^2 = -2\frac{\sigma}{\gamma} D^2 \d t -2a \sqrt{D^2}\d W + a^2\d t = 2\frac{\sigma}{\gamma}\left( \frac{a^2\gamma}{2\sigma} - D^2\right)\d t - 2a\sqrt{D^2} \d W
\end{equation}
The asymptotic distribution of $D^2$ is known to be a Gamma distribution with mean $\frac{\gamma a^2}{2\sigma}$ and variance $\frac{\gamma^2 a^4}{2\sigma^2}$.

We recall that the population size dynamics is given by (from \eqref{eq:moment_ODEs})
   \begin{equation*}\begin{split}
        \dot P(t) &= P(t)\cdot \left(K  -(1-s)\|Am(t)-y(t)\|_\Gamma^2   -\tr [A^\top \Gamma^{-1}AC(t)] \right).
    \end{split}\end{equation*}
In order to obtain a (biased deterministic approximation of the) mean rate of exponential growth (or decay), we need to compute the expectation of the time-averaged squared deviation $\tau^{-1}\int_0^\tau \mathcal \|Am(t) - y(t)\|_\Gamma^2\d t =\tau^{-1}\int_0^\tau \gamma^{-2} \mathcal D(t)^2\d t$. Since $D^2$ is an ergodic process, its time-integrated average converges to the mean of the asymptotic distribution, i.e. 

\[\tau^{-1}\int_0^\tau \mathcal \|Am(t) - y(t)\|_\Gamma^2\d t = \frac{1}{\gamma^2}\tau^{-1}\int_0^\tau D^2(t)\d t \to \frac{1}{\gamma^2} \cdot \frac{\gamma a^2}{2\sigma}\]
almost surely.

Note that the ODE for $P$ has explicit solution
\begin{equation*}
    P(t) = P_0 \exp\left(t\left( K  -(1-s)\|Am(t)-y(t)\|_\Gamma^2   -\tr [A^\top \Gamma^{-1}AC(t)]\right) \right)
\end{equation*}

This means that a biased (because of the nonlinear transformation given by the logarithm) estimator of the asymptotic expected population growth rate (at least the first component of it) is given by
\begin{align*}
    \frac{1}{\tau}\E \log P(t) &= -\gamma^{-2} \frac{1}{\tau}\E  \int_0^t \mathcal D^2(s)\d s - \frac{\sigma}{\gamma} + K \\
    &= -\frac{a^2}{2\sigma \gamma} - \frac{\sigma}{\gamma} + K = K - \frac{\sigma}{\gamma}\left(1 + \frac{a^2}{2\sigma^2}\right).
\end{align*}

Figure \ref{fig:fluctopt} shows both a sample run and a statistics of 1000 independent runs, comparing their population size over time with the theoretical rate $K - \frac{\sigma}{\gamma}\left(1 + \frac{a^2}{2\sigma^2}\right)$. We want to point out that the characterisation of the mean dynamics (pointed out in remark \ref{rem:locallyweighted}) as a locally weighted reconstruction of the ``data'' $y(t)$ means that the path of $m(t)$ is much smoother than the data itself.  

\begin{figure}
    \centering
    \includegraphics[width=\textwidth]{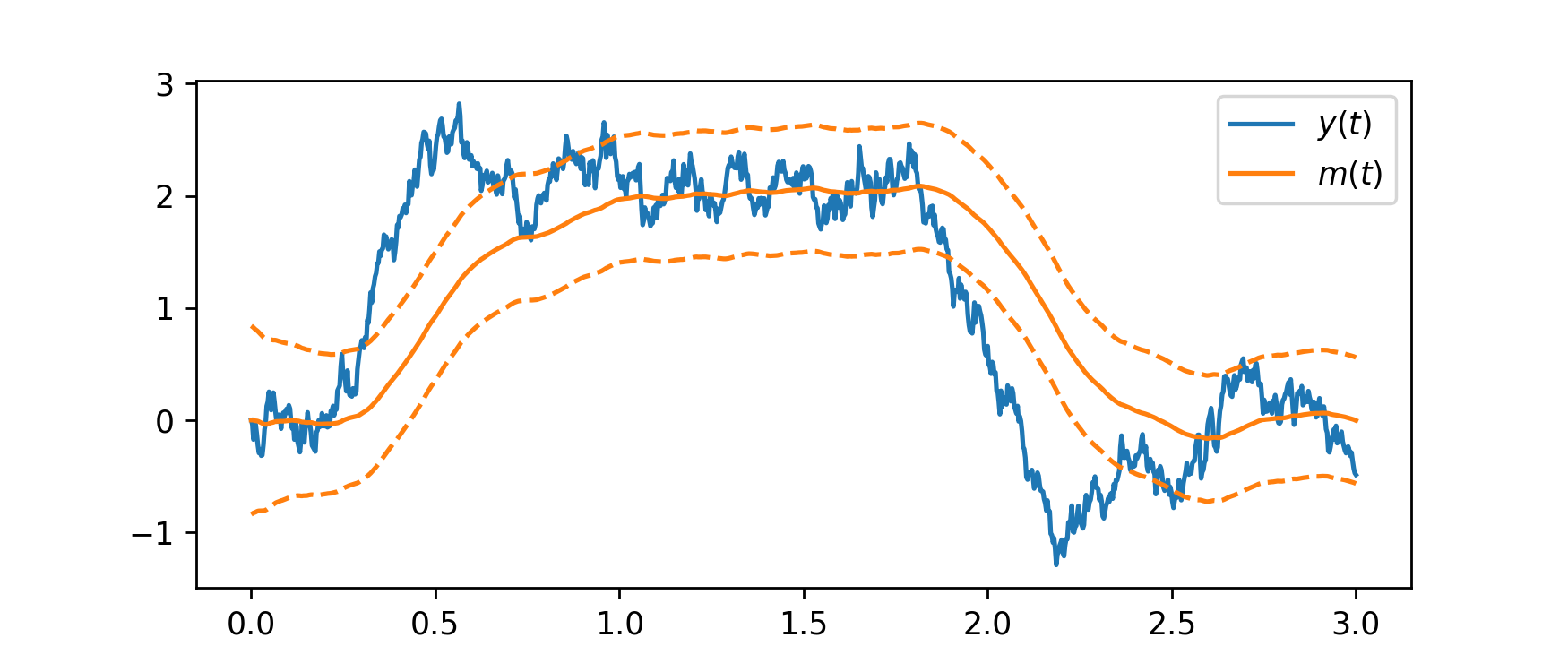}
    \includegraphics[width=\textwidth]{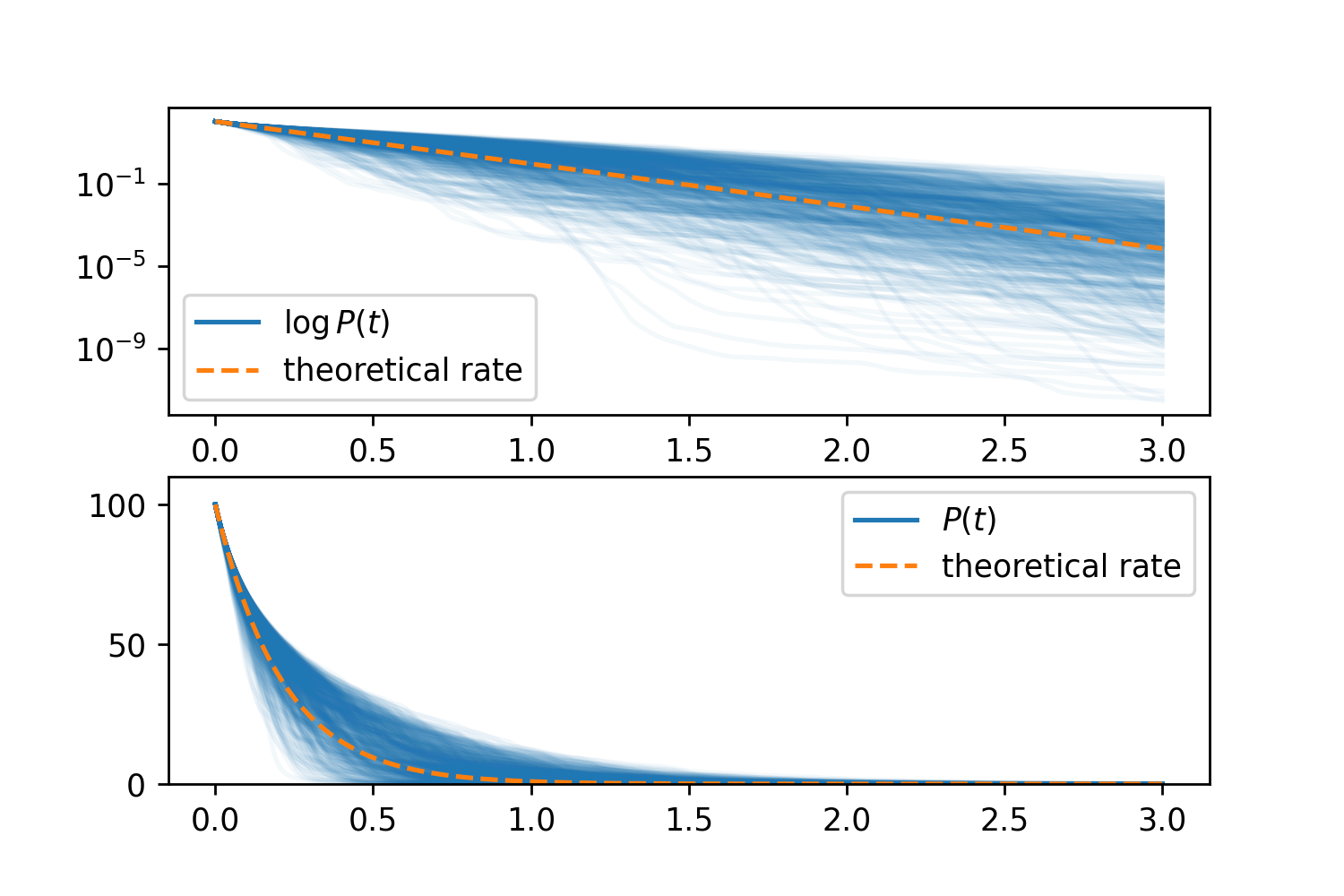}
    \caption{Top: Randomly fluctuating signal $y(t)$ (blue line) being tracked by the mean of the distribution. The distribution's covariance is sketched using $\pm$ one standard deviation above and below the mean. Bottom: Population growth/decay for 1000 independent runs, with theoretical average rate, visualised on exponential and linear scale.}
    \label{fig:fluctopt}
\end{figure}

\section*{Conclusion}
We have derived moment equations for the continuous-time, continuous-trait replicator-mutator equation with a quadratic fitness function depending on traits through a linear mapping. 
 Analytical solutions for covariance dynamics and asymptotic mean and population size of this system have been derived, which to the best of our knowledge, have not been determined for this setting in the literature. With these analytical solutions we demonstrated a range of interesting evolutionary phenomena without needing to resort to numerical simulations. Avenues of future research include generalisation of several mathematical statements to trickier edge cases (like degenerate mutation), the analysis of ``intermittent extinction due to attrition'' in the process of adaptation to an optimum (say, after a sudden change in optimum to a new fixed level), as well as a similar analysis for other popular models of evolution like the quasispecies equation.

\subsection*{Conflicts of interest}
None to disclose.
\begin{appendix}

\newpage 
\section{Notation and mathematical concepts}

We denote by $\R^n$ the standard $n$-dimensional Euclidean space, and $\R^{m\times n}$ as the space of $m\times n$ matrices. We write $A^\top\in \R^{n\times m}$ as the transpose of $A\in\R^{m\times n}$. A square matrix $M\in\R^{n\times n}$ is called symmetric and positive semi-definite if $M^\top = M$ and for any $0\neq x\in \R^n$, we have $x^\top M x \geq 0$. If even $x^\top M x > 0$ for any such $x$, then $M$ is called positive definite. A symmetric and positive semi-definite matrix $M\in \R^{n\times n}$ has only non-negative real eigenvalues and the corresponding eigenvectors are a basis of $\R^n$. In this case we can decompose $M = U \Lambda  U^\top$, where $\Lambda$ is a diagonal matrix containing the non-negative eigenvalues and $U$ is an orthogonal matrix, with the eigenvectors being columns of $U$. A symmetric and positive definite matrix has only positive real eigenvalues.  For a symmetric and positive definite matrix $\Gamma\in \R^{n\times n}$ we define the preconditioned norm as $\|x\|_{\Gamma}^2 = x^\top \Gamma^{-1}x$ and the inner product $\langle x, y\rangle_\Gamma = x^\top \Gamma^{-1}y$. We call a matrix $T\in \R^{m\times n}$ a stochastic matrix if its entries are in the interval $[0,1]$ and if $\sum_i T_{ji} = 1$ for all $j$. A matrix $G\in \R^{m\times n}$ is called a generator matrix if $\sum_i G_{ji} = 0$. If $A\in\R^{m\times n}$ and $B\in\R^{m\times k}$ then $[A|B]\in \R^{m\times (n+k)}$ is understood as the block matrix obtained by concatenation of columns. For a mean vector $m$ and a symmetric and positive definite covariance matrix $C$ we denote a multivariate Gaussian distribution with this mean and covariance by $\mathcal N(m,C)$. For a square matrix $M\in \R^{n\times n}$ we define the matrix exponential as $\exp(M) = \sum_{k=0}^\infty \frac{M^k}{k!}$. Many properties of the scalar exponential functions hold, but it is to be noted that in general, $\exp(M+N) \neq \exp(M)\exp(N)$, unless $M$ and $N$ commute. In order to evaluate the expression on the left hand side, we will need the Lie product formula, $\exp(M+N) = \lim_{n\to \infty} \left(\exp(M/n)\exp(N/n)\right)^n$.  The trace of a square matrix $M\in\R^{n\times n}$ is defined as $\tr(M) = \sum_i M_{ii}$. The Moore-Penrose pseudo-inverse of a matrix $M\in\R^{m\times n}$ is the unique matrix $M^+$ satisfying the following four equations: $MM^+M = M$, $M^+MM^+ = M^+$, $(MM^+)^\top = MM^+$, and $(M^+M)^\top = M^+M$. The matrix square root $M^{1/2}$ of a square matrix $M\in \R^{m\times m}$ is any matrix $L$ satisfying $LL = M$. If $M$ is symmetric and positive definite, then the matrix square root is unique and can be computed from the eigenvalue decomposition: If $M = U\Lambda U^\top$, then $M^{1/2} = U \Lambda^{1/2}U^\top,$ where $\Lambda^{1/2}$ is the straight-forward matrix square root obtained by taking the square root of all entries in the diagonal matrix $\Lambda$.
\section{Useful lemmata}
\begin{lem}[Woodbury matrix identity]
    \[ \left(A+UCV\right)^{-1}=A^{-1}-A^{-1}U\left(C^{-1}+VA^{-1}U\right)^{-1}VA^{-1}\]
\end{lem}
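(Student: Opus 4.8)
The plan is to prove this by direct verification. Denote the claimed inverse by $M := A^{-1}-A^{-1}U(C^{-1}+VA^{-1}U)^{-1}VA^{-1}$ and show that $(A+UCV)M = I$; the identity $M(A+UCV)=I$ then follows by an entirely symmetric computation, or simply from the fact that a left inverse of a square matrix is automatically a two-sided inverse. Implicit throughout is that $A$, $C$, and the ``capacitance'' matrix $C^{-1}+VA^{-1}U$ are all invertible, and that the block sizes are compatible, say $A\in\R^{n\times n}$, $U\in\R^{n\times k}$, $C\in\R^{k\times k}$, $V\in\R^{k\times n}$, so that every product written below is well-defined.

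First I would expand the product $(A+UCV)M$ into four terms, using $AA^{-1}=I$:
\[
(A+UCV)M = I - U(C^{-1}+VA^{-1}U)^{-1}VA^{-1} + UCVA^{-1} - UCVA^{-1}U(C^{-1}+VA^{-1}U)^{-1}VA^{-1}.
\]
The three terms following the leading $I$ all carry a factor $U$ on the left and a factor $VA^{-1}$ on the right, so I would factor those out, reducing the problem to showing that the $k\times k$ matrix in the middle,
\[
C - (C^{-1}+VA^{-1}U)^{-1} - CVA^{-1}U(C^{-1}+VA^{-1}U)^{-1},
\]
vanishes.

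The key algebraic step is then to combine the last two terms as $-\bigl(I + CVA^{-1}U\bigr)(C^{-1}+VA^{-1}U)^{-1}$ and to recognise the factorisation $I + CVA^{-1}U = C\bigl(C^{-1}+VA^{-1}U\bigr)$, so that this quantity equals $-C(C^{-1}+VA^{-1}U)(C^{-1}+VA^{-1}U)^{-1} = -C$, which exactly cancels the leading $C$. Hence the middle matrix is zero, and therefore $(A+UCV)M = I$, which is the claim.

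There is no genuine obstacle here: the statement is a routine identity, and the only points requiring care are the bookkeeping of the left/right factors and an explicit statement of the invertibility hypotheses under which the formula makes sense. An alternative, slightly more conceptual route would be to compute the inverse of the block matrix $\begin{pmatrix} A & U \\ -V & C^{-1}\end{pmatrix}$ in two ways — via the Schur complement of the $(1,1)$ block $A$ and via the Schur complement of the $(2,2)$ block $C^{-1}$ — and to read off the identity by comparing the resulting $(1,1)$ blocks; but the direct verification sketched above is shorter and self-contained.
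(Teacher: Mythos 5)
Your direct-verification argument is correct: the expansion, the factoring of the common $U(\cdot)VA^{-1}$ frame, and the cancellation via $I + CVA^{-1}U = C(C^{-1}+VA^{-1}U)$ all check out, and the appeal to ``left inverse of a square matrix is a two-sided inverse'' legitimately dispenses with the reverse product. The paper itself offers no proof of this lemma — it is cited in the appendix as a standard identity — so there is nothing to compare against; your verification is the canonical one and is sound as written.
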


\begin{lem}\label{lem:properties_M}
    Let $C_0$ and $A$ be the matrices in the setting of lemma \ref{lem:cov_ccrep}. Then, defining $M(t) = C(t)C_0^{-1} = (I + rt C_0 A^\top \Gamma^{-1} A)^{-1}$ and $L(t):= \exp(\alpha \log M(t)) = M(t)^\alpha$ (where existence of this object is proven below), we can prove the following:
    \begin{enumerate}
        \item $M(t)$ is diagonalisable.
        \item The matrix logarithm $\log M(t)$ exists, so $L(t)$ is well-defined.
        \item $[M,\dot M] = 0$ and $[\log M(t), d/dt\log M(t)] = 0$
        \item $\dot L(t) = \alpha \dot M M^{\alpha-1}$
        \item For $u\neq 1$, we have $\int M(s)^{u}\d s \cdot C_0A^\top  \Gamma^{-1}A = -(u-1)^{-1}M(t)^{u-1}$. The case $u = 1$ is covered by $\int M(s) \d s \cdot C_0 A^\top \Gamma^{-1} A = - \log M(t)$
    \end{enumerate}
    \begin{proof}
         \cite[Theorem 2.3]{bungert2023complete} shows that there is a time-independent diagonalisation $M(t) = S\cdot E(t) S^{-1}$ (and simultaneous diagonalisation of $C_0A^\top \Gamma^{-1} A = SDS^{-1}$), so that $M$ is indeed diagonalisable and $[M, d/dt M(t)]=0$ follows immediately from its time-dependent diagonalisation. Note that (again by simultaneous diagonalisation)
         \[\dot M(t) = d/dt (C(t)C_0^{-1}) = -C(t)A^\top \Gamma^{-1} AC(t)C_0^{-1}= -M(t) C_0A^\top \Gamma^{-1} AM(t) = -M(t)^2C_0A^\top \Gamma^{-1} A\] Also, setting $\log M(t) = S \log E(t) S^{-1}$ yields its unique real logarithm, which again commutes with its derivative. This means that by elementary properties of the matrix exponential (and due to the commutator vanishing, $[\log M(t), d/dt\log M(t)] = 0$),
         \[\frac{d}{dt}M(t) = \frac{d}{dt} \exp(\log M(t))  = \exp(\log M(t)) \frac{d}{dt}\log M(t) = M(t)  \frac{d}{dt}\log M(t), \]
         i.e. 
        \[ \frac{d}{dt}\log M(t) = M(t)^{-1} \frac{d}{dt}M(t) = \frac{d}{dt}M(t) M(t)^{-1}.\]
        This shows that (again using commutativity of all matrices involved)
        \[\dot L(t) = \exp(\alpha \log M(t)) \cdot \alpha \frac{d}{dt}\log M(t)  = \alpha M(t)^\alpha M(t)^{-1}\dot M(t) = \alpha \dot M M^{\alpha-1} \]
        
        We take the derivative of $M(t)^{u-1}$:
        \begin{align*}
            d/dt M(t)^{u-1} &= (u-1) M(t)^{u-2} \cdot \dot M(t) = (u-1)M(t)^{u-2} \cdot (- M(t)^2 C_0A^\top \Gamma^{-1} A) \\
            &= -(u-1)M(t)^u C_0A^\top \Gamma^{-1} A
        \end{align*}
        which proves the first part of property 5. The case $u=1$ follows from 
        \begin{align*}
            d/dt \log M(t) &= M(t)^{-1}\dot M(t) = -rM(t)^{-1}M(t)^2 C_0A^\top \Gamma^{-1} A = - M(t) C_0A^\top \Gamma^{-1} A
        \end{align*}
    \end{proof}
\end{lem}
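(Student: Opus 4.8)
The plan is to reduce all five assertions to scalar calculus by producing a single, time-independent eigenbasis that diagonalises the whole family $\{M(t)\}_{t\ge 0}$ at once. First I would observe that $C_0A^\top\Gamma^{-1}A$ is diagonalisable with real non-negative eigenvalues: it equals $C_0^{1/2}\big(C_0^{1/2}A^\top\Gamma^{-1}AC_0^{1/2}\big)C_0^{-1/2}$, a similarity transform of a symmetric positive semi-definite matrix (equivalently, this is \cite[Theorem~2.3]{bungert2023complete}). Hence there is an invertible $S$, independent of $t$, with $C_0A^\top\Gamma^{-1}A = SDS^{-1}$ and $D=\mathrm{diag}(d_1,\dots,d_n)$, $d_i\ge 0$. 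Since $M(t) = (I + rt\,C_0A^\top\Gamma^{-1}A)^{-1} = S\,(I+rtD)^{-1}S^{-1}$, the same $S$ diagonalises $M(t)$ for every $t$, with eigenvalues $(1+rtd_i)^{-1}\in(0,1]$. This gives claim~1 immediately; since those eigenvalues are strictly positive, $\log M(t) := S\,\mathrm{diag}\!\big(-\log(1+rtd_i)\big)S^{-1}$ is a well-defined real logarithm with $\exp(\log M(t)) = M(t)$, which is claim~2, and $L(t) = M(t)^\alpha = S\,\mathrm{diag}\!\big((1+rtd_i)^{-\alpha}\big)S^{-1}$ is well-defined for every real $\alpha$.

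Claim~3 then follows because $M(t)$, $\dot M(t) = S\,\tfrac{d}{dt}(I+rtD)^{-1}S^{-1}$, $\log M(t)$ and $\tfrac{d}{dt}\log M(t)$ are all diagonalised by the one matrix $S$, so any two of them commute; in particular $[M,\dot M]=0$ and $[\log M,\tfrac{d}{dt}\log M]=0$. Along the way I would record the closed form $\dot M(t) = -r\,M(t)^2\,C_0A^\top\Gamma^{-1}A$, obtained from $\tfrac{d}{dt}M(t)^{-1} = r\,C_0A^\top\Gamma^{-1}A$ (read off the definition of $M$) together with $\dot M = -M\big(\tfrac{d}{dt}M^{-1}\big)M$ and the commutation $[M,C_0A^\top\Gamma^{-1}A]=0$; this is the only analytic input needed for claim~5.

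For claim~4 I would invoke the elementary fact that when a differentiable matrix curve commutes with its own derivative the exponential differentiates exactly as in the scalar case: $\tfrac{d}{dt}\exp(G(t)) = \exp(G(t))\,\dot G(t)$ whenever $[G,\dot G]=0$. Applied with $G=\log M$ this gives $\dot M = M\,\tfrac{d}{dt}\log M$, hence $\tfrac{d}{dt}\log M = M^{-1}\dot M$; applied with $G = \alpha\log M$ it gives $\dot L = \exp(\alpha\log M)\cdot\alpha\,\tfrac{d}{dt}\log M = \alpha\,M^{\alpha}M^{-1}\dot M = \alpha\,\dot M\,M^{\alpha-1}$, the last rearrangement by commutativity. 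Claim~5 is the same idea once more: differentiating the (well-defined, possibly non-integer) power in the shared eigenbasis, $\tfrac{d}{dt}M(t)^{u-1} = (u-1)\,M(t)^{u-2}\dot M(t) = -(u-1)r\,M(t)^{u}\,C_0A^\top\Gamma^{-1}A$; integrating from $0$ to $t$ and using $M(0)=I$ yields $\int_0^t M(s)^u\,\mathrm{d}s\cdot C_0A^\top\Gamma^{-1}A = -\tfrac{1}{(u-1)r}\big(M(t)^{u-1}-I\big)$ for $u\ne1$, and the borderline case $u=1$ follows identically from $\tfrac{d}{dt}\log M = M^{-1}\dot M = -r\,M\,C_0A^\top\Gamma^{-1}A$.

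The main obstacle is really just the first step: securing the time-independent simultaneous diagonalisation of $C_0A^\top\Gamma^{-1}A$ and of the whole family $\{M(t)\}$, after which everything is scalar calculus carried out entrywise in that basis. The remaining care points are pure bookkeeping — existence of the real logarithm, the identity $\tfrac{d}{dt}\exp(G)=\exp(G)\dot G$ under $[G,\dot G]=0$, and $\tfrac{d}{dt}M^{u-1}=(u-1)M^{u-2}\dot M$ for real $u$ — all of which become transparent once $M(t)=S(I+rtD)^{-1}S^{-1}$ is in hand, and I would note in passing that the $-I$ appearing in claim~5 is simply the $t=0$ boundary term that the statement suppresses.
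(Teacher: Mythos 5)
Your argument is correct and takes essentially the same route as the paper's: produce a single time-independent eigenbasis $S$ that diagonalises $C_0A^\top\Gamma^{-1}A$ and hence every $M(t)$, $\log M(t)$, and $\dot M(t)$ simultaneously, and then reduce each claim to scalar calculus carried out entrywise (the paper outsources the diagonalisability step to a citation where you give the $C_0^{1/2}(\cdot)C_0^{-1/2}$ similarity argument explicitly, but this is the content of the cited result). You are also right to flag the $t=0$ boundary term: as written, item~5 of the lemma omits the $-I$ that your definite integral produces, though the paper's later use of item~5 in the proof of Lemma~\ref{lem:explicitsoln} does include it, so this is a suppressed constant of integration rather than an error in what is actually used; likewise your consistent bookkeeping of the factor $r$ matches the definition of $M(t)$, whereas the paper's own proof drops and reintroduces $r$ inconsistently.
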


\begin{lem}\label{lem:product_of_gaussians}
    \begin{equation}
        \mathcal N(z,\Gamma)(x)\mathcal N(m,C)(z) = \mathcal N(m, \Gamma + C)(x)\mathcal N(\mu, Q)(z),
    \end{equation}
    where 
    \begin{align*}
        Q &= (C^{-1}+\Gamma^{-1})^{-1}\\
        \mu &= Q(C^{-1}m +\Gamma^{-1}x)
    \end{align*}
\end{lem}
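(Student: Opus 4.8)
The plan is to treat both sides as (unnormalised) Gaussian densities in the variable $z$, with $x$ held fixed, and to identify them by completing the square. Write $\ell(x,z)$ for the logarithm of the left-hand side. Dropping the additive constant $-\tfrac12\log\big((2\pi)^{2d}\,|\Gamma|\,|C|\big)$, we have
\begin{align*}
    \ell(x,z) &= -\tfrac12(x-z)^\top\Gamma^{-1}(x-z) -\tfrac12(z-m)^\top C^{-1}(z-m)\\
    &= -\tfrac12 z^\top(\Gamma^{-1}+C^{-1})z + z^\top(\Gamma^{-1}x + C^{-1}m) -\tfrac12 x^\top\Gamma^{-1}x -\tfrac12 m^\top C^{-1}m.
\end{align*}
With $Q = (\Gamma^{-1}+C^{-1})^{-1}$ and $\mu = Q(\Gamma^{-1}x+C^{-1}m)$ as in the statement, so that $Q^{-1}\mu = \Gamma^{-1}x+C^{-1}m$, the first two terms are $-\tfrac12(z-\mu)^\top Q^{-1}(z-\mu) + \tfrac12\mu^\top Q^{-1}\mu$. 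The term $-\tfrac12(z-\mu)^\top Q^{-1}(z-\mu)$ is precisely the $z$-dependent part of $\log\mathcal N(\mu,Q)(z)$, so it remains to show that the collected $z$-independent terms reproduce $\log\mathcal N(m,\Gamma+C)(x)$, both in the quadratic form and in the normalising constant.

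First I would deal with the quadratic form. Expanding $\tfrac12\mu^\top Q^{-1}\mu = \tfrac12(\Gamma^{-1}x+C^{-1}m)^\top Q(\Gamma^{-1}x+C^{-1}m)$ and combining with $-\tfrac12 x^\top\Gamma^{-1}x -\tfrac12 m^\top C^{-1}m$, the $z$-independent part equals
\[
    -\tfrac12 x^\top(\Gamma^{-1}-\Gamma^{-1}Q\Gamma^{-1})x + x^\top\Gamma^{-1}QC^{-1}m -\tfrac12 m^\top(C^{-1}-C^{-1}QC^{-1})m .
\]
The required simplification rests on the factorisation $\Gamma^{-1}+C^{-1} = C^{-1}(\Gamma+C)\Gamma^{-1} = \Gamma^{-1}(\Gamma+C)C^{-1}$, which gives the two closed forms $Q = \Gamma(\Gamma+C)^{-1}C = C(\Gamma+C)^{-1}\Gamma$. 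From these one reads off $\Gamma^{-1}Q\Gamma^{-1} = (\Gamma+C)^{-1}C\Gamma^{-1}$, hence $\Gamma^{-1}-\Gamma^{-1}Q\Gamma^{-1} = (\Gamma+C)^{-1}\big((\Gamma+C)-C\big)\Gamma^{-1} = (\Gamma+C)^{-1}$; symmetrically $C^{-1}-C^{-1}QC^{-1} = (\Gamma+C)^{-1}$; and $\Gamma^{-1}QC^{-1} = \Gamma^{-1}\Gamma(\Gamma+C)^{-1}CC^{-1} = (\Gamma+C)^{-1}$. Substituting, the $z$-independent part collapses to $-\tfrac12(x-m)^\top(\Gamma+C)^{-1}(x-m)$, the quadratic form of $\mathcal N(m,\Gamma+C)(x)$. (Equivalently, the identity $\Gamma^{-1}-\Gamma^{-1}Q\Gamma^{-1} = (\Gamma+C)^{-1}$ is a direct instance of the Woodbury identity.)

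Finally, matching the normalising constants amounts to the determinant identity $|\Gamma|\,|C| = |\Gamma+C|\,|Q|$, which follows at once from $|Q|^{-1} = |\Gamma^{-1}+C^{-1}| = |\Gamma^{-1}|\,|\Gamma+C|\,|C^{-1}|$ using the same factorisation. I do not anticipate a genuine obstacle: the argument is essentially bookkeeping in completing the square together with one structural observation (the two factorisations of $\Gamma^{-1}+C^{-1}$), the latter being exactly what removes any need to assume that $\Gamma$ and $C$ commute. An alternative and shorter route, if a computational proof is felt to be unnecessary, is to recognise the identity as Bayes' rule for the conjugate linear-Gaussian model in which $z\sim\mathcal N(m,C)$ is a latent state and $x\mid z\sim\mathcal N(z,\Gamma)$ an observation: then $\mathcal N(m,\Gamma+C)(x)$ is the marginal density of $x$ and $\mathcal N(\mu,Q)(z)$ the posterior of $z$ given $x$, so the factorisation $p(x\mid z)\,p(z) = p(x)\,p(z\mid x)$ is precisely the claim, and this can be cited from standard references on Gaussian conditioning.
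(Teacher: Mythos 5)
Your proof is correct. The paper itself states this lemma in the appendix without proof (it is a standard Gaussian identity, like the Woodbury identity and the Isserlis-based moment formulae stated alongside it), so there is no proof of the authors' to compare against. Your completion-of-the-square argument is the canonical calculation, and the algebra checks out: the factorisations $\Gamma^{-1}+C^{-1} = C^{-1}(\Gamma+C)\Gamma^{-1} = \Gamma^{-1}(\Gamma+C)C^{-1}$ do indeed yield $\Gamma^{-1}-\Gamma^{-1}Q\Gamma^{-1} = C^{-1}-C^{-1}QC^{-1} = \Gamma^{-1}QC^{-1} = (\Gamma+C)^{-1}$ without any commutativity assumption, collapsing the $z$-independent quadratic form to $-\tfrac12(x-m)^\top(\Gamma+C)^{-1}(x-m)$, and the determinant identity $|\Gamma|\,|C| = |\Gamma+C|\,|Q|$ fixes the normalisation. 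The Bayesian-conjugacy remark you close with is also a perfectly adequate one-line proof by citation, and in fact aligns with the paper's broader filtering/Kalman-Bucy viewpoint; either route would be acceptable.
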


\begin{lem}\label{lem:gaussian_moments}These equalities follow from application of Isserlis' theorem.
    \begin{align}
        \int \|Rz-b\|^2 \d \mathcal N(m, \Sigma)(z) &= \|Rm-b\|^2 + \tr (R^\top R\Sigma)\\
        \int  \|Rz-b\|^2\, z \d \mathcal N(m, \Sigma)(z) &= \|Rm-b\|^2\, m + 2\Sigma R^\top (Rm-b) + \tr (R^\top R\Sigma)\, m\\
        \int (z-m)\otimes (z-m) \langle Bz, z\rangle \d \mathcal N(m, \Sigma)(z) &= \tr(B\Sigma)\cdot \Sigma + \Sigma(B+B^\top)\Sigma + \langle Bm, m\rangle \Sigma
    \end{align}
\end{lem}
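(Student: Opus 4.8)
The plan is to reduce all three identities to elementary moment computations for a centered Gaussian. Substitute $z = m + w$ with $w\sim\mathcal N(\mathbf 0,\Sigma)$, so that $\E[w]=\mathbf 0$, $\E[w w^\top]=\Sigma$, every odd moment of $w$ vanishes, and the fourth moments are given by Isserlis' theorem, $\E[w_i w_j w_k w_l] = \Sigma_{ij}\Sigma_{kl} + \Sigma_{ik}\Sigma_{jl} + \Sigma_{il}\Sigma_{jk}$. Throughout I will repeatedly use symmetry of $\Sigma$.

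For the first identity, expand $\|Rz-b\|^2 = \|Rm-b\|^2 + 2\langle Rm-b,Rw\rangle + \|Rw\|^2$. The middle term is linear in $w$ and integrates to $0$, while $\E\|Rw\|^2 = \E[w^\top R^\top R w] = \tr(R^\top R\,\E[ww^\top]) = \tr(R^\top R\Sigma)$, which gives the claim. For the second identity, multiply that same expansion by $z = m+w$ and integrate term by term: the summands $\|Rm-b\|^2 w$, $2\langle Rm-b,Rw\rangle m$, and $\|Rw\|^2 w$ are all odd in $w$ and vanish; the surviving contributions are $\|Rm-b\|^2 m$, $\tr(R^\top R\Sigma)\,m$, and $2\,\E\big[(w^\top R^\top(Rm-b))\,w\big] = 2\,\E[ww^\top]R^\top(Rm-b) = 2\Sigma R^\top(Rm-b)$, matching the statement.

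For the third identity, write $(z-m)\otimes(z-m) = w w^\top$ and expand $\langle Bz,z\rangle = \langle Bm,m\rangle + m^\top(B+B^\top)w + w^\top B w$. Multiplying by $w w^\top$ and integrating: the $\langle Bm,m\rangle$ piece yields $\langle Bm,m\rangle\,\Sigma$, the middle piece is odd in $w$ and drops out, and the last piece is the only one requiring a fourth moment. In coordinates, $\E\big[w_i w_j \sum_{k,l} B_{kl} w_k w_l\big] = \sum_{k,l} B_{kl}(\Sigma_{ij}\Sigma_{kl} + \Sigma_{ik}\Sigma_{jl} + \Sigma_{il}\Sigma_{jk})$. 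The first Wick pairing gives $\Sigma_{ij}\sum_{k,l}B_{kl}\Sigma_{kl} = \tr(B\Sigma)\,\Sigma_{ij}$; the second gives $(\Sigma B\Sigma)_{ij}$; the third gives $(\Sigma B^\top\Sigma)_{ij}$. Summing, the fourth-moment term equals $\tr(B\Sigma)\,\Sigma + \Sigma(B+B^\top)\Sigma$, and adding back $\langle Bm,m\rangle\,\Sigma$ recovers the right-hand side.

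The only step that goes beyond ``odd moments vanish, second moment is $\Sigma$'' is the fourth-order term in the third identity, so the main (and rather modest) obstacle is carrying out the Isserlis index bookkeeping carefully and recognizing the three pairings as $\tr(B\Sigma)\Sigma$, $\Sigma B\Sigma$, and $\Sigma B^\top\Sigma$; the identifications $\sum_{k,l}B_{kl}\Sigma_{kl} = \tr(B\Sigma^\top) = \tr(B\Sigma)$ and similar contractions, all using symmetry of $\Sigma$, close the computation.
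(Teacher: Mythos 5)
Your proof is correct and is exactly the Isserlis-based computation the paper is alluding to when it writes that these identities ``follow from application of Isserlis' theorem''; centring via $z=m+w$, killing the odd moments, and doing the Wick bookkeeping on the single fourth-order term is the intended route, and the index contractions you identify ($\sum_{k,l}B_{kl}\Sigma_{kl}=\tr(B\Sigma)$, and the two pairings giving $\Sigma B\Sigma$ and $\Sigma B^\top\Sigma$, both using $\Sigma=\Sigma^\top$) are all accurate.
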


\section{Proof of lemma \ref{lem:repmuttime}}

\begin{proof}

Ad 1.: We start by deriving a general solution for the Riccati equation $\dot C = \Sigma - CQC$ where $\Sigma \in \R^{n\times n}$ is symmetric and positive definite, but $Q\in \R^{n\times n}$ symmetric and only positive semi-definite (in general).

Our ansatz is to write $C(t) = D(t) E(t)^{-1}$ for matrices $D(t),E(t)\in \R^{n\times n}$ where

\begin{equation}\label{eq:cont_time_riccati}
        \begin{pmatrix}
            \dot D(t)\\\dot E(t)
        \end{pmatrix} = \begin{pmatrix}
            0 & \Sigma\\ Q & 0
        \end{pmatrix}
        \begin{pmatrix}
             D(t) \\ E(t)
        \end{pmatrix}.
    \end{equation}

Then indeed, 
\begin{align*}
    \dot C(t) &= \dot D(t) E(t)^{-1} - D(t) E(t)^{-1}\dot E(t) E(t)^{-1} = \Sigma E(t)E(t)^{-1} - (D(t) E(t)^{-1})Q(D(t) E(t)^{-1})\\
    &= \Sigma - C(t)QC(t).
\end{align*} The initial decomposition $C(0) = D(0)E(0)^{-1}$ can be chosen in an arbitrary suitable way, so we set $E(0) = C(0)^{-1}D(0)$.

Note that \eqref{eq:cont_time_riccati} is a matrix-valued ODE in $\R^{(2n)\times n}$, and can be interpreted as a list of $n$ vector-valued ODEs of form
\begin{equation*}
    \begin{pmatrix}
            \dot D_i(t)\\\dot E_i(t)
        \end{pmatrix} = 
        \begin{pmatrix}
            0 & \Sigma\\ Q & 0
        \end{pmatrix} 
        \begin{pmatrix}
             D_i(t) \\ E_i(t)
        \end{pmatrix},
\end{equation*}
with $D_i(t)$ and $E_i(t)$ being the $i$-th column of $D(t)$ and $E(t)$, respectively. This means that standard solution theory of autonomous linear ODE systems apply, and a closed form solution of \eqref{eq:cont_time_riccati} is given by
\begin{equation}
    \label{eq:cont_time_riccati_solution}    
        \begin{pmatrix}
             D(t) \\ E(t)
        \end{pmatrix} = \exp\left(t\begin{pmatrix}
            0 & \Sigma\\ Q & 0
        \end{pmatrix} \right)
        \begin{pmatrix}
             D(0) \\ E(0)
        \end{pmatrix}.
\end{equation}
This means that next we need to compute the matrix exponential in \eqref{eq:cont_time_riccati_solution}, which we are going to do by analysing the spectrum of $\Sigma Q$.

Since $Q$ is symmetric and positive semi-definite, and $\Sigma$ is symmetric and positive definite, then (see, e.g., \cite[Lemma 2.1]{bungert2023complete}) $\Sigma Q$ is also diagonalisable with non-negative eigenvalues. Let $0 = \lambda_1^2 = \cdots = \lambda_k^2 < \lambda_{k+1}^2 \leq\lambda_{k+2}^2\leq \cdots \lambda_{n}^2$ be the eigenvalues of $ \Sigma Q$ with associated eigenvectors $w_i$, $i=1,\ldots,n$. This includes the possibilities of $k=0$, i.e. all eigenvalues being strictly positive, and $k=n$, where $Q = 0$.

We collect the eigenvectors in a matrix $W = [w_1,w_2,\ldots, w_n]\in\R^{n\times n}$ and define its components as $W = [W_\ke| W_\bot]$ with $W_\ke \in \R^{n\times k}$ and $W_\bot \in \R^{n\times (n-k)}$. 

Similarly, we define $\Lambda = \mathrm{diag}(\lambda_i) \in \R^{n\times n}$ and its invertible component $\Lambda_\bot = \mathrm{diag}(\lambda_i, i \in \{k+1,\ldots, n\})\in \R^{(n-k)\times (n-k)}$.

Finally, we define $V = \Sigma^{-1}W$, in particular $V_\ke = \Sigma^{-1}W_\ke$ and $V_\bot = \Sigma^{-1}W_\bot$. This means that

\begin{align*}
    \Sigma Q W &= \Sigma Q [W_\ke W_\bot] = [\mathbf 0 | W_\bot \Lambda_\bot^2] = W \begin{pmatrix}
        \mathbf 0 & \mathbf 0\\ \mathbf 0 & \Lambda_\bot^2
    \end{pmatrix}
\end{align*}
In the non-degenerate case, this becomes
\begin{equation}
    \Sigma Q W = W\Lambda^2,
\end{equation}
and where $QW\Lambda^{-1} = \Sigma^{-1} W \Lambda$ is another useful expression.

With these definitions and properties,
\begin{align*}
    \begin{pmatrix}
            \mathbf0 & \Sigma\\ Q & \mathbf0
        \end{pmatrix} &\cdot 
        \begin{pmatrix}
            [W_\ke| W_\bot \Lambda_\bot^{-1}] & \mathbf0\\ \mathbf0 &[V_\ke| V_\bot]
        \end{pmatrix} = \begin{pmatrix}
            \mathbf0 & [\Sigma V_\ke | \Sigma V_\bot] \\ [QW_\ke | QW_\bot \Lambda_\bot^{-1}] & \mathbf0
        \end{pmatrix} \\
        &= \begin{pmatrix}
            \mathbf0 & [W_\ke| W_\bot] \\ [\mathbf 0 | \Sigma^{-1} W_\bot \Lambda_\bot^2 \Lambda_\bot^{-1}] & \mathbf0
        \end{pmatrix} = \begin{pmatrix}
            \mathbf0 & [W_\ke| W_\bot] \\ [\mathbf 0 | V_\bot \Lambda_\bot ]& \mathbf0
        \end{pmatrix}\\
        &= \begin{pmatrix}
            [W_\ke| W_\bot \Lambda_\bot^{-1}] & \mathbf0\\ \mathbf0 &[V_\ke| V_\bot]
        \end{pmatrix} \cdot \begin{pmatrix}
            \mathbf 0 & \begin{pmatrix}
                I & \mathbf0 \\ \mathbf0 & \Lambda_\bot
            \end{pmatrix}\\
            \begin{pmatrix}
                \mathbf0 & \mathbf0 \\ \mathbf0 & \Lambda_\bot
            \end{pmatrix} & \mathbf 0
        \end{pmatrix} =: S \cdot \Xi
\end{align*}
In other words, $
    \begin{pmatrix}
            \mathbf0 & \Sigma\\ Q & \mathbf0
        \end{pmatrix} = S \cdot \Xi \cdot S^{-1}
$ where 
\begin{align*}    
 S &=  \begin{pmatrix}
            [W_\ke| W_\bot \Lambda_\bot^{-1}] & \mathbf0\\ \mathbf0 &[V_\ke| V_\bot]
        \end{pmatrix} = \begin{pmatrix}
            W \begin{pmatrix}
                I & \mathbf0 \\ \mathbf0 & \Lambda_\bot^{-1}
            \end{pmatrix} & \mathbf0\\ \mathbf0 & \Sigma^{-1}W
        \end{pmatrix}\\
    S^{-1} &= \begin{pmatrix}
        \begin{pmatrix}
                I & \mathbf0 \\ \mathbf0 & \Lambda_\bot
            \end{pmatrix} W^{-1} & \mathbf 0\\ \mathbf 0 & W^{-1}\Sigma
    \end{pmatrix}      ~  \text{ and } ~
    \Xi = \begin{pmatrix}
            \mathbf 0 & \begin{pmatrix}
                I & \mathbf0 \\ \mathbf0 & \Lambda_\bot
            \end{pmatrix}\\
            \begin{pmatrix}
                \mathbf0 & \mathbf0 \\ \mathbf0 & \Lambda_\bot
            \end{pmatrix} & \mathbf 0
        \end{pmatrix}
        \end{align*}

The diagonalisation derived above means that
\begin{equation}
   \exp\left(t\begin{pmatrix}
            0 & \Sigma\\ Q & 0
        \end{pmatrix} \right) = S \exp\left(t\Xi \right) S^{-1}
\end{equation}
Following the definition of the matrix exponential,
\begin{equation}
    \label{eq:matrix_exp_Xi}
    \exp\left(t\Xi \right) = \begin{pmatrix}
        \begin{matrix}
            \mathbf I & \mathbf 0 \\ \mathbf 0 & \cosh(t\Lambda)
        \end{matrix} &  \begin{matrix}
            t\mathbf I & \mathbf 0 \\ \mathbf 0 & \sinh(t\Lambda)
        \end{matrix} \\
        \begin{matrix}
            \mathbf 0 & \mathbf 0 \\ \mathbf 0 & \sinh(t\Lambda)
        \end{matrix} & \begin{matrix}
            \mathbf I & \mathbf 0 \\ \mathbf 0 & \cosh(t\Lambda)
        \end{matrix}
    \end{pmatrix}.
\end{equation}
This means that (using $\Lambda_\bot^{-1}\cosh(t\Lambda_\bot)\Lambda_\bot = \cosh(t\Lambda_\bot)$)
\begin{align*}
    \exp\left(t\begin{pmatrix}
            0 & \Sigma\\ Q & 0
        \end{pmatrix} \right)  &= \begin{pmatrix}
            W \begin{pmatrix}
                \mathbf I & \mathbf 0 \\ \mathbf 0&\cosh(t\Lambda_\bot)
            \end{pmatrix} W^{-1} & W \begin{pmatrix}
                t\mathbf I &\quad \mathbf 0 \\ \mathbf 0&\Lambda_\bot^{-1}\sinh(t\Lambda_\bot)
            \end{pmatrix} W^{-1}\Sigma  \\[1em]
            \Sigma^{-1}W \begin{pmatrix}
                \mathbf 0 & \mathbf 0 \\ \mathbf 0&\sinh(t\Lambda_\bot)\Lambda_\bot
            \end{pmatrix} W^{-1} &  \Sigma^{-1}W \begin{pmatrix}
                \mathbf I & \mathbf 0 \\ \mathbf 0&\cosh(t\Lambda_\bot)
            \end{pmatrix} W^{-1}\Sigma
        \end{pmatrix}
\end{align*}
All in all,
\begin{align*}
    \begin{pmatrix}
             D(t) \\ E(t)
        \end{pmatrix} &= \exp\left(t\begin{pmatrix}
            0 & \Sigma\\ Q & 0
        \end{pmatrix} \right)
        \begin{pmatrix}
             D(0) \\ E(0)
        \end{pmatrix} = S \exp\left(t\Xi \right) S^{-1}\begin{pmatrix}
             D(0) \\ C(0)^{-1}D(0)
        \end{pmatrix} \\
        &= \begin{pmatrix}
            W \begin{pmatrix}
            \mathbf I & \mathbf 0 \\ \mathbf 0 & \cosh(t\Lambda_\bot)
        \end{pmatrix} W^{-1}D(0) + W\begin{pmatrix}
            t\mathbf I & \mathbf 0 \\ \mathbf 0 & \Lambda_\bot^{-1}\sinh(t\Lambda_\bot)
        \end{pmatrix} W^{-1}\Sigma C(0)^{-1}D(0)\\[1em]
        \Sigma^{-1}W\begin{pmatrix}
            \mathbf 0 & \mathbf 0 \\ \mathbf 0 & \sinh(t\Lambda_\bot)\Lambda_\bot
        \end{pmatrix}W^{-1} D(0) + \Sigma^{-1} W \begin{pmatrix}
            \mathbf I & \mathbf 0 \\ \mathbf 0 & \cosh(t\Lambda_\bot)
        \end{pmatrix} W^{-1}\Sigma C(0)^{-1}D(0)
        \end{pmatrix},
\end{align*}
which proves \eqref{eq:explicit_t_general}. Two particularly interesting special cases: If $Q = 0$, then $k=n$, i.e. $W = W_\ke$ and
\begin{align*}
     \begin{pmatrix}
             D(t) \\ E(t)
        \end{pmatrix} &= \begin{pmatrix}
            W \mathbf I W^{-1} D(0) + t W \mathbf I W^{-1}\Sigma C(0)^{-1}D(0)\\ 0 + \Sigma^{-1}W \mathbf I W^{-1} \Sigma C(0)^{-1} D(0)
        \end{pmatrix}\\
        &= \begin{pmatrix}
            D(0) + t \Sigma C(0)^{-1}D(0)\\
            C(0)^{-1}D(0)
        \end{pmatrix}
\end{align*}
and $C(t) = D(t)E(t)^{-1} = C(0) + t\Sigma$, proving \eqref{eq:explicit_t_Q0}.

On the other hand, if $Q$ is positive definite rather than positive semidefinite, $k= 0$, $W = W_\bot$, and
\begin{align*}
      \begin{pmatrix}
             D(t) \\ E(t)
        \end{pmatrix} &= \begin{pmatrix}
            W \cosh({t\Lambda_\bot}) W^{-1}D(0) + W \Lambda_\bot^{-1}\sinh({t\Lambda_\bot})W^{-1} \Sigma C(0)^{-1} D(0)\\
            \Sigma^{-1}W\sinh({t\Lambda_\bot})\Lambda_\bot W^{-1} D(0) + \Sigma^{-1}W\cosh({t\Lambda_\bot})W^{-1} \Sigma C(0)^{-1} D(0)
        \end{pmatrix},
\end{align*}
which is \eqref{eq:explicit_t_Qnondeg}. The second characterisation \eqref{eq:explicit_t_Qnondeg2} is derived further down below, after we have computed the steady-state solution for $C$.

Ad 2.: The first statement follows immediately from \eqref{eq:explicit_t_Q0}.

We assume now that $Q$ is non-degenerate. Then the steady-state Riccati equation  $C_\infty Q C_\infty = \Sigma$ is solved by $C_\infty := W\Lambda^{-1}W^{-1}\Sigma = W\Lambda W^{-1}Q^{-1}$, where the second equality is a direct result of the defining equality $\Sigma Q W = W\Lambda^2$. Indeed, $C_\infty Q C_\infty = (W\Lambda W^{-1} Q^{-1})Q(W\Lambda^{-1} W^{-1}\Sigma) = \Sigma$. It remains to show that $C_\infty$ is a valid covariance matrix. Parts of the following proof are an adaptation of the arguments found in \cite{laub1979schur}, modified to the setting we are considering. See also \cite{shirilord2022closed} for a related discussion and a closed-form solution for the asymmetric Riccati equation.

$C_\infty$ is symmetric, which is what we are going to prove next. In fact, we define $K := \Lambda W^\top \Sigma^{-1} W$, such that $C_\infty = (W^{-1}\Sigma)^\top K (W^{-1}\Sigma)$. If we can show that $K = K^\top$, then symmetry of $C_\infty$ follows from this relation. In order to prove symmetry of $K$, we define $L := K - K^\top$ and consider
\begin{align*}
    \Lambda L + L\Lambda &= \Lambda^2 W^\top \Sigma^{-1} W - \Lambda W^\top \Sigma^{-1}W\Lambda  +\Lambda W^\top \Sigma^{-1}W\Lambda - W^\top \Sigma^{-1} W \Lambda^2 \\
    &= W^\top Q W - W^\top Q W = 0,
\end{align*}
where we used the spectral property $W\Lambda^2 = \Sigma Q W$. Since $\Lambda L + L\Lambda = 0$ and $\Lambda$ is non-degenerate, Lyapunov theory proves that $0 = L = K - K^\top$, i.e. indeed $C_\infty$ is symmetric. 

In order to prove positive definiteness of $C_\infty$, we first set
\begin{align*}
    \Omega(t) &= \begin{pmatrix}
        W\\ \Sigma^{-1}W\Lambda
    \end{pmatrix} e^{-t\Lambda},
\end{align*}
so that 
\begin{align*}
    \dot \Omega(t) &= \begin{pmatrix}
        W\Lambda\\ \Sigma^{-1}W\Lambda^2 
    \end{pmatrix}e^{-t\Lambda} = \begin{pmatrix}
        \Sigma \cdot (\Sigma^{-1} W \Lambda) \\ QW
    \end{pmatrix}e^{-t\Lambda}\\
    &= \begin{pmatrix}
        0 &\Sigma \\ Q & 0
    \end{pmatrix}e^{-t\Lambda}.
\end{align*}
Furthermore, setting $\omega(t) = e^{-t\Lambda} K e^{-t\Lambda} = e^{-t\Lambda} (\Lambda W^\top \Sigma^{-1} W) e^{-t\Lambda}$,
\begin{align*}
    \dot \omega(t) &=-e^{-t\Lambda} \left(\Lambda^2 W^\top \Sigma^{-1}W + \Lambda W^\top \Sigma^{-1} W \Lambda \right) e^{-t\Lambda}\\
    &=- e^{-t\Lambda} \begin{pmatrix}
        W \\ \Sigma^{-1} W\Lambda
    \end{pmatrix}^\top \cdot \begin{pmatrix}
        Q & 0 \\ 0 & \Sigma 
    \end{pmatrix}\cdot  \begin{pmatrix}
        W \\ \Sigma^{-1} W\Lambda
    \end{pmatrix}e^{-t\Lambda} =- \Omega(t)^\top \begin{pmatrix}
        Q & 0 \\ 0 & \Sigma 
    \end{pmatrix} \Omega(t),
\end{align*}
i.e., $-\dot\omega(t)$ is a symmetric and positive definite matrix for all $t$. This means that 
    \begin{align*}
        \omega(0) - \omega(t) &= -\int_0^t \dot \omega(s)\d s \geq 0,
    \end{align*}
    and, taking the limit $t\to\infty$, 
    \begin{equation}
        K = \omega(0) \geq \lim_{t\to\infty} \omega(t) = 0.
    \end{equation}
    This shows that $K$ is positive definite, so $C_\infty$ is, too.

    We next show that $C_\infty^{-1}$ is the geometric mean of $\Sigma^{-1}$ and $Q$. Indeed,
    \begin{align*}
        G(\Sigma^{-1},Q) &= \Sigma^{-1}(\Sigma Q)^{1/2} = \Sigma^{-1}(W\Lambda^2 W^{-1})^{1/2} = \Sigma^{-1} W\Lambda W^{-1}\\
        &=\Sigma^{-1}W\Lambda W^{-1}  = C_\infty^{-1}
    \end{align*}
    In the same way, $C_\infty$ is the geometric mean of $\Sigma$ and $Q^{-1}$

From $C_\infty = G(Q^{-1},\Sigma) = Q^{-1}(Q\Sigma)^{1/2}$ and the steady-state Riccati equation, we can see that $\Sigma C_\infty^{-1} = QC_\infty = (Q\Sigma)^{1/2} = W^{-\top}\Lambda W^\top$, where the last equality is clear from $W^{-\top} \Lambda^2 W^\top = Q\Sigma$. The other representation is proved similarly.

Now that we have characterised $C_\infty$, we are able to derive \eqref{eq:explicit_t_Qnondeg2}. In this case a different matrix decomposition is more helpful: 
\begin{align*}
    \begin{pmatrix}
        0 & \Sigma \\ Q & 0
    \end{pmatrix} \cdot \begin{pmatrix}
        C_\infty & -C_\infty \\ I & I
    \end{pmatrix} &= \begin{pmatrix}
        C_\infty & -C_\infty \\ I & I
    \end{pmatrix} \cdot \begin{pmatrix}
        QC_\infty & 0 \\ 0 & -QC_\infty
    \end{pmatrix}
\end{align*}
This means that 
\begin{align*}
    \exp\left( t\begin{pmatrix}
        0 & \Sigma \\ Q & 0
    \end{pmatrix} \right) &=  \underbrace{\begin{pmatrix}
        C_\infty & -C_\infty \\ I & I
    \end{pmatrix}}_{=: S} \cdot \begin{pmatrix}
        QC_\infty & 0 \\ 0 & -QC_\infty
    \end{pmatrix} \cdot \underbrace{\frac{1}{2} \begin{pmatrix}
        C_\infty^{-1} & I\\ -C_\infty^{-1} & I
    \end{pmatrix} }_{= S^{-1}}
\end{align*}
Next we think about suitable ways of setting $D(0)$ and $E(0)$. Since we only require $D(0)E(0)^{-1} = C(0),$ we have considerable degrees of freedom. In fact, the following choice, characterised via the matrix $S$ is especially suitable:
\begin{align*}
    \begin{pmatrix}
        D_0\\ E_0
    \end{pmatrix} &= S \begin{pmatrix}\tfrac{1}{2}(I + C_\infty^{-1}C_0) \\[1em] 
        \tfrac{1}{2}(I - C_\infty^{-1}C_0)
    \end{pmatrix} = \begin{pmatrix}
        C_0\\ I
    \end{pmatrix}.
\end{align*}
In this case,
\begin{align*}
    \begin{pmatrix}
        D(t) \\E(t)
    \end{pmatrix} &= \exp\left( t\begin{pmatrix}
        0 & \Sigma \\ Q & 0
    \end{pmatrix} \right) \cdot \begin{pmatrix}
        D_0 \\ E_0
    \end{pmatrix} = S \exp\begin{pmatrix}
        tQC_\infty & 0 \\ 0 & -tQC_\infty
    \end{pmatrix} S^{-1} \cdot \begin{pmatrix}
        D_0 \\ E_0
    \end{pmatrix}\\
    &= S \begin{pmatrix}
        \exp(tQC_\infty) & 0 \\ 0 & \exp(-tQC_\infty)
    \end{pmatrix}\cdot \begin{pmatrix}\tfrac{1}{2}(I + C_\infty^{-1}C_0) \\[1em] 
        \tfrac{1}{2}(I - C_\infty^{-1}C_0)
    \end{pmatrix}\\
    &= \begin{pmatrix}
        C_\infty & -C_\infty \\ I & I
    \end{pmatrix} \begin{pmatrix}
        \exp(tQC_\infty) \cdot \tfrac{1}{2}(I + C_\infty^{-1}C_0) \\
        \exp(-tQC_\infty) \cdot  \tfrac{1}{2}(I - C_\infty^{-1}C_0)
    \end{pmatrix}\\
    &= \begin{pmatrix}
        C_\infty \cdot \left[\exp(t(Q\Sigma)^{1/2}) \cdot \tfrac{1}{2}(I + C_\infty^{-1}C_0)-  \exp(-t(Q\Sigma)^{1/2}) \cdot \tfrac{1}{2}(I - C_\infty^{-1}C_0)\right] \\
        \exp(t(Q\Sigma)^{1/2}) \cdot \tfrac{1}{2}(I + C_\infty^{-1}C_0)+  \exp(-t(Q\Sigma)^{1/2}) \cdot  \tfrac{1}{2}(I - C_\infty^{-1}C_0)
    \end{pmatrix}
\end{align*}
We can see that the continuous-time Riccati equation indeed converges to the steady-state solution $C_\infty$ since 
\begin{align*}
    D(t) E(t)^{-1} &= C_\infty \exp(t(Q\Sigma)^{1/2}) \cdot \left[ \tfrac{1}{2}(I + C_\infty^{-1}C_0) -  \exp(-2t(Q\Sigma)^{1/2}) \cdot\tfrac{1}{2}(I - C_\infty^{-1}C_0)\right] \cdot\\
    &\quad \cdot \left[  \tfrac{1}{2}(I + C_\infty^{-1}C_0)  +  \exp(-2t(Q\Sigma)^{1/2}) \cdot\tfrac{1}{2}(I - C_\infty^{-1}C_0)\right]^{-1}\exp(-t(Q\Sigma)^{1/2}) \\
    &\xrightarrow{t\to \infty} C_\infty\qedhere
\end{align*}

\end{proof}

\end{appendix}

\bibliographystyle{alpha}
\bibliography{mybibfile.bib}

\end{document}